\newcommand{\R}{\mathbb{R}}
\newcommand{\Prob}{\mathbb{P}}
\renewcommand{\Pr}{\Prob}
\renewcommand{\bf}{\bfseries}
\newcommand{\E}{\mathbb{E}}
\newcommand{\Fc}{\mathcal{F}}
\newcommand{\N}{\mathbb{N}}
\newcommand{\eps}{\varepsilon}
\newtheorem{theorem}{Theorem}[section]
\newtheorem{prop}[theorem]{Proposition}
\newtheorem{lemma}[theorem]{Lemma}
\newtheorem{cor}[theorem]{Corollary}
\theoremstyle{definition}
\theoremstyle{remark}
\newtheorem{remark}[theorem]{Remark}
\newtheorem{definition}[theorem]{Definition}
\newcommand{\indic}[1]{\boldsymbol{1}_{\{\ensuremath{#1}\}}}
\newcommand{\ol}[1]{\overline{#1}}
\newcommand{\as}{a.s.}
\newcommand{\eg}{e.g.}
\newcounter{easyenum}
\newcounter{mediumenum}
\newcounter{hardenum}
\newcommand{\gp}{\gamma_{+}(\lambda)}
\newcommand{\gm}{\gamma_{-}(\lambda)}
\newcommand{\K}{\mathbb{K}} %
\newcommand{\pathsp}{\mathfrak{P}}
\newcommand{\Pc}{\mathcal{P}}
\newcommand{\Lin}{\mathrm{Lin}}
\newcommand{\stset}{\mathfrak{T}}
\newcommand{\ub}{\ensuremath{\overline{b}}} %
\newcommand{\lb}{\ensuremath{\underline{b}}} %
\newcommand{\uK}{\ensuremath{\overline{K}}} %
\newcommand{\lK}{\ensuremath{\underline{K}}} %
\newcommand{\dnt}{\mathbf{1}_{\overline{S}_T< \ub,\, \underline{S}_T>
 \lb}} %
\newcommand{\uh}{\overline{H}} %
\newcommand{\lh}{\underline{H}}%
\newcommand{\sS}{\overline{S}} %
\newcommand{\iS}{\underline{S}}%
\newcommand{\sB}{\overline{B}} %
\newcommand{\iB}{\underline{B}}%
\newcommand{\uB}{\sB} %
\newcommand{\lB}{\iB}%
\newcommand{\sM}{\overline{M}} %
\newcommand{\iM}{\underline{M}}%
\newcommand{\sD}{\overline{D}} %
\newcommand{\iD}{\underline{D}}%
\renewcommand{\gp}{\gamma^+_{\mu}}%
\renewcommand{\gm}{\gamma^{-}_{\mu}}%
\def\ovl{\overline}
\title{Robust pricing and hedging of double no-touch options}
\author{Alexander~M.~G.~Cox\thanks{e-mail:
      \texttt{A.M.G.Cox@bath.ac.uk};
      web: \texttt{www.maths.bath.ac.uk/$\sim$mapamgc/}}\\
      Dept.\ of Mathematical Sciences, 
      University of Bath\\
      Bath BA2 7AY, UK
      \and
Jan Ob\l\'oj\thanks{e-mail: \texttt{obloj@maths.ox.ac.uk}; web: \texttt{www.maths.ox.ac.uk/contacts/obloj}. 
Research supported by a Marie Curie Intra-European Fellowship at Imperial College London within the $6^{th}$ European Community Framework Programme.}\\
     Oxford-Man Institute of Quantitative Finance \emph{and}\\
     Mathematical Institute, University of Oxford\\
      24-29 St Giles, Oxford OX1 3LB, UK
} 
\begin{document}
\maketitle
\begin{abstract}
Double no-touch options, contracts which pay out a fixed amount provided an underlying asset remains within a given interval, are commonly traded, particularly in FX markets. In this work, we establish model-free bounds on the price of these options based on the prices of more liquidly traded options (call and digital call options). Key steps are the construction of super- and sub-hedging strategies to establish the bounds, and the use of Skorokhod embedding techniques to show the bounds are the best possible.

In addition to establishing rigorous bounds, we consider carefully what is meant by arbitrage in settings where there is no {\it a priori} known probability measure. We discuss two natural extensions of the notion of arbitrage, weak arbitrage and weak free lunch with vanishing risk, which are needed to establish equivalence between the lack of arbitrage and the existence of a market model.
\end{abstract}

\section{Introduction}
It is classical in the Mathematical Finance literature to begin by
assuming the existence of a filtered probability space $(\Omega, \Fc,
(\Fc_t)_{t \ge 0}, \Pr)$ on which an underlying price process is
defined. In this work we do not assume any given model. Instead we are
given the observed prices of vanilla options and our aim is to derive
information concerning the arbitrage-free price of an exotic option, while assuming
as little as possible about the underlying asset's behaviour.

More precisely, our starting point is the following question: suppose we
know the call prices on a fixed underlying at a given maturity date,
what can we deduce about the prices of a double no-touch option,
written on the same underlying and settled at the same maturity as the
call options? A double no-touch option is a contract which pays a fixed amount at maturity (which we will assume always
to be 1 unit), provided the asset remains (strictly) between two fixed
barriers. They appear most commonly in FX markets, and (in different
contexts to the one we consider) have been considered recently by
\eg{} \cite{CarrCrosby:08} and \cite{Mijatovic:08}.

The approach we take to the problem is based on the approach which was
initially established in \cite{Hobson:98b}, and later in different
settings in \cite{Brown:01b,CoxHobsonObloj:08,CoxObloj:08}. The
basic principle is to use constructions from the theory of Skorokhod
embeddings to identify extremal processes, which may then give
intuition to identify optimal super- and sub-hedges. A model based on
the extremal solution to the Skorokhod embedding allows one to deduce
that the price bounds implied by the hedges are tight. In the setting considered here, as we will show shortly, the
relevant constructions already exist in the Skorokhod embedding
literature (due to \cite{Perkins:86,Jacka:88,CoxHobson:06}). However the
hedging strategies have not been explicitly derived. One of the goals of this paper is to open up these results to the finance community.

A second aspect of our discussion concerns a careful consideration of
the technical framework in which our results are valid: we let the `market' determine a set of asset prices, and we assume that
these prices satisfy standard linearity assumptions. In particular, our
starting point is a linear operator on a set of functions from a path
space (our asset histories) to the real line (the payoff of the
option). Since there is no specified probability measure, a suitable
notion of arbitrage has to be introduced: the simplest arbitrage
concept here is that any non-negative payoff must be assigned a
non-negative price, which we call a `model-free arbitrage'. However,
as noted in \cite{DavisHobson:06}, this definition is insufficient to
exclude some undesirable cases. In \cite{DavisHobson:06}, this issue
was resolved by introducing the notion of `weak arbitrage,' and this
is a concept we also introduce, along with the notion of `weak free
lunch with vanishing risk.' Our main results are then along the following lines: if the stated prices satisfy the stronger
no-arbitrage condition, then there exists a market model, i.e.\ a probability space with a stock price process which is a martingale and such that the expectation agrees with the pricing operator. On the other hand, if we see prices which exhibit no model-free arbitrage, but which
admit a weaker arbitrage, then there is no market model, but we are restricted to a `boundary' between the prices for which there is a
martingale measure, and the prices for which there exists a model-free arbitrage. Interestingly, we only need one of the stronger types of
arbitrage depending on the call prices considered -- when we consider markets in which calls trade at all strikes, then the
weak free lunch with vanishing risk condition is needed. When we suppose only a finite number of strikes are traded, then the weak
arbitrage condition is required.

We note that there are a number of papers that have considered a
similar `operator' based approach, where certain prices are specified,
and an arbitrage concept introduced: \cite{BiaginiCont:07} suppose the
existence of a pricing operator satisfying a number of conditions,
which turn out to be sufficient to deduce the existence of a
probability measure (although their conditions mean that the pricing
operator has to be defined for all bounded payoffs); \cite{Cherny:07}
considers a similar setting, but with a stronger form of arbitrage,
from which a version of the Fundamental Theorem of Asset Pricing is
recovered. \cite{Cassese:08} also considers a similar setup, and is
able to connect notions of arbitrage to the existence of finitely
additive martingale measures, and under certain conditions, to the
existence of martingale measures.

The paper is organised as follows. We first carefully introduce our
setup and then in Section~\ref{sec:arb} we study when call prices (at
a finite or infinite number of strikes), possibly along with some
digital calls, are free of different types of arbitrage and when they
are compatible with a market model. Then in Section \ref{sec:hedges}
we construct sub- and super-hedges of a double no-touch option which
only use calls and puts, digital calls at the barriers and forward
transactions. We then combine these hedges with no arbitrage results
and in Section~\ref{sec:prices} we determine the range of
arbitrage-free (for different notions of arbitrage) prices of double
no-touch options given prices of calls and digital calls. Finally in
Section~\ref{sec:app} we discuss possible applications and present
some brief numerical simulations. Additional technical results about
weak arbitrage along with the proof of Theorem~\ref{thm:ubfinstrike}
are given in Appendix~\ref{ap:weakarb}.  Appendix~\ref{ap:mart}
contains some remarks about the joint law of the maximum and minimum
of a uniformly integrable martingale with a given terminal law, which
follow from the results in this paper.

\subsection{Market input and the modelling setup}
Our main assumptions concern the behaviour of the asset price
$(S_t)_{t \ge 0}$. We will assume that the asset has zero cost of
carry: this can come about in a number of ways --- $S_t$ might be a
forward price, the asset might be paying dividends continuously, at a
rate equal to the prevailing interest rate, or the underlying could be
the exchange rate between two economies with similar interest
rates. In addition, through most of the paper, we assume that the
paths of $S_t$ are continuous. In principle, these are really the only
assumptions that we need on the asset, combined with some assumptions
on the behaviour of the market (that the asset and certain derivative
products -- calls and forwards -- are traded without transaction costs
and at prices which are themselves free of arbitrage). Specifically,
we do not need to assume that the price process is a semi-martingale,
or that there is any probability space or measure given.  The
statement of the results will concern the existence of an arbitrage
for {\it any} path which satisfies the above conditions, or
alternatively the existence of an arbitrage free model which satisfies
the above conditions.

More formally, we let $(S_t:t\leq T)$ be an element of $\pathsp$ --
the space of possible paths of the stock price process. Let us assume
that $\pathsp=C([0,T]; S_0)$ is the space of \emph{continuous
non-negative} functions on $[0,T]$ with a fixed initial value
$S_0>0$, we shall discuss extensions to discontinuous setups later on.
We suppose there are number of traded assets, which we define as
real-valued functions on $\pathsp$, priced by the market. The simplest
transactions which we price are `constants,' where we assume that the
constant payoff $F$ can be purchased at initial cost $F$ (recall that
we are assuming zero cost of carry). Then we assume that calls with
strikes $K\in\K$, with payoffs $(S_T-K)^+$, are traded at respective
prices $C(K)$, where $\K$ is a set of strikes $\K\subset \R_+$.
Finally, we also assume that forward transactions have zero cost. A
forward transaction is one where at some time $\rho$ parties exchange
the current value of the stock price $S_\rho$ against the terminal
value $S_T$. More precisely, let $(\mathcal{F}^n_t)_{t\leq T}$ be the
natural filtration of the co-ordinate process on $C([0,T])$ and
consider a class $\stset$ of stopping times $\rho$ relative to
$(\mathcal{F}^n_t)_{t\leq T}$. Then the forward transaction has payoff
$(S_T-S_\rho)\mathbf{1}_{\rho\leq T}$. In particular, we will consider here hitting times of levels $H_b:\pathsp\to [0,T]\cup \{\infty\}$
defined by $H_b=\inf\{t\leq T: S_t=b\}$, $b\geq 0$. As the paths are continuous these are indeed stopping times and we have
$H_b=\inf\{t\leq T: S_t\geq b\}$ for $b\geq S_0$, with similar expression for $b\leq S_0$, and $S_{H_b}=b$ whenever $H_b\leq T$.  We
let
\begin{equation}
\label{eq:Xsetofassets}
\mathcal{X} = \left\{F, (S_T-K)^+, (S_T-S_{\rho})\mathbf{1}_{\rho\leq
  T}:F\in \R,\ K\in \K,\ \rho\in \stset\right\}.
\end{equation}
Note that here $\mathcal{X}$ is a just a set of real-valued functions
on $\pathsp$. We denote $\Lin(\mathcal{X})$ the set of finite linear
combinations of elements of $\mathcal{X}$. \\
We assume the prices of elements of $\mathcal{X}$ are known in the
market, as discussed above, and portfolios of assets in $\mathcal{X}$
are priced linearly. More precisely, we suppose there exists a
\emph{pricing operator} $\mathcal{P}$ defined on $\Lin(\mathcal{X})$
which is linear, and satisfies the following rules:
\begin{eqnarray}
\mathcal{P} 1 & = & 1; \label{eq:pricop1}\\
\mathcal{P} (S_T - K)^+ & = & C(K), \quad \forall K \in \mathbb{K}; \label{eq:pricop2}\\
\mathcal{P} (S_T - S_\rho) \indic{\rho \le T} & = & 0,\quad \forall \rho\in\stset. \label{eq:pricop3}
\end{eqnarray}
Later in the paper we will consider examples of $\stset$ but for now it is arbitrary and we only assume $0\in \stset$. It then follows that $S_T=(S_T-S_0)\mathbf{1}_{0\leq T}+S_0$ is an element of $\Lin(\mathcal{X})$ and $\Pc S_T=S_0$.\footnote{We note
that in some financial markets, in particular in the presence of
bubbles, it may be sensible to assume that $\Pc S_T=\Pc C(0)\neq
S_0$ even when the cost of carry is zero, see \cite{CoxHobson:06}.}
Note also that $S_T=(S_T)^+$ so that we can always assume $0\in\K$ and
we have $C(0)=S_0$ by linearity of $\Pc$.\footnote{Alternatively, we
could have assumed $C(0)=S_0$ and then deduce from no arbitrage that
$(S_t)$ has non-negative paths.} We deduce that European puts are
also in $\Lin(\mathcal{X})$, we write $P(K)$ for their prices, and
note that the put-call parity follows from linearity:
\[
\mathcal{P} (K-S_T)^+ = \mathcal{P} (K - S_T + (S_T-K)^+) = K - S_0
+C(K).
\]
In later applications, we will also at times wish to suppose that
$\mathcal{X}$ is a larger set and in particular that $\mathcal{P}$
also prices digital call options at certain strikes.

The above setup is rather general and we relate it now to more
`classical models'.
\begin{definition}\label{def:model}
A \emph{model} is a probability space $(\Omega, \mathcal{F}, \Pr)$
with filtration $(\mathcal{F}_t)_{t\leq T}$ satisfying the usual
hypothesis and an adapted stochastic process $(S_t)$ with paths in
$\pathsp$. \\
A model is called a $(\Pc,\mathcal{X})$-\emph{market model} if
$(S_t)$ is a $\Pr$-martingale and
\begin{equation}\label{eq:marketmodel}
\forall X\in \mathcal{X}\quad \E[X((S_t:t\leq T))]=\mathcal{P}X,
\end{equation}
where we implicitly assume that the LHS is well defined. 
\end{definition}
Note that for a $(\Pc,\mathcal{X})$-market model \eqref{eq:marketmodel} holds for all $X\in\Lin(\mathcal{X})$ by linearity of $\E$ and $\Pc$.
The notion of market model is relative to the market input, i.e.\ the set of assets $\mathcal{X}$ and their prices $\Pc(X)$, $X\in
\mathcal{X}$. However when $(\Pc,\mathcal{X})$ are clear from the
context we simply say that there exists a market model.

Our aim is to understand the possible extensions of $\mathcal{P}$ to
$\Lin(\mathcal{X} \cup \{Y\})$, where $Y$ is the payoff of an
additional asset, in particular of a barrier option. Specifically, we
are interested in whether there is a linear extension which preserves
the no-arbitrage property:
\begin{definition} \label{def:Pnoarb} We say a pricing operator
$\mathcal{P}$ admits {\it no model-free arbitrage} on $\mathcal{X}$
if
\begin{equation} \label{eq:Pnoarb}
 \forall X \in \Lin(\mathcal{X}):
 X \ge 0 \implies \mathcal{P}X \ge 0.
\end{equation}
\end{definition}
Naturally, whenever there exists a market model then we can extend
$\mathcal{P}$ using \eqref{eq:marketmodel} to all payoffs $X$ for
which $\E |X(S_t:t\leq T)|<\infty$.  In analogy with the Fundamental
Theorem of Asset Pricing, we would expect the following dichotomy:
either there is no extension which preserves the no-arbitrage
property, or else there is a market model and hence a natural
extension for $\mathcal{P}$.  To some extent, this is the behaviour we
will see, however model-free arbitrage is too weak to grant this
dichotomy. One of the features of this paper is the introduction of
\emph{weak free lunch with vanishing risk} criterion (cf.~Definition
\ref{def:wflvr}) which is then applied together with \emph{weak
arbitrage} of Davis and Hobson \cite{DavisHobson:06}.

{\bf Notation:}
The minimum and maximum of two numbers are denoted $a\land
b=\min\{a,b\}$ and $a\lor b=\max\{a,b\}$.  The running maximum and
minimum of the price process are denoted respectively
$\overline{S}_t=\sup_{u\leq t}S_u$ and $\underline{S}_t=\inf_{u\leq
t}S_u$. We are interested in derivatives with digital payoff
conditional on the price process staying in a given range. Such an
option is often called a \emph{double no-touch} option or a
\emph{range option} and has payoff $\dnt$.  It is often convenient to
express events involving the running maximum and minimum in terms of
the first hitting times $H_x=\inf\{t: S_t=x\}$, $x\geq 0$. As an
example, note that when the asset is assumed to be continuous, we have
$\dnt=\mathbf{1}_{H_{\ub}\wedge H_{\lb}> T}$.

\section{Arbitrage-free prices of call and digital options}\label{sec:arb}
Before we consider extensions of $\Pc$ beyond $\mathcal{X}$ we need to
understand the necessary and sufficient conditions on the market
prices which guarantee that $\Pc$ does not admit model-free arbitrage
on $\mathcal{X}$. This and related questions have been considered a
number of times in the literature, e.g.~Hobson \cite{Hobson:98b}, Carr
and Madan \cite{CarrMadan:05}, Davis and Hobson \cite{DavisHobson:06},
however never in the full generality of our setup, and there remained
some open issues which we resolve below.


It turns out the constraints on $C(\cdot)$ resulting from the
condition of no model-free arbitrage of Definition~\ref{def:Pnoarb}
are not sufficient to guarantee existence of a market model. We will
give examples of this below both when $\K=\R_+$ and when $\K$ is
finite (the latter coming from Davis and Hobson
\cite{DavisHobson:06}). This phenomena motivates stronger notions of
no-arbitrage.
\begin{definition}\label{def:wflvr}
We say that the pricing operator $\mathcal{P}$ admits a {\it weak
 free lunch with vanishing risk} (WFLVR) on $\mathcal{X}$ if there
exist $(X_n)_{n \in \N}, Z \in \Lin(\mathcal{X})$ such that $X_n \to
X$ pointwise on $\pathsp$, $X_n \ge Z$, $X\ge 0$ and $\lim_n
\mathcal{P} X_n <0$.
\end{definition}
Note that if $\Pc$ admits a model-free arbitrage on
$\mathcal{X}$ then it also admits a WFLVR on $\mathcal{X}$. No WFLVR
is a stronger condition as it also tells us about the behaviour of
$\Pc$ on (a certain) closure of $\mathcal{X}$. It is naturally a weak
analogue of the NFLVR condition of Delbaen and Schachermayer
\cite{DelbaenSchachermayer:04}.

This new notion proves to be sufficiently strong to guarantee
existence of a market model.
\begin{prop}\label{prop:calls_wflvr}
Assume $\K=\R_+$. Then $\Pc$ admits no WFLVR on $\mathcal{X}$ if and
only if there exists a $(\Pc,\mathcal{X})$-market model, which
happens if and only if
\begin{eqnarray}
 C(\cdot)\textrm{ is a non-negative, convex, decreasing function, }C(0)=S_0,\ C'_+(0)\geq -1,\label{eq:Cbasic}\\
 C(K)\to 0\textrm{ as }K\to\infty.\label{eq:Ccnv}
\end{eqnarray}
In comparison, $\Pc$ admits no model-free arbitrage on $\mathcal{X}$
if and only if \eqref{eq:Cbasic} holds. In consequence, when
\eqref{eq:Cbasic} holds but \eqref{eq:Ccnv} fails $\Pc$ admits no
model-free arbitrage but a market model does not exist.
\end{prop}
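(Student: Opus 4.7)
The plan is to establish two chains: (1) existence of a $(\Pc,\mathcal{X})$-market model $\iff$ no WFLVR $\iff$ \eqref{eq:Cbasic} together with \eqref{eq:Ccnv}; and (2) no model-free arbitrage $\iff$ \eqref{eq:Cbasic}. The final dichotomy then follows immediately. For the easy forward direction, if a market model exists then $C(K) = \E(S_T-K)^+$ for a non-negative UI $\Pr$-martingale $(S_t)$; standard arguments yield non-negativity, convexity, monotonicity, $C(0) = \E S_T = S_0$ and $C'_+(0)\geq -1$ (via $(S_T-K)^+ \geq S_T-K$), while dominated convergence on the dominator $S_T$ gives \eqref{eq:Ccnv}. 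To rule out WFLVR, every $Y \in \Lin(\mathcal{X})$ is $\Pr$-integrable (bounded by a constant plus a multiple of $S_T$) and $\Pc Y = \E Y$, so Fatou applied to $X_n - Z \geq 0$ yields $\E X \leq \lim \Pc X_n < 0$, contradicting $X \geq 0$.

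For the necessary-condition converses, I would derive \eqref{eq:Cbasic} from no model-free arbitrage by standard static hedges (butterfly for convexity, bull spread for monotonicity, comparison with $S_T - K$ for $C'_+(0) \geq -1$), using $S_T = S_0 + (S_T - S_0)\mathbf{1}_{0 \leq T} \in \Lin(\mathcal{X})$ to pin $C(0) = \Pc S_T = S_0$. To force \eqref{eq:Ccnv} under no WFLVR, I would assume $c := \lim_K C(K) > 0$ and exhibit the explicit WFLVR $X_n := -(S_T - n)^+$ with lower bound $Z := -S_T \in \Lin(\mathcal{X})$: path-continuity forces $X_n \to 0$ pointwise, $X_n \geq Z$ holds for all $n \geq 0$ and all paths, and $\Pc X_n = -C(n) \to -c < 0$.

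The central constructive step is that \eqref{eq:Cbasic} and \eqref{eq:Ccnv} together imply existence of a market model. Via Breeden-Litzenberger, the measure $\mu$ defined by $\mu([0,K]) = 1 + C'_+(K)$ is a probability measure on $[0,\infty)$ with mean $S_0$ and $\int(x-K)^+\,\mu(dx) = C(K)$. I would then realise $\mu$ as the terminal law of a continuous non-negative UI martingale $(S_t)_{t \in [0,T]}$ by Skorokhod-embedding $\mu$ (via a stopping time $\tau$) into a non-negative martingale base $(M_t)$, e.g.\ using an Az\'ema-Yor construction on Brownian motion absorbed at $0$, and applying a homeomorphism $\phi:[0,T) \to [0,\infty)$ via $S_t := M_{\phi(t) \wedge \tau}$, $S_T := M_\tau$. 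Optional sampling on the UI martingale $(S_t)$ then gives $\E(S_T - S_\rho)\mathbf{1}_{\rho \leq T} = 0$ for every $\rho \in \stset$, completing the market-model axioms \eqref{eq:pricop1}--\eqref{eq:pricop3}.

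Finally, for \eqref{eq:Cbasic} alone $\Rightarrow$ no model-free arbitrage, I would use a cut-off trick: any $X \in \Lin(\mathcal{X})$ involves only finitely many strikes, bounded by some $N$. I would extend $C|_{[0,N]}$ to $\tilde{C}$ satisfying both \eqref{eq:Cbasic} and \eqref{eq:Ccnv} (continue past $N$ along a tangent line until it reaches $0$ and remain at $0$, regularising near $N$ to preserve convexity), apply the previous step to obtain a market model for $\tilde{C}$ whose pricing operator agrees with $\Pc$ on $X$, and conclude $\Pc X = \tilde{\E} X \geq 0$. The main obstacle I anticipate is the Skorokhod embedding step: building a \emph{continuous non-negative} UI martingale with prescribed marginal on $[0,\infty)$ is delicate, since a raw Brownian base can go negative and naive absorption at $0$ destroys uniform integrability. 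One must use a stopping time tailored to keep $(M_{t \wedge \tau})$ both non-negative and UI, which is precisely what the Az\'ema-Yor, Perkins, Jacka and Cox-Hobson embeddings referenced elsewhere in the paper provide.
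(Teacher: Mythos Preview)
Your proposal follows essentially the same route as the paper: deduce \eqref{eq:Cbasic} from no model-free arbitrage by static hedges, force \eqref{eq:Ccnv} from no WFLVR via $X_n=-(S_T-n)^+$, build a market model from \eqref{eq:Cbasic}--\eqref{eq:Ccnv} by Breeden--Litzenberger plus a Skorokhod embedding, and close the loop. Two small points are worth flagging.

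First, your worry about non-negativity in the embedding step is unnecessary. If $\tau$ is any UI embedding of $\mu$ (supported in $[0,\infty)$, mean $S_0$) into Brownian motion started at $S_0$, then $(B_{t\wedge\tau})$ automatically stays in $[0,\infty)$: for $\eps>0$, on $\{H_{-\eps}<\tau\}$ optional stopping for the UI martingale gives $\E[B_\tau\mid\Fc_{H_{-\eps}}]=-\eps$, contradicting $B_\tau\ge 0$. This is why the paper simply writes ``let $(B_t)$ be a Brownian motion, $B_0=S_0$, and $\tau$ be a solution to the Skorokhod embedding problem for $\mu$'' without further comment.

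Second, your cut-off argument for ``\eqref{eq:Cbasic} alone $\Rightarrow$ no model-free arbitrage'' has a small gap. You propose to extend $C|_{[0,N]}$ exactly by continuing along the tangent at $N$ until it hits zero. This works only if $C'_-(N)<0$; if $C$ is already constant and equal to some $\alpha>0$ on an interval containing the largest strike of $X$, the tangent is horizontal and no convex decreasing extension to zero exists that agrees with $C$ at that strike. The paper handles precisely this case by \emph{approximation} rather than exact extension: it builds $C_\delta$ satisfying \eqref{eq:Cbasic}--\eqref{eq:Ccnv} with $C(K)\ge C_\delta(K)\ge C(K)-\delta$ on $[0,\ovl K]$, obtains $\Pc_\delta X\ge 0$ from the associated market model, and then lets $\delta\downarrow 0$ so that $|\Pc X-\Pc_\delta X|$ is small. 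Your idea is the same in spirit; you just need to allow a $\delta$-perturbation rather than an exact match in the flat-positive case.
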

\begin{proof}
That absence of a model-free arbitrage implies \eqref{eq:Cbasic} is
straightforward and classical. Note that since $C(\cdot)$ is convex
$C'_+(0)=C'(0+)$ is well defined. Let
$\alpha:=\lim_{K\to\infty}C(K)$ which is well defined by
\eqref{eq:Cbasic} with $\alpha\geq 0$. If $\alpha>0$ then
$X_n=-(S_T-n)^+$ is a WFLVR since $X_n\to 0$ pointwise as
$n\to\infty$ and $\Pc X_n=-C(n)\to-\alpha<0$. We conclude that no
WFLVR implies \eqref{eq:Cbasic}--\eqref{eq:Ccnv}. But then we may
define a measure $\mu$ on $\R_+$ via
\begin{equation} \label{eq:mudef}
 \mu([0,K]) = 1+C_+'(K),
\end{equation}
which is a probability measure with $\mu([K,\infty))=-C_-'(K)$ and 
\[
\int x \mu(dx) = \int \mu((x,\infty)) \, dx = - \int C_+'(x) \, dx =C(0) - C(\infty) = S_0.
\]
In fact, \eqref{eq:mudef} is the well known relation between the
risk neutral distribution of the stock price and the call prices due
to Breeden and Litzenberger \cite{BreedenLitzenberger:78}. Let
$(B_t)$ be a Brownian motion, $B_0=S_0$, relative to its natural
filtration on some probability space $(\Omega,\mathcal{F},\Pr)$ and
$\tau$ be a solution to the Skorokhod embedding problem for $\mu$,
i.e.\ $\tau$ is a stopping time such that $B_\tau$ has law $\mu$ and
$(B_{t\wedge \tau})$ is a uniformly integrable martingale. Then
$S_t:=B_{\frac{t}{T-t}\wedge\tau}$ is a continuous martingale with
$T$-distribution given by $\mu$. Hence it is a market model as $\E (S_T-K)^+=C(K)$ by \eqref{eq:mudef} and $\E (S_T-S_\rho)\mathbf{1}_{\rho<T}=0$ for all stopping times, in particular for $\rho\in\stset$, since $(S_t)$ is a martingale.
Finally, whenever a market model exists then clearly we have no WFLVR since $\Pc$ is the expectation. \\
It remains to argue that \eqref{eq:Cbasic} alone implies that $\Pc$
admits no model-free arbitrage. Suppose to the contrary that there
exists $X\in\Lin(\mathcal{X})$ such that $X \ge 0$ and
$\Pc(X)<\epsilon<0$. As $X$ is a finite linear combination of
elements of $\mathcal{X}$, we let $\ovl K$ be the largest among the
strikes of call options present in $X$. Then, for any $\delta>0$,
there exists a function $C_{\delta}$ satisfying
\eqref{eq:Cbasic}--\eqref{eq:Ccnv} and such that $C(K)\geq
C_\delta(K)\geq C(K)-\delta$, $K\leq \ovl K$. More precisely, if
$C_+'(\ovl K)>0$ then $C$ is strictly decreasing on $[0,\ovl K]$ and
we may in fact take $C_\delta=C$ on $[0,\ovl K]$. Otherwise $C$ is
constant on some interval $[K_0,\infty)$ but then either it is zero
or we can clearly construct $C_\delta$ which approximates it
arbitrarily closely on $[0,\ovl K]$. By the arguments above, the pricing operator
$\Pc_\delta$ corresponding to prices $C_\delta$ satisfies no WFLVR
and hence no model-free arbitrage, so $\Pc_\delta X \ge 0$. However, we can take $\delta$ small enough so that $|\Pc
X-\Pc_\delta X|<\epsilon/2$ which gives the desired contradiction.
\end{proof}
We now turn to the case where $\K$ is a finite set. The no WFLVR
condition does not appear to be helpful here, and we need to use a different
notion of no-arbitrage due to Davis and Hobson \cite{DavisHobson:06}.
\begin{definition}
We say that a pricing operator $\mathcal{P}$ admits a \emph{weak
 arbitrage} (WA) on $\mathcal{X}$ if, for any model $\Pr$, there exists an $X\in\Lin(\mathcal{X})$ such that
$\Pc X \le 0$ but $\Pr(X((S_t:t\leq T)) \ge 0)=1$ and
$\Pr(X((S_t:t\leq T))>0) >0$.
\end{definition}
Note that no WA implies no model-free
arbitrage. Indeed, let $X$ be a model-free arbitrage so that $X\geq 0$
and $\Pc(X)<\epsilon<0$. Then $Y=X+\epsilon/2$ is a WA since $Y>0$ on
all paths in $\pathsp$ and $\Pc(Y)<0$. In addition, the existence of a
market model clearly excludes weak arbitrage. Strictly speaking, our
definition of a weak arbitrage differs from \cite{DavisHobson:06},
since we include model-free arbitrages in the set of weak arbitrages.

\begin{prop}[Davis and Hobson
\cite{DavisHobson:06}] \label{prop:calls_WA}
Assume $\K\subset \R_+$ is a finite set. Then $\Pc$ admits no WA on
$\mathcal{X}$ if and only if there exists a
$(\Pc,\mathcal{X})$-market model, which happens if and only if
$C(K)$, $K\in \K$, may be extended to a function $C$ on $\R_+$
satisfying \eqref{eq:Cbasic}--\eqref{eq:Ccnv}. \\
Furthermore, $\Pc$ may admit no model-free arbitrage but admit a WA.
\end{prop}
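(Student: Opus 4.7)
The plan is to prove the three-way equivalence (a)~no WA $\Leftrightarrow$ (b)~a $(\Pc,\mathcal{X})$-market model exists $\Leftrightarrow$ (c)~$C$ extends to a function on $\R_+$ satisfying \eqref{eq:Cbasic}--\eqref{eq:Ccnv}, and then exhibit an explicit example for the final sentence. Adjoin $K_0:=0$ with $c_0:=S_0$ to the given strikes and write $c_i:=C(K_i)$, $0\le i\le n$. A direct piecewise-linear argument shows (c) is equivalent to: (i)~each $c_i\ge 0$; (ii)~the $c_i$ are non-increasing; (iii)~the consecutive slopes $m_i:=(c_i-c_{i-1})/(K_i-K_{i-1})$ are non-decreasing in $i$; (iv)~$m_1\ge -1$; and (v)~either $c_n=0$ or $m_n<0$ (otherwise no convex decreasing extension of the interpolation can decay to $0$).

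Three of the implications are straightforward. For (c)$\Rightarrow$(b), apply Proposition~\ref{prop:calls_wflvr} to the extended pricing operator $\tilde\Pc$ defined on calls at all strikes to obtain a market model for the extended data; this automatically satisfies \eqref{eq:marketmodel} on the smaller set $\mathcal{X}$. For (b)$\Rightarrow$(c), set $\tilde C(K):=\E(S_T-K)^+$: convexity, monotonicity and $\tilde C(0)=S_0$ are immediate, $\tilde C(K)\to 0$ follows from $S_T\in L^1$ and dominated convergence, and $\tilde C'_+(0)\ge -1$ from $\tilde C(K)\ge \E(S_T-K)=S_0-K$. For (b)$\Rightarrow$(a), in a market model $\Pc=\E$ on $\Lin(\mathcal{X})$, so any $X$ with $\Pr(X\ge 0)=1$ and $\Pr(X>0)>0$ has $\Pc X>0$, excluding a WA.

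The hard step is $\neg$(c)$\Rightarrow$$\neg$(a). When any of (i)--(iv) is \emph{strictly} violated, a standard portfolio delivers a model-free arbitrage (hence a WA): short the call at $K_i$ when $c_i<0$; a bull call-spread when monotonicity fails; a Breeden--Litzenberger butterfly when convexity fails; and a put (via put-call parity) when $c_1<S_0-K_1$. The essentially new case, demanding the WA concept, is (v) failing while (i)--(iv) hold, i.e., $c_{n-1}=c_n>0$. Here the WA must be built model-by-model. For any model $\Pr$ consider
\[
X_1:=(S_T-K_{n-1})^+-(S_T-K_n)^+,\qquad X_2:=c_{n-1}-(S_T-K_{n-1})^+,
\]
both with $\Pc X_i=0$. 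If $\Pr(S_T>K_{n-1})>0$ then $X_1\ge 0$ pathwise and $\Pr(X_1>0)>0$, so $X_1$ witnesses WA; otherwise $S_T\le K_{n-1}$ $\Pr$-a.s., whence $X_2=c_{n-1}>0$ $\Pr$-a.s., so $X_2$ witnesses WA. Iterated tail-flatness $c_j=\cdots=c_n>0$ reduces to this case, since convexity plus monotonicity force all intermediate slopes to equal $0$.

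For the final assertion, take $n=2$, $S_0<K_1<K_2$ and $c_1=c_2=a$ with $0<a<S_0$. The four basic inequalities (i)--(iv) hold (the consecutive slopes are $(a-S_0)/K_1\in(-1,0)$ and $0$), and a truncation argument as in Proposition~\ref{prop:calls_wflvr} shows $\Pc$ admits no model-free arbitrage. Yet the flat-tail failure of (v) places us exactly in the case above, and the portfolios $X_1,X_2$ exhibit a WA, establishing the strict separation between the two notions.
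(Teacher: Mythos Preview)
Your proof is correct and follows essentially the same approach as the paper's. Both arguments reduce the failure of extendability to the flat-tail case $C(K_1)=C(K_2)=\alpha>0$ (your condition~(v) failing) and then deploy the identical model-dependent weak arbitrage: the bull spread $(S_T-K_1)^+-(S_T-K_2)^+$ when $\Pr(S_T>K_1)>0$, and the portfolio $\alpha-(S_T-K_1)^+$ otherwise. Your presentation is simply more explicit, spelling out the discrete characterisation (i)--(v) and the easy implications separately, whereas the paper is terse (``a model-free arbitrage can be constructed easily'') and points back to Proposition~\ref{prop:calls_wflvr} for the no-model-free-arbitrage claim in the flat-tail example.
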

\begin{proof}
Clearly if $C(K)$, $K\in\K$ may be extended to a function $C$ on
$\R_+$ satisfying \eqref{eq:Cbasic}--\eqref{eq:Ccnv} then from
Proposition \ref{prop:calls_wflvr} there exists a market model,
$\Pc$ is the expectation and hence there is no WA. If $C(K)$ is not
convex, positive or decreasing on $\K$ then a model-free arbitrage
can be constructed easily. It remains to see what happens if
$C(K_1)=C(K_2)=\alpha>0$ with $K_1<K_2$.  This alone does not entail
a model-free arbitrage as observed in the proof of Proposition
\ref{prop:calls_wflvr}. However a WA can be constructed as follows:
\begin{equation*}
\left\{
 \begin{array}{rclcl}
   X & = & (S_T - K_1)^+ - (S_T - K_2)^+ & \quad &\ \mbox{ if }
   \Pr(S_T > K_1) >0,\textrm{ else}\\
   X & = & \alpha - (S_T - K_1)^+ & \quad &\ \mbox{ if }
   \Pr(S_T > K_1) =0.
 \end{array}
\right.
\end{equation*}
\end{proof}

We now enlarge the set of assets to include digital calls. More
precisely let $0<\lb<\ub$ and consider 
\begin{equation}\label{eq:XDdef}
\mathcal{X}_D=\mathcal{X}\cup\{\indic{S_T >\lb},\indic{S_T \geq\ub}\}
\end{equation}
setting $\Pc$ on the new
assets to be equal to their (given) market prices: $\Pc\indic{S_T>
\lb}=\iD(\lb)$ and $\Pc\indic{S_T \geq \ub}=\sD(\ub)$ and imposing
linearity on $\Lin(\mathcal{X}_D)$.
\begin{prop}\label{prop:wflvr}
Assume $\K=\R_+$. Then $\Pc$ admits no WFLVR on $\mathcal{X}_D$ if
and only if there exists a $(\Pc,\mathcal{X}_D)$-market model, which
happens if and only if \eqref{eq:Cbasic}--\eqref{eq:Ccnv} hold and
\begin{equation} \label{eq:D_basic}
 \iD(\lb) = -C_+'(\lb) \quad\textrm{and}\quad \sD(\ub)  = - C_-'(\ub).
\end{equation}
\end{prop}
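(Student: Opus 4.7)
The plan is to mirror the three-way equivalence in Proposition \ref{prop:calls_wflvr}, checking successively that \emph{market model $\Rightarrow$ no WFLVR $\Rightarrow$ conditions $\Rightarrow$ market model}. The direction \emph{market model $\Rightarrow$ no WFLVR on $\mathcal{X}_D$} is standard: if $X_n \to X$ pointwise with $X_n \ge Z \in \Lin(\mathcal{X}_D)$ and $X \ge 0$, then since $\mu$ has finite mean the element $Z$ is $\Pr$-integrable, so applying Fatou's lemma to $X_n - Z \ge 0$ yields $0 \le \E X \le \liminf \E X_n = \liminf \Pc X_n$, ruling out a negative limit.

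The heart of the argument is the implication \emph{no WFLVR $\Rightarrow$ \eqref{eq:D_basic}}. Since $\mathcal{X}\subset\mathcal{X}_D$, no WFLVR on $\mathcal{X}_D$ gives no WFLVR on $\mathcal{X}$, so Proposition \ref{prop:calls_wflvr} already yields \eqref{eq:Cbasic}--\eqref{eq:Ccnv}. To pin down $\iD(\lb)$, I would approximate $\indic{S_T>\lb}$ from below by the bull call spreads
\[
B_n := n\bigl[(S_T-\lb)^+ - (S_T - \lb - 1/n)^+\bigr] \in [0,1],
\]
which converge pointwise to $\indic{S_T>\lb}$ (trivially at $S_T=\lb$ and for $S_T>\lb$ once $n$ is large enough that $S_T>\lb+1/n$), and whose prices satisfy $\Pc B_n = n[C(\lb) - C(\lb+1/n)] \to -C_+'(\lb)$. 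If $\iD(\lb) > -C_+'(\lb)$ the sequence $X_n := B_n - \indic{S_T>\lb}$ is bounded below by $-1 \in \Lin(\mathcal{X}_D)$, converges pointwise to $0$, yet has $\lim \Pc X_n = -C_+'(\lb) - \iD(\lb) < 0$, which is a WFLVR. Reversing the roles with $X_n := \indic{S_T>\lb} - B_n$ handles the opposite strict inequality, forcing $\iD(\lb) = -C_+'(\lb)$. The identity $\sD(\ub) = -C_-'(\ub)$ is obtained by the symmetric argument using the bear spread $n[(S_T-\ub+1/n)^+ - (S_T-\ub)^+]$, which converges pointwise to $\indic{S_T\ge\ub}$ (note the closed inequality, which correctly matches the one appearing in $\sD$) and has price tending to $-C_-'(\ub)$.

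For \emph{conditions $\Rightarrow$ market model}, I would invoke the construction already carried out in the proof of Proposition \ref{prop:calls_wflvr}: \eqref{eq:Cbasic}--\eqref{eq:Ccnv} determine $\mu$ through \eqref{eq:mudef} and yield a continuous martingale $S_t = B_{t/(T-t)\wedge \tau}$ with $S_T \sim \mu$, hence a $(\Pc,\mathcal{X})$-market model. The only thing left to check is that this model also correctly prices the two digital payoffs, and this is immediate from \eqref{eq:mudef}: $\mu((\lb,\infty)) = -C_+'(\lb) = \iD(\lb)$ and $\mu([\ub,\infty)) = -C_-'(\ub) = \sD(\ub)$, where the second uses the convention in \eqref{eq:mudef} that $\mu(\{K\}) = C_+'(K) - C_-'(K)$. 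The main obstacle is making sure that the pointwise limits of the bull/bear spreads match the \emph{correct} one-sided indicator ($\indic{S_T>\lb}$ versus $\indic{S_T\ge\ub}$) so that the equalities in \eqref{eq:D_basic} really come out with $C_+'$ and $C_-'$ respectively; this is a matter of choosing the direction of the spread, as indicated above.
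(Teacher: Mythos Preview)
Your proposal is correct and follows essentially the same route as the paper: both use the call spreads $\eps^{-1}[(S_T-K)^+-(S_T-K-\eps)^+]\to\indic{S_T>K}$ and $\eps^{-1}[(S_T-K+\eps)^+-(S_T-K)^+]\to\indic{S_T\ge K}$ to force \eqref{eq:D_basic} from no WFLVR, and both invoke the $(\Pc,\mathcal{X})$-market model from Proposition~\ref{prop:calls_wflvr} together with \eqref{eq:mudef} to verify the digital prices. Your version is slightly more explicit in spelling out the Fatou argument for the easy direction and in checking both inequalities $\iD(\lb)\gtrless -C_+'(\lb)$ separately, where the paper simply says the convergences ``readily entail'' \eqref{eq:D_basic}.
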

\begin{proof}
Obviously existence of a market model implies no WFLVR. The
convergences (pointwise in $\pathsp$), as $\epsilon\to 0$,
\begin{eqnarray*}
\frac{1}{\eps} \left[ (S_T-(K-\eps))^+ - (S_T - K)^+\right] & \to &
\indic{S_T \ge K},\\
\frac{1}{\eps} \left[ (S_T-K)^+ - (S_T - (K+\eps))^+\right] & \to &
\indic{S_T > K},
\end{eqnarray*}
readily entail that no WFLVR implies \eqref{eq:D_basic} hold and from
Proposition \ref{prop:calls_wflvr} it also implies
\eqref{eq:Cbasic}--\eqref{eq:Ccnv}. Finally, if
\eqref{eq:Cbasic}--\eqref{eq:Ccnv} hold we can can consider a
$(\Pc,\mathcal{X})$-market model by Proposition
\ref{prop:calls_wflvr}. We see that $\E \indic{S_T \geq
\ub}=\mu([\ub,\infty))=-C_-'(\ub)$ and $\E \indic{S_T >
\lb}=\mu((\lb,\infty))=-C_+'(\lb)$ from \eqref{eq:mudef}, and hence
\eqref{eq:D_basic} guarantees that the model matches $\Pc$ on
$\mathcal{X}_D$, i.e.\ we have a market model.
\end{proof}
Finally, we have an analogous proposition in the case where finitely
many strikes are traded.
\begin{prop}\label{prop:wa}
Assume $\K\subset\R_+$ is a finite set and $\lb,\ub\in \K$. Then
$\Pc$ admits no WA on $\mathcal{X}_D$ if and only if there exists a ($\Pc,\mathcal{X}_D)$-market model, which happens if and only if
$C(K)$, $K\in \K$, may be extended to a function $C$ on $\R_+$
satisfying \eqref{eq:Cbasic}--\eqref{eq:Ccnv} and \eqref{eq:D_basic}.
\end{prop}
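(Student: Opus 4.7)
The plan is to piggy-back on Propositions \ref{prop:calls_WA} and \ref{prop:wflvr}. Two of the three implications are routine. If $C$ on $\R_+$ extends the market call prices and satisfies \eqref{eq:Cbasic}--\eqref{eq:Ccnv} and \eqref{eq:D_basic}, then applying Proposition \ref{prop:wflvr} to the all-strike operator that $C$ determines produces a continuous martingale market model matching every call and both digital prices; restricting to $\mathcal{X}_D$ gives a $(\Pc,\mathcal{X}_D)$-market model. The existence of a market model makes $\Pc$ an expectation and therefore excludes WA.

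For the remaining direction, assume $\Pc$ admits no WA on $\mathcal{X}_D$. Since $\Lin(\mathcal{X})\subset\Lin(\mathcal{X}_D)$, $\Pc$ admits no WA on $\mathcal{X}$, and Proposition \ref{prop:calls_WA} furnishes a convex, nonincreasing, nonnegative extension $C$ on $\R_+$ satisfying \eqref{eq:Cbasic}--\eqref{eq:Ccnv}. This extension is not unique, and the task is to show that it may be further chosen to satisfy the slope constraints \eqref{eq:D_basic}. Denote by $K_\pm^\lb$ and $K_\pm^\ub$ the strikes in $\K$ immediately below and above $\lb$ and $\ub$ respectively, with the obvious edge-case conventions ($K_-^\lb=0$ if $\lb=\min(\K\setminus\{0\})$, and the ``upper chord'' at $\ub$ taken as $0$ via \eqref{eq:Ccnv} and monotonicity if $\ub=\max\K$). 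Convexity of any admissible extension forces
$$C'_+(\lb)\in\mathcal{I}_\lb:=\Bigl[\tfrac{C(\lb)-C(K_-^\lb)}{\lb-K_-^\lb},\ \tfrac{C(K_+^\lb)-C(\lb)}{K_+^\lb-\lb}\Bigr],$$
and every value in $\mathcal{I}_\lb$ is realised by a suitable piecewise-linear convex extension; the analogous statement holds for $C'_-(\ub)\in\mathcal{I}_\ub$. Since $\lb<\ub$ these constraints concern non-overlapping inter-strike subintervals of $\R_+$ (the edge case in which $\lb,\ub$ are consecutive in $\K$ is checked directly: the joint realisability reduces to $C'_+(\lb)\le(C(\ub)-C(\lb))/(\ub-\lb)\le C'_-(\ub)$, which will follow from the slope inclusions below), and one then glues piecewise-linear pieces across the remaining gaps.

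The main obstacle is therefore to prove $-\iD(\lb)\in\mathcal{I}_\lb$ and $-\sD(\ub)\in\mathcal{I}_\ub$. These inclusions are in fact forced already by the weaker property of no model-free arbitrage. Strict violation of the upper endpoint, $-\iD(\lb)>(C(K_+^\lb)-C(\lb))/(K_+^\lb-\lb)$, would render
$$\indic{S_T>\lb}-(K_+^\lb-\lb)^{-1}\bigl[(S_T-\lb)^+-(S_T-K_+^\lb)^+\bigr]$$
a model-free arbitrage (nonnegative pointwise, strictly negative $\Pc$-price); the symmetric argument using a call spread over $[K_-^\lb,\lb]$ rules out strict violation of the lower endpoint, and the two analogous portfolios at $\ub$ (using $\indic{S_T\ge\ub}$ with call spreads over $[K_-^\ub,\ub]$ and $[\ub,K_+^\ub]$) dispatch the constraints for $\sD(\ub)$. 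Thus no MFA already enforces $-\iD(\lb)\in\mathcal{I}_\lb$ and $-\sD(\ub)\in\mathcal{I}_\ub$; the full strength of no WA is used only through Proposition \ref{prop:calls_WA} to obtain the call extension itself. Conceptually, this is the point I expect to be most delicate: checking that the digitals contribute only MFA-type constraints and introduce no new WA-but-not-MFA phenomena beyond the $C(K_1)=C(K_2)$ scenario already handled by Davis and Hobson. Once the slopes $-\iD(\lb)$, $-\sD(\ub)$ are realised and glued into the extension, Proposition \ref{prop:wflvr} finishes the argument.
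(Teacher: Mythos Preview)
There is a genuine gap. Your claim that ``every value in $\mathcal{I}_\lb$ is realised by a suitable piecewise-linear convex extension'' fails precisely when the traded call prices are collinear to the right of $\lb$. Suppose $C(\lb),\,C(K_+^{\lb}),\,C(K_{++}^{\lb})$ lie on a line. Then \emph{any} convex function passing through these three values is forced to be linear on $[\lb,K_{++}^{\lb}]$, so
\[
C'_+(\lb)=\frac{C(K_+^{\lb})-C(\lb)}{K_+^{\lb}-\lb}
\]
is the only attainable right-derivative at $\lb$ --- the upper endpoint of your interval $\mathcal{I}_\lb$. If $\iD(\lb)$ is strictly larger than $(C(\lb)-C(K_+^{\lb}))/(K_+^{\lb}-\lb)$ but still inside your MFA-admissible range, no extension satisfying \eqref{eq:D_basic} exists, and hence no market model. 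Your portfolios detect only strict violation of the endpoint inequalities, so no model-free arbitrage arises; yet this is a genuine weak arbitrage: given a model $\Pr$, if $\Pr(S_T\in(\lb,K_{++}^{\lb}))>0$ the zero-cost butterfly at $\lb,K_+^{\lb},K_{++}^{\lb}$ works, while if $\Pr(S_T\in(\lb,K_{++}^{\lb}))=0$ the call spread $(K_+^{\lb}-\lb)^{-1}[(S_T-\lb)^+-(S_T-K_+^{\lb})^+]$ minus the digital $\indic{S_T>\lb}$ (plus a small cash cushion) has strictly negative price and is $\Pr$-a.s.\ nonnegative. A mirror-image obstruction occurs at $\ub$ when $C(K_{--}^{\ub}),C(K_-^{\ub}),C(\ub)$ are collinear and $\sD(\ub)$ is strictly below $(C(K_-^{\ub})-C(\ub))/(\ub-K_-^{\ub})$.

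So your conceptual expectation --- ``the digitals contribute only MFA-type constraints'' --- is exactly where the argument breaks; the digitals do introduce new WA-but-not-MFA boundary phenomena, entirely analogous to conditions \eqref{eq:wa1}--\eqref{eq:wa2} of Lemma~\ref{lem:wkarb}. The paper's proof of Proposition~\ref{prop:wa} in fact simply invokes that lemma, whose proof isolates precisely these collinearity cases, exhibits the corresponding weak arbitrages, and verifies (via the $\indic{S_T\ge\lb}$ vs.\ $\indic{S_T>\lb}$ continuity trick) that they are not model-free arbitrages. To repair your argument you would need to identify and rule out these collinear boundary cases using the full no-WA hypothesis, not merely no-MFA.
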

The Proposition follows from Lemma~\ref{lem:wkarb} given in the
appendix, which also details the case $\lb,\ub\notin\K$.
\section{Robust hedging strategies}\label{sec:hedges}
We turn now to robust hedging of double no-touch options. We fix
$\lb<S_0<\ub$ and consider the derivative paying $\dnt$. Our aim is to
devise simple super- and sub- hedging strategies (inequalities) using
the assets in $\mathcal{X}_D$. If successful, such inequalities will
instantly yield bounds on $\Pc (\dnt)$ under the assumption of no
model-free arbitrage.
\subsection{Superhedges}
We devise now simple a.s.\ inequalities of the form
\begin{equation}\label{eq:super_gen}
\dnt \leq N_T + g(S_T),
\end{equation}
where $(N_t:t\leq T)$, when considered in a market model, is a
martingale. That is, we want $(N_t)$ to have a simple interpretation
in terms of a trading strategy and further $(N_t)$ should ideally only
involve assets from $\mathcal{X}_D$. A natural candidate for $(N_t)$
is a sum of terms of the type $\beta (S_t-z)\mathbf{1}_{\sS_t\geq z}$,
which is a purchase of $\beta$ forwards when the stock price reaches
the level $z$. We note also that in a market model, it is a simple
example of an Az\'ema-Yor martingale, that is of a martingale which is
of the type $H(S_t,\sS_t)$ for some function $H$ (see Ob\l\'oj
\cite{Obloj:06}).

We give three instances of \eqref{eq:super_gen}. They correspond in
fact to the three types of behaviour of the `extremal market model'
which maximises the price of the double no-touch option, which we will
see below in the proof of Lemma \ref{lem:perkins}.
\begin{enumerate}
\item We take $N_t\equiv 0$ and $g(S_T)=\mathbf{1}_{S_T\in
 (\lb,\ub)}=:\uh^I$. The superhedge is static and consists simply
of buying a digital options paying $1$ when $\lb<S_T<\ub$.

\psfrag{lb}{$\lb$}
\psfrag{K}{$K$}
\psfrag{Initial}{\small Portfolio for $t < H_{\lb}$}
\psfrag{Final}{\small Portfolio for $t \ge H_{\lb}$}
\begin{figure}[t]
 \includegraphics[width=\textwidth]{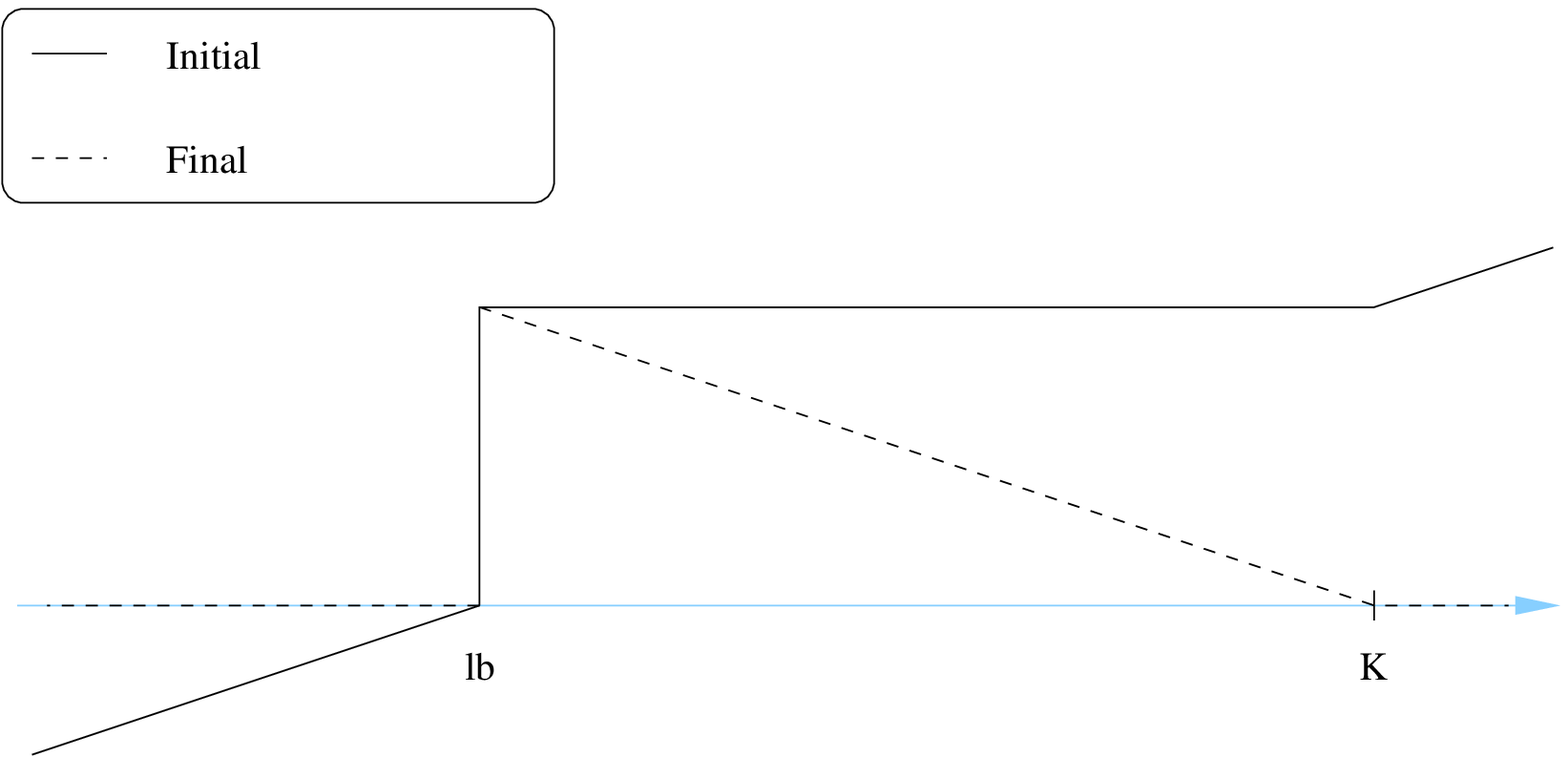}
 \caption{\label{fig:uH2} The value of the portfolio $\uh^{II}$ as a
   function of the asset price.}
\end{figure}

\item In this case we superhedge the double no-touch option as if it was simply
a barrier option paying $\mathbf{1}_{\iS_T>\lb}$ and we adapt the
superhedge from Brown, Hobson and Rogers \cite{Brown:01b}. More
precisely we have
\begin{equation}\label{eq:uH2}
 \dnt\leq \mathbf{1}_{S_T>\lb}- \frac{(\lb-S_T)^+}{K-\lb} +
 \frac{(S_T-K)^+}{K-\lb} - \frac{S_T-\lb}{K-\lb}
 \mathbf{1}_{\iS_T\leq \lb} =: \uh^{II}(K),
\end{equation}
where $K>\lb$ is an arbitrary strike. The portfolio $\uh^{II}(K)$ is
a combination of initially buying a digital option paying $1$ if
$S_T>\lb$, buying $\alpha=1/(K-\lb)$ calls with strike $K$ and
selling $\alpha$ puts with strike $\lb$. Upon reaching $\lb$ we then
sell $\alpha$ forward contracts. Note that $\alpha$ is chosen so
that our portfolio is worth zero everywhere except for $S_T\in
(\lb,K)$ after selling the forwards. This is represented graphically
in Figure~\ref{fig:uH2}.

\item We mirror the last case but now we superhedge a barrier option
paying $\mathbf{1}_{\sS_T<\ub}$.  We have
\begin{equation}\label{eq:uH3}
 \dnt\leq \mathbf{1}_{S_T<\ub}+\frac{(K-S_T)^+}{\ub-K} -
 \frac{(S_T-\ub)^+}{\ub-K} + \frac{S_T-\ub}{\ub-K}
 \mathbf{1}_{\sS_T\geq \ub} =:\uh^{III}(K),
\end{equation}
where $K<\ub$ is an arbitrary strike. In this case the portfolio
$\uh^{III}(K)$ consists of buying a digital option paying $1$ on
$\{S_T<\ub\}$ and $\alpha=1/(\ub-K)$ puts with strike $K$, selling
$\alpha$ calls with strike $\ub$ and buying $\alpha$ forwards when
the stock price reaches $\ub$. Similarly to \eqref{eq:uH2}, $\alpha$
is chosen so that $\uh^{III}(K)=0$ for $S_T\not\in (K,\ub)$ when
$\sS_T\geq \ub$, i.e.~when we have carried out the forward transaction.
\end{enumerate}

\subsection{Subhedges}

\psfrag{lb}{$\lb$}
\psfrag{K1}{$K_1$}
\psfrag{ub}{$\ub$}
\psfrag{K2}{$K_2$}
\psfrag{Initial}{\small Portfolio for $t < H_{\lb}\wedge H_{\ub}$}
\psfrag{Final1}{\small Portfolio for $H_{\lb} < t \wedge H_{\ub}$}
\psfrag{Final2}{\small Portfolio for $H_{\ub} < t \wedge H_{\lb}$}
\begin{figure}[t]
\includegraphics[width=\textwidth]{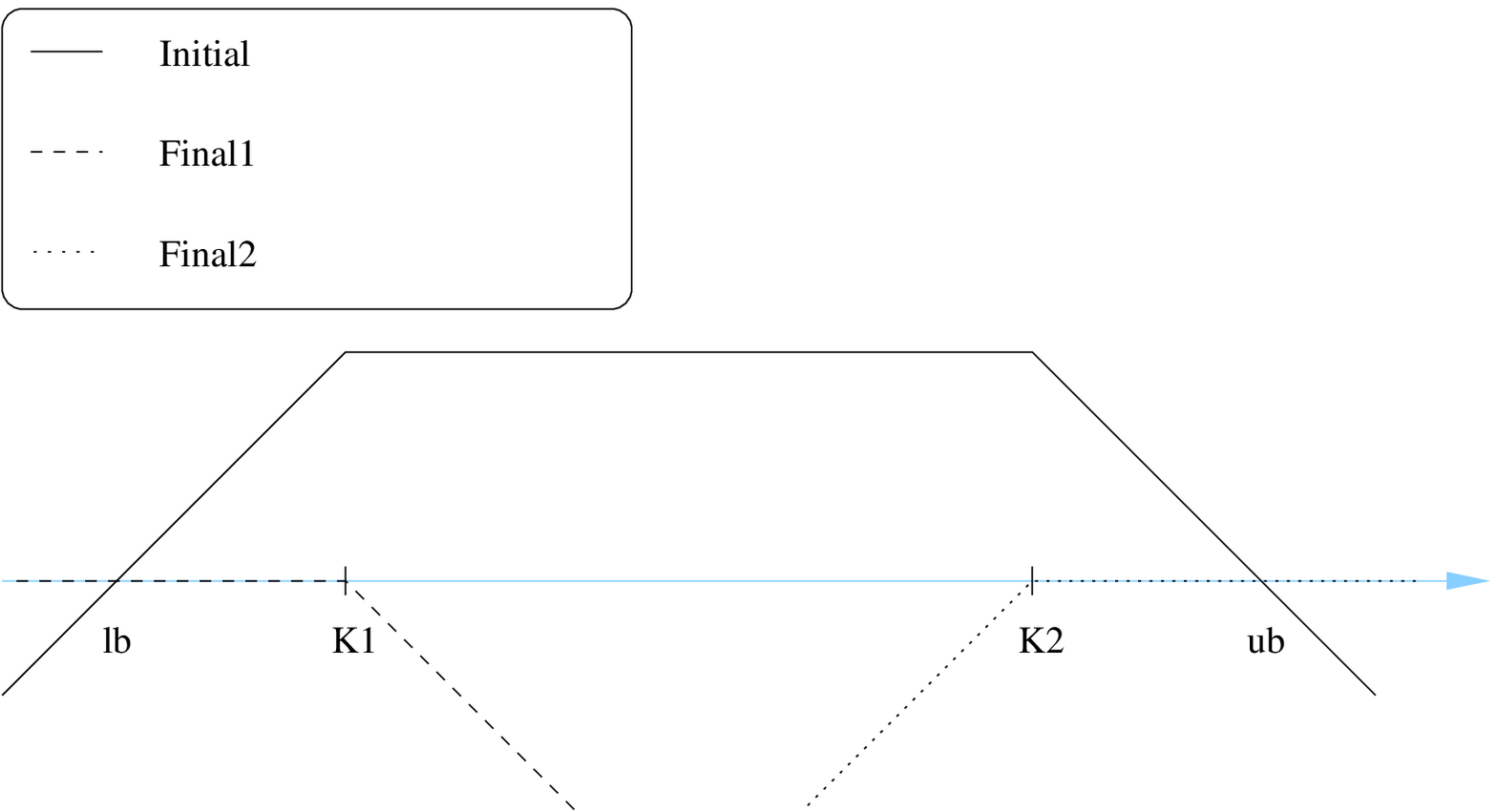}
\caption{\label{fig:lH2} The value of the portfolio $\lh^{II}$ as a
 function of the asset price.}
\end{figure}

We now consider the sub-hedging strategies, i.e.~we look for $(N_t)$
and $g$ which would satisfy
\begin{equation*}
\dnt \leq N_T + g(S_T).
\end{equation*}
We design two such strategies. The first one is trivial as
it consists in doing nothing: we let $\lh^I\equiv 0$ and obviously
$\lh^I\leq \dnt$.  The second strategy involves holding cash, selling
a call and a put and entering a forward transaction upon the stock
price reaching a given level. More precisely, we have
\begin{equation}\label{eq:lH2}
\begin{split}
 \dnt \geq\ & 1 -\frac{(S_T-K_2)^+}{\ub-K_2}-\frac{(K_1-S_T)^+}{K_1-\lb}\\
 & + \frac{S_T-\ub}{\ub-K_2}\mathbf{1}_{H_{\ub}<H_{\lb}\wedge
   T} - \frac{S_T-\lb}{K_1-\lb}\mathbf{1}_{H_{\lb}<H_{\ub}\wedge
   T}=:\lh^{II}(K_1,K_2),
\end{split}
\end{equation}
where $\lb<K_1<K_2<\ub$ are arbitrary strikes. A graphical
representation of the inequality is given in Figure~\ref{fig:lH2}. The
inequality follows from the choice of the coefficients which are such
that
\begin{itemize}
\item on $\{T<H_{\lb}\wedge H_{\ub}\}$, i.e.\ on $\{\dnt=1\}$,
$\lh^{II}(K_1,K_2)\leq 1$ and $\lh^{II}(K_1,K_2)$ is equal to one on
$S_T\in [K_1,K_2]$, and is equal to zero for $S_T=\lb$ or $S_T=\ub$,
\item on $\{H_{\ub}<H_{\lb}\wedge T\}$, $\lh^{II}$ is non-positive and
is equal to zero on $\{S_T\geq K_2\}$ and
\item on $\{H_{\lb}<H_{\ub}\wedge T\}$, $\lh^{II}$ is non-positive and
is equal to zero on $\{S_T\leq K_1\}$.
\end{itemize}

\subsection{Model-free bounds on double no-touch options}
We have exhibited above several super- and sub- hedging strategies of the double no-touch option. They involved four stopping times and from now on \emph{we always assume} that they are included in $\stset$: 
\begin{equation}\label{eq:stset_assume}
\left\{0,H_{\lb},H_{\ub},\inf\{t<T\land H_{\ub}: S_t=\lb\}, \inf\{t<T\land H_{\lb}:S_t=\ub\}\right\}\subset\stset.
\end{equation}
It follows from the linearity of $\Pc$ that these induce bounds on the prices $\Pc(\dnt)$ admissible under no model-free arbitrage. More precisely, we have the following:
\begin{lemma}\label{lem:mf_bounds}
Let $\lb<S_0<\ub$ and suppose $\Pc$ admits no model-free arbitrage on $\mathcal{X}_D\cup\{\dnt\}$. Then 
\begin{equation}\label{eq:mf_bounds}
\begin{split}
\Pc(\dnt)&\leq \inf_{K_2,K_3\in \K,\ K_2> \lb, K_3<\ub}\left\{\Pc(\uh^{I}),\Pc(\uh^{II}(K_2)),\Pc(\uh^{III}(K_3))\right\},\\
\Pc(\dnt)&\geq \sup_{K_1,K_2\in\K,\ \lb<K_1<K_2<\ub}\left\{\Pc(\lh^{I}),\Pc(\lh^{II}(K_1,K_2))\right\}.
\end{split}
\end{equation}

\end{lemma}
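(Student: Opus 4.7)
The plan is a direct translation of the pathwise super- and sub-hedging inequalities from Sections~3.1 and~3.2 into price inequalities, using only linearity of $\Pc$ and the hypothesis of no model-free arbitrage on $\mathcal{X}_D\cup\{\dnt\}$. The essential preliminary is to verify that each of the hedging portfolios $\uh^{I}$, $\uh^{II}(K_2)$, $\uh^{III}(K_3)$, $\lh^{I}$, $\lh^{II}(K_1,K_2)$ actually belongs to $\Lin(\mathcal{X}_D)$, so that $\Pc$ is defined on them and the inequality $\uh-\dnt\geq 0$ (resp.\ $\dnt-\lh\geq 0$) is an element of $\Lin(\mathcal{X}_D\cup\{\dnt\})$.

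For $\uh^{I}=\indic{S_T>\lb}-\indic{S_T\geq \ub}$ membership is immediate from the definition of $\mathcal{X}_D$. For $\uh^{II}(K_2)$ the calls and puts lie in $\Lin(\mathcal{X})$ (puts by put-call parity), $\indic{S_T>\lb}$ lies in $\mathcal{X}_D$, and the forward-type term rewrites, using path continuity and $S_{H_\lb}=\lb$ on $\{H_\lb\leq T\}$, as $(S_T-S_{H_\lb})\mathbf{1}_{H_\lb\leq T}$, which is of the required form with $\rho=H_\lb\in\stset$ by \eqref{eq:stset_assume}. The argument for $\uh^{III}(K_3)$ is symmetric with $\rho=H_\ub\in\stset$. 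For $\lh^{II}(K_1,K_2)$, set $\rho_+:=\inf\{t<T\wedge H_{\lb}:S_t=\ub\}$ and $\rho_-:=\inf\{t<T\wedge H_{\ub}:S_t=\lb\}$, both in $\stset$; the two indicator-weighted terms in \eqref{eq:lH2} then rewrite pathwise as $(S_T-S_{\rho_+})\mathbf{1}_{\rho_+\leq T}$ and $-(S_T-S_{\rho_-})\mathbf{1}_{\rho_-\leq T}$ respectively, up to the fixed normalising constants. Hence every hedge lies in $\Lin(\mathcal{X}_D)$.

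Once this is established the conclusion is immediate. For any $K_2\in\K$ with $K_2>\lb$, the pathwise inequality $\uh^{II}(K_2)-\dnt\geq 0$ on $\pathsp$, combined with no model-free arbitrage of $\Pc$ on $\mathcal{X}_D\cup\{\dnt\}$, gives $\Pc(\uh^{II}(K_2))-\Pc(\dnt)\geq 0$ by linearity, i.e.\ $\Pc(\dnt)\leq \Pc(\uh^{II}(K_2))$. The identical argument applied to $\uh^{I}$, $\uh^{III}(K_3)$, $\lh^{I}\equiv 0$, and $\lh^{II}(K_1,K_2)$ yields the remaining five inequalities. Taking the infimum over $K_2,K_3\in\K$ in the allowed ranges (respectively the supremum over $\lb<K_1<K_2<\ub$ in $\K$) delivers \eqref{eq:mf_bounds}.

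There is no genuine obstacle: the proof is essentially bookkeeping. The one point requiring care is the identification, for each hedge, of the forward-type terms with a single element $(S_T-S_\rho)\mathbf{1}_{\rho\leq T}$ for a stopping time $\rho\in\stset$, so that the pricing operator is well defined on the hedge. This is precisely why the four stopping times listed in \eqref{eq:stset_assume} were imposed in $\stset$ from the outset.
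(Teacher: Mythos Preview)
Your proposal is correct and follows exactly the approach the paper intends: the paper does not give a separate proof of this lemma but simply states it as an immediate consequence of the pathwise super- and sub-hedging inequalities together with linearity of $\Pc$ and the definition of no model-free arbitrage. Your write-up is in fact more careful than the paper, since you explicitly verify that each hedge lies in $\Lin(\mathcal{X}_D)$ by identifying the forward-type terms with the stopping times in \eqref{eq:stset_assume}; this is precisely the bookkeeping the paper leaves implicit.
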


\section{Model-free pricing of double no-touch options}\label{sec:prices}
In the previous section we exhibited a necessary condition
\eqref{eq:mf_bounds} for no model-free arbitrage and hence for
existence of a market model. Our aim in this section is to derive
sufficient conditions. In fact we will show that \eqref{eq:mf_bounds},
together with appropriate restrictions on call prices from
Propositions \ref{prop:wflvr} and \ref{prop:wa}, is essentially
equivalent to no WFLVR, or no WA when $\K$ is a finite set, and we can
then build a market model. Furthermore, we will compute explicitly the
supremum and infimum in \eqref{eq:mf_bounds}. To do this we have to
understand market models which are likely to achieve the bounds in
\eqref{eq:mf_bounds}. This is done using the technique of Skorokhod embeddings which we now discuss.

\subsection{The Skorokhod embedding problem}
Let $(B_t)$ be a standard real-valued Brownian motion with an
arbitrary starting point $B_0$. Let $\mu$ be a probability measure on
$\R$ with $\int_{\R}|x| \mu(dx)<\infty$ and $\int_{\R}x\mu(dx)=B_0$.

The Skorokhod embedding problem, $(SEP)$, is the following: given $(B_t), \mu$, find a stopping time $\tau$ such that the
stopped process $B_\tau$ has the distribution $\mu$, or simply:
$B_\tau \sim \mu$, and such that the process $(B_{t\wedge \tau})$ is
uniformly integrable (UI)\footnote{In some parts of the literature, the latter assumption is not included in the definition of the
problem, or an alternative property is used (see \cite{CoxHobson:06}). For the purposes of this article, we will
assume that all solutions have this UI property.}. We will often
refer to stopping times which satisfy this last condition as `UI
stopping times'. The existence of a solution was established by
Skorokhod \cite{Skorokhod:65}, and since then a number of further
solutions have been established, we refer the reader to \cite{Obloj:04b} for details.

Of particular interest here are the solutions by Perkins
\cite{Perkins:86} and the `tilted-Jacka' construction
(\cite{Jacka:88,Cox:04,Cox:05}). The Perkins embedding is defined in
terms of the functions
\begin{align}
\gp(x) & = \sup\left\{y < B_0 : \int_{(0,y)\cup (x,\infty)}
 (w-x) \, \mu(dw) \ge 0\right\}
\quad && x > B_0   \label{eq:gammaplus}\\
\gm(y) & = \inf \left\{x > B_0 : \int_{(0,y)\cup (x,\infty)} (w-y)
 \, \mu(dw) \le 0\right\}
&& y < B_0. \label{eq:gammaminus}
\end{align}
The Jacka embedding is defined in terms of the functions
$\Psi_\mu(x)$,$\Theta_\mu(x)$, where:
\begin{equation}
\label{eq:barycentres}
\Psi_\mu(K)=\frac{1}{\mu\big([K,\infty)\big)}\int_{[K,\infty)}x\mu(dx),\quad
\Theta_\mu(K)=\frac{1}{\mu\big((-\infty,K]\big)}\int_{(-\infty,K]}x\mu(dx),
\end{equation}
when (respectively) $\mu([K,\infty))$ and $\mu((-\infty,K])$ are
strictly positive, and $\infty$ and $-\infty$ respectively when these
sets have zero measure. The Perkins embedding is the stopping time
\begin{equation}\label{eq:perkins}
\tau_P:=\inf\big\{t: B_t\notin \big(\gp(\overline{B}_t),
\gm(\underline{B}_t) \big) \big\}.
\end{equation}
On the other hand, we define the `tilted-Jacka' stopping time as
follows. The `tilt' is to choose\footnote{In \cite{Jacka:88}, where
there is no `tilt', $K$ is chosen such that $(B_0 - \Theta_\mu(K)) =
(\Psi_\mu(K) - B_0)$.} $K \in (0,\infty)$, and set
\begin{eqnarray}
\tau_1 & := & \inf \left\{t \ge 0 : B_t \not\in( \Theta_\mu(K),
 \Psi_\mu(K))\right\}\label{eq:JackaTau1}\\
\tau_\Psi & = & \inf\left\{t \ge \tau_1 : \Psi_\mu(B_t) \le
 \sB_t\right\} \nonumber\\
\tau_\Theta & = & \inf\left\{t \ge \tau_1 : \Theta_\mu(B_t) \ge
 \iB_t \right\}. \nonumber
\end{eqnarray}
Then the `tilted-Jacka' stopping time is defined by:
\begin{equation}\label{eq:jacka}
\tau_J(K):= \tau_\Psi \indic{\tau_1 = \Psi_\mu(K)} +
\tau_\Theta\indic{\tau_1 = \Theta_\mu(K)}
\end{equation}
These embeddings are of particular interest due to their optimality
properties. Specifically, given any other stopping time $\tau$ which
is a solution to $(SEP)$, then we have the inequalities:
\begin{equation}\label{eq:perkins_opt}
\Pr\big(\underline{B}_\tau > \lb\big)\leq
\Pr\big(\underline{B}_{\tau_P} >\lb\big)\quad\textrm{and}\quad
\Pr\big(\overline{B}_\tau < \ub\big)\leq \Pr\big(\overline{B}_{\tau_P}<\ub\big).
\end{equation}
The embedding therefore `minimises the law of the maximum, and
maximises the law of the minimum.'

The `tilted-Jacka' embedding works the other way round. Fix $K \in
(0,\infty)$. Then if $\ub > \Psi_\mu(K)$ and $\lb < \Theta_\mu(K)$,
for all solutions $\tau$ to $(SEP)$, we have:
\begin{equation*}
\Pr\left( \lB_{\tau} > \lb\right) \ge \Pr\left( \lB_{\tau_J(K)} >
 \lb\right) \quad \mbox{ and } \quad
\Pr\left( \uB_{\tau} < \ub\right) \ge \Pr\left( \uB_{\tau_J(K)} <
 \ub\right).
\end{equation*}
\begin{remark} \label{rem:Kchoice} The importance of the role of $K$
can now be seen if we choose a function $f(x)$ such that $f(x)$ is
increasing for $x > B_0$ and decreasing for $x < B_0$. Since
$\Theta_{\mu}(K)$ and $\Psi_\mu(K)$ are both increasing, we can find
a value of $K$ such that (assuming suitable continuity)
$f(\Theta_\mu(K)) = f(\Psi_\mu(K))$. Then the `tilted-Jacka'
construction, with $K$ chosen such that $f(\Theta_\mu(K)) =
f(\Psi_\mu(K))$ maximises $\Pr(\sup_{t \le \tau}f(B_{t}) \ge z)$
over solutions of $(SEP)$ (see \cite{Cox:04} for details). We note
that the construction of \cite{Jacka:88} (where only a specific
choice of $K$ is considered) maximises $\Pr(\sup_{t \le \tau}|B_t|
\ge z)$. Of particular interest for our purposes will be the case
where $f(x) = \indic{x \not\in (\lb,\ub)}$, where $\lb < B_0 <
\ub$. In this case\footnote{At least if we assume the absence of
  atoms from the measure $\mu$. If the measure contains atoms, we
  need to be slightly more careful about some definitions, but the
  statement remains true. We refer the reader to \cite{CoxHobson:06}
  for details. (The proof of Theorem~14 therein is easily adapted to
  the centered case.)}, we can find $K$ either such that
$f(\Theta_\mu(K)) = f(\Psi_\mu(K))=0$, or such that
$f(\Theta_\mu(K)) = f(\Psi_\mu(K))=1$. In general this choice of $K$
will not be unique and we may take any suitable $K$. In fact, we may
classify which case we belong to: if\footnote{When there exists an
  interval to which $\mu$ assigns no mass, the inverse may not be
  uniquely defined. In which case, the argument remains true if we
  take $\Theta_{\mu}^{-1}(z) = \sup\{w \in \R : \Theta_\mu(w) \le
  z\}$ and $\Psi_\mu^{-1}(z) = \inf\{w \in \R : \Psi_\mu(w) \ge
  z\}$, i.e.\ we take $\Theta_{\mu}^{-1}$ left-continuous and
  $\Psi_{\mu}^{-1}$ right-continuous. In case $\mu$ has atoms at the
  end of the support, writing $\ovl\mu(x) = \mu([x,\infty))$:
  $1=\ovl\mu(a)>\ovl\mu(a+)$ or $\ovl\mu(b)>\ovl\mu(b+)=0$ this
  becomes slightly more complex as then we put
  $\Theta_\mu^{-1}(a)=\Theta_\mu^{-1}(a+)$ and
  $\Psi_\mu^{-1}(b)=\Psi_\mu^{-1}(b-)$ respectively.}
$\Theta_\mu^{-1}(\lb) \ge \Psi_{\mu}^{-1}(\ub)$ we can take $K \in
(\Psi_{\mu}^{-1}(\ub),\Theta_\mu^{-1}(\lb))$ (or equal to
$\Theta_\mu^{-1}(\lb)$ in the case where there is equality) which
has $f(\Theta_\mu(K)) = f(\Psi_\mu(K))=1$. Alternatively, if
$\Theta_\mu^{-1}(\lb) < \Psi_{\mu}^{-1}(\ub)$, then taking $K \in
(\Theta_\mu^{-1}(\lb), \Psi_{\mu}^{-1}(\ub))$ gives
$f(\Theta_\mu(K)) = f(\Psi_\mu(K))=0$. For such a choice of $K$, we
will call the resulting stopping time the tilted-Jacka embedding for
the barriers $\lb,\ub$.
\end{remark}

\begin{lemma}\label{lem:perkins}
For any $\lb<B_0<\ub$ and any stopping time $\tau$, which is a
solution to the Skorokhod embedding problem $(SEP)$ for $\mu$, we have
\begin{equation}\label{eq:perkins_double}
 \Pr\big(\underline{B}_\tau>\lb\textrm{ and } \overline{B}_\tau
 <\ub \big)\leq \Pr\big(\underline{B}_{\tau_P}>\lb\textrm{ and }
 \overline{B}_{\tau_P} <\ub \big),
\end{equation}
where $\tau_P$ is the Perkins solution \eqref{eq:perkins}.
\end{lemma}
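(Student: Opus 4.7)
The plan is to combine the pathwise super-hedging inequalities derived in Section~\ref{sec:hedges} with the marginal optimality property \eqref{eq:perkins_opt} of Perkins' embedding, via a case split on the relative positions of $\gp,\gm,\lb,\ub$.

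In a first step I would establish a $\tau$-independent upper bound on $\Pr(\underline{B}_\tau>\lb,\,\overline{B}_\tau<\ub)$ for every UI solution $\tau$ of $(SEP)$. Applying the superhedges pathwise to $B$ (with $\rho=H_\lb$ or $H_\ub$), uniform integrability of $(B_{t\wedge\tau})$ gives by optional sampling that the forward terms have zero mean. Taking expectations in $\uh^{I}$, $\uh^{II}(K_2)$ and $\uh^{III}(K_3)$ therefore yields three upper bounds depending only on $\mu$: the trivial $\mu((\lb,\ub))$ from $\uh^{I}$, together with two families parameterised by $K_2>\lb$ and $K_3<\ub$ that are explicit integrals against $\mu$.

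In a second step I would show that $\tau_P$ attains the infimum of these three families, via a case analysis. In the ``generic'' case, where the stopping functions of Perkins relative to $\lb,\ub$ are such that $\{B_{\tau_P}\in(\lb,\ub)\}=\{\underline{B}_{\tau_P}>\lb,\,\overline{B}_{\tau_P}<\ub\}$, the bound from $\uh^{I}$ is tight since both sides have mass $\mu((\lb,\ub))$. In the ``degenerate'' case $\gp(\ub)\leq\lb$, the Perkins structure forces $B_{\tau_P}=\gp(\overline{B}_{\tau_P})\leq\gp(\ub)\leq\lb$ on $\{\overline{B}_{\tau_P}\geq\ub\}$, giving the inclusion $\{\overline{B}_{\tau_P}\geq\ub\}\subseteq\{\underline{B}_{\tau_P}\leq\lb\}$. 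Then
\[
\Pr(\underline{B}_{\tau_P}>\lb,\,\overline{B}_{\tau_P}<\ub)=\Pr(\underline{B}_{\tau_P}>\lb)\geq\Pr(\underline{B}_{\tau}>\lb)\geq\Pr(\underline{B}_{\tau}>\lb,\,\overline{B}_{\tau}<\ub),
\]
where the middle inequality is \eqref{eq:perkins_opt}; here one checks that $\uh^{II}(K_2)$ is tight for an appropriate choice of $K_2$. The symmetric case $\gm(\lb)\geq\ub$ is handled analogously using $\uh^{III}$ and the maximum half of \eqref{eq:perkins_opt}.

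The main obstacle is the second step, specifically verifying directly from the mass-balance definitions \eqref{eq:gammaplus}--\eqref{eq:gammaminus} which of the three cases applies and that the claimed set-inclusions and tightness hold. Additional care is required when $\mu$ has atoms at $\lb$, $\ub$, or the endpoints of its support, where the definitions of $\gp,\gm$ (and the choice of $K_2,K_3$ attaining the infimum) must be interpreted with one-sided limits in the spirit of Remark~\ref{rem:Kchoice}. Step~1, by contrast, is a mechanical consequence of the superhedging inequalities together with uniform integrability.
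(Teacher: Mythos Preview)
Your Step~2 is essentially the paper's proof: a three-way case split on the positions of $\gp(\ub),\gm(\lb)$ relative to $\lb,\ub$, followed by the marginal optimality \eqref{eq:perkins_opt} of Perkins in the degenerate cases and the trivial inclusion $\{\underline{B}_\tau>\lb,\,\overline{B}_\tau<\ub\}\subseteq\{B_\tau\in(\lb,\ub)\}$ in the generic case. Step~1, however, is entirely superfluous here. The paper's argument is purely probabilistic and never invokes the superhedges $\uh^{I},\uh^{II},\uh^{III}$; the only $\tau$-independent upper bounds used are $\mu((\lb,\ub))$, $\Pr(\underline{B}_{\tau_P}>\lb)$ and $\Pr(\overline{B}_{\tau_P}<\ub)$, all of which are immediate. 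The superhedges enter only later, in Theorem~\ref{thm:wflvr}, to identify the financial meaning of the bound.

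There is also a small gap in your degenerate-case argument. The claim that ``$B_{\tau_P}=\gp(\overline{B}_{\tau_P})$ on $\{\overline{B}_{\tau_P}\geq\ub\}$'' is false in general under the single hypothesis $\gp(\ub)\leq\lb$: if additionally $\gm(\lb)\geq\ub$, the process can stop at the \emph{upper} boundary with $B_{\tau_P}=\gm(\underline{B}_{\tau_P})\geq\ub$ while $\underline{B}_{\tau_P}>\lb$, so the inclusion $\{\overline{B}_{\tau_P}\geq\ub\}\subseteq\{\underline{B}_{\tau_P}\leq\lb\}$ fails. The paper avoids this by pairing the hypotheses: case~(ii) assumes $\gp(\ub)\leq\lb$ \emph{and} $\ub>\gm(\lb)$, which together force $\{\underline{B}_{\tau_P}>\lb\}\subseteq\{\overline{B}_{\tau_P}<\ub\}$; the configuration $\gp(\ub)\leq\lb$ and $\gm(\lb)\geq\ub$ is absorbed into the generic case~(i), where the $\mu((\lb,\ub))$ bound is already tight. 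So your trichotomy must be on the joint signs of $(\gp(\ub)-\lb,\,\gm(\lb)-\ub)$, not on each separately.
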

\begin{proof}
We consider three possibilities:
\begin{enumerate}
\item First observe that we always have
 \begin{equation*}
   \Pr\big(\underline{B}_\tau>\lb,\ \overline{B}_\tau <\ub\big)\leq
   \Pr\big(B_\tau\in (\lb,\ub)\big)=\mu\big((\lb,\ub)\big).
 \end{equation*}
 From the definition \eqref{eq:perkins} of $\tau_P$ it follows that
 \begin{equation*}
   \Pr\big(\underline{B}_{\tau_P}>\lb,\ \overline{B}_{\tau_P}
   <\ub\big)=\mu\big((\lb,\ub)\big),
 \end{equation*}
 when $\ub\leq \gm(\lb)$ and $\gp(\ub)\leq \lb$ or when
 $\ub> \gm(\lb)$ and $\gp(\ub)> \lb$. The latter
 corresponds to $\mu((\lb,\ub))=1$ in which case a UI embedding
 always remains within $(\lb,\ub)$,
 i.e. $\Pr\big(\underline{B}_\tau>\lb,\ \overline{B}_\tau
 <\ub\big)= 1=\mu((\lb,\ub)).$
\item Suppose $\ub>\gm(\lb)$ and $\gp(\ub)\leq \lb$. We
 then have, using \eqref{eq:perkins_opt} and \eqref{eq:perkins},
 \begin{equation*}
   \Pr\big(\underline{B}_\tau>\lb,\ \overline{B}_\tau <
   \ub\big)\leq \Pr\big(\underline{B}_\tau > 
   \lb\big)\leq \Pr\big(\underline{B}_{\tau_P} >\lb\big) =
   \Pr\big(\underline{B}_{\tau_P}>\lb,\ \overline{B}_{\tau_P}
   <\ub\big).
\end{equation*}
\item Suppose $\ub\leq \gm(\lb)$ and $\gp(\ub)> \lb$. We
then have, using \eqref{eq:perkins_opt} and \eqref{eq:perkins},
\begin{equation*}
 \Pr\big(\underline{B}_\tau>\lb,\ \overline{B}_\tau < \ub\big) \leq
 \Pr\big(\overline{B}_\tau < \ub\big)\leq \Pr\big(
 \overline{B}_{\tau_P} < \ub\big) = \Pr\big(\underline{B}_{\tau_P}>
 \lb,\ \overline{B}_{\tau_P} <\ub\big).
 \end{equation*}
\end{enumerate}
\end{proof}
\begin{lemma}\label{lem:jacka}
For any $\lb<B_0<\ub$ and any stopping time $\tau$, which is a
solution to the Skorokhod embedding problem $(SEP)$ for $\mu$, we
have
\begin{equation}\label{eq:jacka_double}
 \Pr\big(\underline{B}_\tau>\lb\textrm{ and }\overline{B}_\tau <
 \ub \big) \geq \Pr\big(\underline{B}_{\tau_J(K)} >\lb \textrm{ and }
 \overline{B}_{\tau_J(K)} <\ub \big),
\end{equation}
where $\tau_J(K)$ is the tilted-Jacka embedding \eqref{eq:jacka} for
barriers $\lb,\ub$.
\end{lemma}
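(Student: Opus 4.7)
The plan is to mirror the case analysis in the proof of Lemma \ref{lem:perkins}, but invoking the joint dual-optimality of the tilted-Jacka construction that is summarised in Remark \ref{rem:Kchoice}. That remark tells us the relevant fact: with $K$ chosen so that $f(\Theta_\mu(K))=f(\Psi_\mu(K))$ for $f(x) = \indic{x \notin (\lb, \ub)}$, the tilted-Jacka time $\tau_J(K)$ maximises $\Pr\big(\sup_{t \le \tau} f(B_t) \ge 1\big)$ over all UI solutions $\tau$ of $(SEP)$. Since $\sup_{t \le \tau} f(B_t) \ge 1$ is precisely the event $\{\lB_\tau \le \lb \text{ or } \sB_\tau \ge \ub\}$, which is the complement of the event appearing in \eqref{eq:jacka_double}, once this maximisation is in hand the lemma follows by taking complements.

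I would then split into the two sub-cases distinguished in Remark \ref{rem:Kchoice}. In the sub-case $\Theta_\mu^{-1}(\lb) \ge \Psi_\mu^{-1}(\ub)$ one picks $K \in (\Psi_\mu^{-1}(\ub), \Theta_\mu^{-1}(\lb)]$, so that $\Theta_\mu(K) \le \lb$ and $\Psi_\mu(K) \ge \ub$. Then the first-stage time $\tau_1$ in \eqref{eq:JackaTau1} already drives $B$ out of $(\lb,\ub)$: continuity forces $B$ to traverse $\lb$ before hitting $\Theta_\mu(K)$ and to traverse $\ub$ before hitting $\Psi_\mu(K)$. Hence $\{\lB_{\tau_J(K)} > \lb,\ \sB_{\tau_J(K)} < \ub\}$ is a null event, the right-hand side of \eqref{eq:jacka_double} is zero, and the inequality is trivial.

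In the complementary sub-case $\Theta_\mu^{-1}(\lb) < \Psi_\mu^{-1}(\ub)$ one chooses $K \in (\Theta_\mu^{-1}(\lb), \Psi_\mu^{-1}(\ub))$, which gives $\lb < \Theta_\mu(K) < B_0 < \Psi_\mu(K) < \ub$. The optimality cited in Remark \ref{rem:Kchoice} (and established in \cite{Cox:04}) then applies for the above $f$, yielding
\[
\Pr\big(\lB_\tau \le \lb \text{ or } \sB_\tau \ge \ub\big) \leq \Pr\big(\lB_{\tau_J(K)} \le \lb \text{ or } \sB_{\tau_J(K)} \ge \ub\big)
\]
for every UI solution $\tau$ to $(SEP)$ with $B_\tau\sim\mu$. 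Passing to complements of both sides gives exactly \eqref{eq:jacka_double}.

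The main obstacle is conceptual rather than computational: the lemma does \emph{not} follow from the two one-sided optimalities $\Pr(\lB_\tau > \lb) \ge \Pr(\lB_{\tau_J(K)} > \lb)$ and $\Pr(\sB_\tau < \ub) \ge \Pr(\sB_{\tau_J(K)} < \ub)$ quoted between \eqref{eq:jacka} and Remark \ref{rem:Kchoice}, since those only bound the two marginals. What is needed is the joint maximisation of the union event, which relies on the specific coupling of the two one-sided Jacka stopping rules $\tau_\Psi,\tau_\Theta$ through the common randomising exit time $\tau_1$ and the balancing choice $f(\Theta_\mu(K)) = f(\Psi_\mu(K))$. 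So modulo invoking \cite{Cox:04} for this joint statement, the argument is then a routine case-check paralleling Lemma \ref{lem:perkins}.
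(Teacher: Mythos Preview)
Your argument is correct, but it takes a different route from the paper, and your final paragraph mischaracterises what is actually needed.

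The paper's proof does \emph{not} invoke the joint optimality from \cite{Cox:04}; instead it reduces to the one-sided Az\'ema--Yor bounds via a disjointness observation. In the non-trivial case $f(\Theta_\mu(K))=f(\Psi_\mu(K))=0$, the key point is that paths under $\tau_J(K)$ never cross $K$ after $\tau_1$, so the events $\{\lB_{\tau_J(K)}\le\lb\}$ and $\{\sB_{\tau_J(K)}\ge\ub\}$ are disjoint. This gives the exact identity
\[
\Pr\big(\lB_{\tau_J(K)}>\lb,\ \sB_{\tau_J(K)}<\ub\big)=1-\Pr\big(\lB_{\tau_J(K)}\le\lb\big)-\Pr\big(\sB_{\tau_J(K)}\ge\ub\big),
\]
while for an arbitrary $\tau$ one always has $\Pr(\lB_\tau>\lb,\,\sB_\tau<\ub)\ge 1-\Pr(\lB_\tau\le\lb)-\Pr(\sB_\tau\ge\ub)$ by inclusion--exclusion. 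The two one-sided Az\'ema--Yor/reverse Az\'ema--Yor maximalities then finish the job. So your assertion that ``the lemma does not follow from the two one-sided optimalities'' is misleading: on their own they don't suffice, but combined with the disjointness structure of $\tau_J(K)$ they do, and this is exactly how the paper proceeds.

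What each approach buys: your argument is shorter but treats the result of \cite{Cox:04} as a black box; the paper's argument is self-contained modulo the classical Az\'ema--Yor bound, and makes transparent why the coupling through $\tau_1$ matters (it makes the two barrier-crossing events mutually exclusive, so the union bound is tight for $\tau_J(K)$).
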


\begin{proof}
As noted in Remark~\ref{rem:Kchoice}, the tilted-Jacka embedding
with $f(x) = \indic{x \not\in (\lb,\ub)}$ corresponds to a choice of
$K$ such that $f(\Theta_\mu(K)) = f(\Psi_\mu(K))$. Suppose that both
these terms are one. Then $\Theta_\mu(K) < \lb$ and $\Psi_\mu(K) >
\ub$. In particular, since $B_{\tau_1} \in \{\Theta_\mu(K),
\Psi_\mu(K)\}$, we must have $\Pr\big(\underline{B}_{\tau_J(K)} >\lb
\textrm{ and } \overline{B}_{\tau_J(K)} <\ub \big) = 0$, and the
conclusion trivially follows.

Suppose instead that $f(\Theta_\mu(K)) = f(\Psi_\mu(K)) = 0$. Then
$\Theta_\mu(K) \ge \lb$ and $\Psi_\mu(K) \le \ub$. From the
definition of $\tau_{J}(K)$, paths will never cross $K$ after
$\tau_1$, so we have
\begin{eqnarray*}
 \Pr\big(\underline{B}_{\tau_J(K)} >\lb \textrm{ and }
 \overline{B}_{\tau_J(K)} <\ub \big) & = &
 1 - \Pr(\lB_{\tau_J(K)}\le \lb) - \Pr(\uB_{\tau_J(K)} \ge \ub)\\
 & = & 1 - \Pr(B_{\tau_J(K)} \le \Theta_{\mu}^{-1}(\lb)) -
 \Pr(B_{\tau_J(K)} \ge \Psi_{\mu}^{-1}(\ub)),
\end{eqnarray*}
where the second equality follows from the definitions of
$\tau_\Psi$ and $\tau_\Theta$. Finally, we note that the latter
expressions are exactly the maximal probabilities or the Az\'ema-Yor
and reverse Az\'ema-Yor embeddings (see \cite{AzemaYor:79} and
\cite{Obloj:04b}) so that for any solution $\tau$ to (SEP) for
$\mu$, we have
\begin{eqnarray*}
 \Pr(\uB_\tau \ge \ub) & \le & \Pr(B_{\tau_J(K)} \ge
 \Psi_{\mu}^{-1}(\ub))\\
 \Pr(\lB_\tau \le \lb) & \le & \Pr(B_{\tau_J(K)} \le
 \Theta_{\mu}^{-1}(\lb)).
\end{eqnarray*}
Putting these together, we conclude:
\begin{eqnarray*}
 \Pr\big(\underline{B}_\tau>\lb\textrm{ and }\overline{B}_\tau <
 \ub \big) & \ge & 1 - \Pr(\uB_\tau \ge \ub) - \Pr(\lB_\tau \le
 \lb)\\
 & \ge & 1 - \Pr(B_{\tau_J(K)} \le \Theta_{\mu}^{-1}(\lb)) -
 \Pr(B_{\tau_J(K)} \ge \Psi_{\mu}^{-1}(\ub))\\
 & \ge & \Pr\big(\underline{B}_{\tau_J(K)} >\lb \textrm{ and }
 \overline{B}_{\tau_J(K)} <\ub \big).
\end{eqnarray*}
\end{proof}

\subsection{Prices and hedges for the double no-touch option when
$\K=\R_+$}
We now have all the tools we need to compute the bounds in
\eqref{eq:mf_bounds} and prove they are the best possible bounds.  We
begin by considering the case where call options are traded at all
strikes: $\K=\R_+$.
\begin{theorem}\label{thm:wflvr}
Let $0<\lb<S_0<\ub$ and recall that $(S_t)$ has continuous paths in
$\pathsp$. Suppose $\mathcal{P}$ admits no WFLVR on
$\mathcal{X}_D$ defined via \eqref{eq:Xsetofassets}, \eqref{eq:XDdef} and \eqref{eq:stset_assume}. Then the following are equivalent: \\
(i) $\Pc$ admits no WFLVR on $\mathcal{X}_D\cup\{\dnt\}$,\\
(ii) there exists a $(\Pc,\mathcal{X}_D\cup\{\dnt\})$-market model,\\
(iii)\eqref{eq:mf_bounds} holds,\\
(iv) we have, with $\mu$ defined via \eqref{eq:mudef},
\begin{equation}\label{eq:upper_bound}
 \mathcal{P} \dnt\leq \min\left\{\iD(\lb) - \sD(\ub), \iD(\lb)
   +\frac{C(\gm(\lb))-P(\lb)}{\gm(\lb)-\lb}, 
   1-\sD(\ub)+\frac{P(\gp(\ub))-C(\ub)}{\ub-\gp(\ub)}\right\},
\end{equation}
where $\gamma_\mu^\pm$ are given in \eqref{eq:gammaplus}--\eqref{eq:gammaminus}, and
\begin{equation}\label{eq:lower_bound}
 \mathcal{P}\dnt \geq \left[1 - \frac{C(\Psi^{-1}_\mu(\ub))}{\ub -
   \Psi^{-1}_\mu(\ub)} -  \frac{P(\theta^{-1}_\mu(\lb))}{\theta^{-1}_\mu(\lb)-\lb}\right] \lor 0=\mu\Big(\big(\theta^{-1}_\mu(\lb) , \Psi^{-1}_\mu(\ub)
 \big)\Big)\lor 0,
\end{equation}
where $\theta_\mu,\Psi_\mu$ are given by \eqref{eq:barycentres}.

Furthermore, the upper bound in \eqref{eq:upper_bound} is attained
for the market model $S_t:=B_{\tau_P\wedge\frac{t}{T-t}}$, where
$(B_t)$ is a standard Brownian motion, $B_0=S_0$, and $\tau_P$ is
Perkins' stopping time \eqref{eq:perkins} embedding law $\mu$, where
$\mu$ is defined by \eqref{eq:mudef}.  The lower bound in
\eqref{eq:lower_bound} is attained for the market model
$S_t:=B_{\tau_J(K)\wedge\frac{t}{T-t}}$, where $(B_t)$ is a standard
Brownian motion, $B_0=S_0$, and $\tau_J(K)$ is the tilted-Jacka
stopping time \eqref{eq:jacka} embedding law $\mu$ for barriers
$\lb,\ub$.
\end{theorem}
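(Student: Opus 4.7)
My plan is to prove the equivalences by going around the cycle (ii) $\Rightarrow$ (i) $\Rightarrow$ (iii) $\Rightarrow$ (iv) $\Rightarrow$ (ii), and then read off the attainability statement from the construction used for (iv) $\Rightarrow$ (ii). The first two implications are almost free: a market model prices via expectation, which rules out WFLVR, and no WFLVR in particular implies no model-free arbitrage, so Lemma~\ref{lem:mf_bounds} gives (iii). Since no WFLVR on $\mathcal{X}_D$ is assumed throughout, Proposition~\ref{prop:wflvr} provides a probability measure $\mu$ on $\R_+$ determined by \eqref{eq:mudef} whose call and digital prices match $C(\cdot),\iD(\lb),\sD(\ub)$; this measure is the pivot for the rest of the argument.

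For (iii) $\Rightarrow$ (iv) I would simply evaluate the bounds in \eqref{eq:mf_bounds} by applying $\Pc$ to each hedge. For the superhedges, $\Pc(\uh^I)=\iD(\lb)-\sD(\ub)$, and (using $\Pc$ of a forward transaction being zero) $\Pc(\uh^{II}(K))=\iD(\lb)+(C(K)-P(\lb))/(K-\lb)$ and symmetrically for $\uh^{III}(K)$. The key computation is then to identify the minimising $K$ in $(\lb,\infty)$. Writing the objective in terms of $\mu$ and differentiating (or using convexity) reduces the first order condition to $\int_{(0,\lb)\cup(K,\infty)}(w-\lb)\,\mu(dw)=0$, which is precisely the defining relation for $\gm(\lb)$ in \eqref{eq:gammaminus}; analogously the optimal strike for $\uh^{III}$ is $\gp(\ub)$. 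For the subhedges, $\Pc(\lh^I)=0$ and $\Pc(\lh^{II}(K_1,K_2))=1-C(K_2)/(\ub-K_2)-P(K_1)/(K_1-\lb)$ separates in $K_1,K_2$; one checks directly that the optimisers are $K_1=\Theta_\mu^{-1}(\lb)$ and $K_2=\Psi_\mu^{-1}(\ub)$ (again from the first order conditions, using \eqref{eq:barycentres}), and the resulting value simplifies to $\mu((\Theta_\mu^{-1}(\lb),\Psi_\mu^{-1}(\ub)))\lor 0$.

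The substantive direction is (iv) $\Rightarrow$ (ii). Here I would exhibit two explicit market models, one saturating the upper bound \eqref{eq:upper_bound} and one the lower bound \eqref{eq:lower_bound}. On an auxiliary Brownian probability space with $B_0=S_0$, pick the Perkins stopping time $\tau_P$ from \eqref{eq:perkins} and the tilted-Jacka time $\tau_J(K)$ from \eqref{eq:jacka} (with $K$ chosen as in Remark~\ref{rem:Kchoice} for the barriers $\lb,\ub$), both embedding $\mu$. Time-change via $t\mapsto t/(T-t)$ to get continuous martingales $S^{P},S^{J}$ on $[0,T]$ with $S_T\sim\mu$ and $(S^{\cdot}_{t\wedge T})$ UI. By construction these match every call, digital call, and forward price in $\mathcal{X}_D$, so they are $(\Pc,\mathcal{X}_D)$-market models; all that remains is to compute $\E\,\dnt=\Pr(\iS_T>\lb,\sS_T<\ub)$ under each. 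For $S^{P}$ this is done in the three cases distinguished inside the proof of Lemma~\ref{lem:perkins}: in case (1) the value is $\mu((\lb,\ub))=\iD(\lb)-\sD(\ub)$; in case (2) one uses that $\{\iB_{\tau_P}>\lb\}\supset\{\ub<\iS_T\}$ up to a null event and reduces to the single barrier computation of \cite{Brown:01b}, yielding $\iD(\lb)+(C(\gm(\lb))-P(\lb))/(\gm(\lb)-\lb)$; case (3) is symmetric. Each of these equals one of the three terms in \eqref{eq:upper_bound} and Lemma~\ref{lem:perkins} guarantees it is also the minimum, so the upper bound is tight. For $S^{J}$ the analogous calculation, using that under tilted-Jacka $\{\iB_{\tau_J(K)}\le\lb\}=\{B_{\tau_J(K)}\le\Theta_\mu^{-1}(\lb)\}$ and symmetrically for the maximum, together with Lemma~\ref{lem:jacka}, gives the lower bound.

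Finally, to promote attainability of both endpoints into the full equivalence (ii) $\Leftrightarrow$ (iv) for a generic target price $p\in[\mathrm{lower},\mathrm{upper}]$, I would convex-combine the two extremal models: since both laws of $(S_T,\sS_T,\iS_T)$ share the same marginal $\mu$, any mixture is again a $(\Pc,\mathcal{X}_D)$-market model, and the double no-touch price is a convex combination covering the whole interval by the intermediate value theorem. The main obstacle, in my view, will be the identification of the optimal strikes with $\gm(\lb),\gp(\ub),\Theta_\mu^{-1}(\lb),\Psi_\mu^{-1}(\ub)$ in a way that is robust to atoms of $\mu$ and boundary pathologies (where the $\inf$/$\sup$ may not be attained or the inverses are not uniquely defined), following the conventions set up in Remark~\ref{rem:Kchoice}; the rest is a careful bookkeeping of the three geometric regimes already separated by the Perkins-optimality lemma.
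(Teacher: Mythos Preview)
Your proposal is correct and follows essentially the same route as the paper: the cycle (ii) $\Rightarrow$ (i) $\Rightarrow$ (iii) $\Rightarrow$ (iv) $\Rightarrow$ (ii), with (iv) $\Rightarrow$ (ii) established by exhibiting the Perkins and tilted-Jacka time-changed Brownian motions as extremal $(\Pc,\mathcal{X}_D)$-market models and then randomising between them (via an independent uniform variable) to hit any intermediate price.

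One small organisational remark: the implication (iii) $\Rightarrow$ (iv) is actually trivial and does not require identifying the minimising strikes. The three terms in \eqref{eq:upper_bound} are precisely $\Pc\uh^{I}$, $\Pc\uh^{II}(\gm(\lb))$, $\Pc\uh^{III}(\gp(\ub))$ at \emph{specific} strikes, so the infimum in \eqref{eq:mf_bounds} is automatically no larger; similarly for the lower bound. The optimisation you sketch (first-order conditions leading to $\gm,\gp,\Theta_\mu^{-1},\Psi_\mu^{-1}$) is the content of Remark~\ref{rem:gammas} rather than the proof proper. Where that identification is genuinely needed is in (iv) $\Rightarrow$ (ii), to check that the Perkins and Jacka prices coincide with the expressions in \eqref{eq:upper_bound}--\eqref{eq:lower_bound}; the paper reads this off directly from the case analysis inside Lemmas~\ref{lem:perkins} and~\ref{lem:jacka} together with the known single-barrier formulae, exactly as you indicate.
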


\begin{remark} \label{rem:gammas}Note that the terms on the RHS of
\eqref{eq:upper_bound} correspond respectively to $\Pc \uh^{I}$,
$\Pc\uh^{II}(\gm(\lb))$ and $\Pc\uh^{III}(\gp(\ub))$. From the
proof, we will be able to say precisely which term is the smallest:
the second term is the smallest if $\ub>\gm(\lb)$ and $\gp(\ub)\leq
\lb$, the third term is the smallest if $\ub\leq \gm(\lb)$ and
$\gp(\ub)> \lb$ and otherwise the first term is the smallest. \\
The lower bound \eqref{eq:lower_bound} is non-zero if and only if $\theta^{-1}_\mu(\lb)<\Psi^{-1}_\mu(\ub)$ and is then equal to $\Pc\lh^{II}(\theta^{-1}_\mu(\lb),\Psi^{-1}_\mu(\ub))$.\\
It follows from the definitions of $\gamma_\mu^\pm$ (see \cite{Perkins:86}) that we have
\begin{equation}\label{eq:defgammas_2}
\begin{split}
\frac{C(\gm(\lb))-P(\lb)}{\gm(\lb)-\lb}&=\inf_{K>S_0}\frac{C(K)-P(\lb)}{K-\lb}\\
\frac{P(\gp(\ub))-C(\ub)}{\ub-\gp(\ub)}&=\inf_{K<S_0}\frac{P(K)-C(\ub)}{\ub-K}.
\end{split}
\end{equation}
Using this and the remarks on when respective terms in \eqref{eq:upper_bound} are the smallest, we can rewrite the minimum on the RHS of \eqref{eq:upper_bound} as
\begin{equation}\label{eq:upper_bound_2}
\min\left\{\iD(\lb) - \sD(\ub), \iD(\lb)
   +\inf_{K\in (S_0,\ub)}\frac{C(K)-P(\lb)}{K-\lb}, 
   1-\sD(\ub)+\inf_{K\in (\lb,S_0)}\frac{P(K)-C(\ub)}{\ub-K}\right\}.
\end{equation}

\end{remark}
\begin{remark}\label{rem:cont_paths} We want to investigate briefly
what happens if the assumptions on continuity of paths are relaxed,
namely if the bounds \eqref{eq:upper_bound} and
\eqref{eq:lower_bound}, or equivalently \eqref{eq:mf_bounds}, are
still consequences of no model-free arbitrage. If we consider the
upper barrier \eqref{eq:upper_bound}, the assumption that $S_t$ is
continuous may be relaxed slightly: we only require that the price
does not jump across either barrier, but otherwise jumps may be
introduced.  We can also consider the general problem without any
assumption of continuity, however this becomes fairly simple: the
upper bound is now simply $\mu((\lb,\ub))$, which corresponds to the
price process $(S_t)_{0 \le t \le T}$ which is constant for $t \in
[0,T)$, so $S_t = S_0$, but has $S_T \sim \mu$. \\
Unlike the upper bound, the continuity or otherwise of the process
will make little\footnote{One has to pay some attention to avoid
 measure-theoretic problems as some (minimal) assumption on the process and the filtration are required to guarantee that the first entry time into $[0,\lb]$ is a stopping time.} 
 difference to the lower bound. Clearly, the lower bound is still attained by the same
construction, however, we can also show that a similar subhedge to
\eqref{eq:lH2} still holds. In fact, the only alteration that is
needed in the discontinuous case concerns the forward purchase. We
construct the same initial hedge, but suppose now the asset jumps
across the upper barrier, to a level $z$ say. Then we may still buy
$1/(\ub-K_2)$ forward units, but the value of this forward at
maturity is now: $(S_T -z)/(\ub-K_2)$. So the difference between
this portfolio at maturity and the payoff given in \eqref{eq:lH2} is
just
\begin{equation*}
 \frac{S_T-z}{\ub-K_2} - \frac{S_T-\ub}{\ub-K_2} = \frac{\ub-z}{\ub-K_2}
\end{equation*}
which is negative, and therefore the strategy is still a
subhedge. Consequently, the same lower bound is valid.
\end{remark}

\begin{proof}
If $\Pc$ admits no WFLVR on $\mathcal{X}_D\cup\{\dnt\}$ then it also
admits no model-free arbitrage and $(i)$ implies $(iii)$ with Lemma
\ref{lem:mf_bounds}.  As observed above, the three terms on the RHS
of \eqref{eq:upper_bound} are respectively $\mathcal{P} \uh^I$,
$\mathcal{P}
\uh^{II}(\gm(\lb))$ and $\mathcal{P}\uh^{III}(\gp(\ub))$, and the
two terms on the RHS of \eqref{eq:lower_bound} are
$\Pc\lh^{II}(\theta^{-1}_\mu(\lb),\Psi^{-1}_\mu(\ub))$ and
$\Pc\lh^I=0$. Thus clearly $(iii)$ implies $(iv)$. \\
Let $\mu$ be defined via \eqref{eq:mudef}, $(B_t)$ a Brownian motion
defined on a filtered probability space and $\tau_P$, $\tau_J$
respectively Perkins' \eqref{eq:perkins} and (tilted) Jacka's
\eqref{eq:jacka} stopping times embedding $\mu$ for barriers
$\lb,\ub$. Since $\Pc$ admits no WFLVR on $\mathcal{X}_D$,
Proposition \ref{prop:wflvr} implies that both
$S^P_t:=B_{\tau_P\wedge\frac{t}{T-t}}$ and
$S^J_t:=B_{\tau_J\wedge\frac{t}{T-t}}$ are market models matching
$\Pc$ on $\mathcal{X}_D$.  It follows from the proof of Lemma
\ref{lem:perkins} that $\E\mathbf{1}_{\overline{S}^P_T< \ub,\,
 \underline{S}^P_T> \lb}$ is equal to the RHS of
\eqref{eq:upper_bound}. Likewise, it follows from the proof of Lemma
\ref{lem:jacka} that $\E\mathbf{1}_{\overline{S}^J_T< \ub,\,
 \underline{S}^J_T> \lb}$ is equal to the RHS of
\eqref{eq:lower_bound}.  Enlarge the filtration of $(B_t)$ initially
with an independent random variable $U$, uniform on $[0,1]$, let
$\tau_\lambda=\tau_P\indic{U\leq \lambda}+\tau_J\indic{U>\lambda}$
and $S^\lambda_t:=B_{\tau_\lambda\wedge\frac{t}{T-t}}$. Then
$(S^\lambda_t)$ is a $(\Pc,\mathcal{X}_D)$-market model and
$\E\mathbf{1}_{\overline{S}^\lambda_T< \ub,\,
 \underline{S}^\lambda_T> \lb}$ takes all values between the bounds
in \eqref{eq:upper_bound}-\eqref{eq:lower_bound} as $\lambda$ varies
between $0$ and $1$. We conclude that $(iv)$ implies
$(ii)$. Obviously we have $(ii)$ implies $(i)$.
\end{proof}

\subsection{Pricing and hedging when $\K$ is finite}
In practice, the assumption that call prices are known for every
strike is unrealistic, so we consider now the case when $\K$ is
finite. The only assumption we make, which is satisfied in most market
conditions, is that there are enough strikes to separate the
barriers. Specifically, we shall make the following assumption:
\begin{description}
\item[{\bf(A)}] $\K=\{K_0,K_1,\ldots,K_N\}$, $0=K_0<K_1<\ldots<K_N$,
and the barriers $\lb$ and $\ub$ satisfy: $\lb > K_2$, there are at
least 2 traded strikes between $\lb$ and $\ub$, and $\ub < K_{N-1}$,
with $S_0>C(K_1)>C(K_{N-1})>0$.
\end{description}
It is more convenient to consider the lower and the upper bounds on
the price of the double no-touch option independently. The upper bound involves
digital calls and when these are not traded in the market the results
are somewhat technical to formulate. We start with the lower bound
which is relatively straightforward.
\begin{theorem} \label{thm:finstrikelower} 
Recall $\mathcal{X}$ defined via \eqref{eq:Xsetofassets} and \eqref{eq:stset_assume}.
Suppose \textbf{(A)} holds and $\mathcal{P}$ admits no WA on $\mathcal{X}\cup\{\dnt\}$. Then
\begin{equation}\label{eq:lower_bound_fs}
 \mathcal{P} \dnt \ge \max_{i \le j: \lb \le K_i \le K_j \le \ub}
 \left[ 1 - \frac{C(K_j)}{\ub - K_j} -
   \frac{P(K_i)}{K_i - \lb}\right]\lor 0\ .
\end{equation}
The bound is tight --- there exists a
$(\mathcal{P},\mathcal{X})$-market model, under which the above
price is attained.
\end{theorem}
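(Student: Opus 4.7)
The plan is to split the argument into the easy lower-bound direction and the harder tightness (market model) direction.

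For the lower bound, I would apply $\mathcal{P}$ to the sub-hedging inequality $\dnt\geq \lh^{II}(K_i,K_j)$ from \eqref{eq:lH2}, which is valid for every pair $K_i,K_j\in\K$ with $\lb\leq K_i\leq K_j\leq \ub$, and to the trivial inequality $\dnt\geq \lh^I=0$. Since no WA implies no model-free arbitrage, by the linearity of $\mathcal{P}$ together with the pricing rules \eqref{eq:pricop1}--\eqref{eq:pricop3} the two forward terms in $\lh^{II}(K_i,K_j)$ are priced at zero (the relevant stopping times belong to $\stset$ by \eqref{eq:stset_assume}), so that
\[
\mathcal{P}\dnt\;\geq\; 1-\frac{C(K_j)}{\ub-K_j}-\frac{P(K_i)}{K_i-\lb}.
\]
Taking the maximum over admissible $(i,j)$ and with $\lh^I$ gives \eqref{eq:lower_bound_fs}.

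For the tightness direction, let $(i^*,j^*)$ be a maximising pair (if the maximum in \eqref{eq:lower_bound_fs} equals $0$, any $(\mathcal{P},\mathcal{X})$-market model obtained by invoking Proposition~\ref{prop:calls_WA} works, since $\mathcal{P}\dnt\ge 0$ is trivially attained by a model that exits $(\lb,\ub)$ almost surely). The key idea is to extend $C$ from $\K$ to all of $\R_+$ in such a way that the associated measure $\mu$ given by \eqref{eq:mudef} is extremal for the Jacka-type embedding. Specifically, I would choose the extension of $C$ to be piecewise linear on $[\lb,K_{i^*}]$ joining the points $(\lb,S_0-\lb+P(\lb))$ and $(K_{i^*},C(K_{i^*}))$, piecewise linear on $[K_{j^*},\ub]$ joining $(K_{j^*},C(K_{j^*}))$ and $(\ub,P(\ub)+S_0-\ub)\lor 0$ (and if necessary further piecewise linear interpolation on the remaining sub-intervals between traded strikes outside $[\lb,\ub]$), while preserving convexity, monotonicity and the boundary behaviour \eqref{eq:Cbasic}--\eqref{eq:Ccnv}. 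The resulting $\mu$ has no mass in the open intervals $(\lb,K_{i^*})$ and $(K_{j^*},\ub)$, carries an atom of size $P(K_{i^*})/(K_{i^*}-\lb)$ at $\lb$, and an atom of size $C(K_{j^*})/(\ub-K_{j^*})$ at $\ub$, so that the barycentre identities $\Theta_\mu(K_{i^*})=\lb$ and $\Psi_\mu(K_{j^*})=\ub$ hold.

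Having constructed such an extension, I would apply Theorem~\ref{thm:wflvr} (in the $\K=\R_+$ setting, with the extended call prices) to the tilted-Jacka embedding $\tau_J(K)$ with $K$ chosen so that $(\Theta_\mu(K),\Psi_\mu(K))=(K_{i^*},K_{j^*})$. The lower bound \eqref{eq:lower_bound} for this $\mu$ then equals $1-C(K_{j^*})/(\ub-K_{j^*})-P(K_{i^*})/(K_{i^*}-\lb)$, and the associated model $S_t:=B_{\tau_J(K)\land t/(T-t)}$ is a $(\mathcal{P},\mathcal{X})$-market model (it matches the extended, and hence the given, call prices and has $(S_T-S_\rho)\indic{\rho\le T}$ priced at zero since it is a martingale). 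I expect the main obstacle to be verifying that the piecewise-linear extension can always be chosen to remain compatible with the \emph{given} prices at strikes in $\K\cap(\lb,\ub)\setminus\{K_{i^*},K_{j^*}\}$ and at strikes outside $[\lb,\ub]$ --- essentially, showing that the optimality of $(i^*,j^*)$ in the maximum forces the given discrete call prices to lie on or above this piecewise-linear envelope so that convexity is preserved. This step should proceed by a direct geometric argument on the convex function $C$ restricted to $\K$, and would be the natural place to invoke the appendix lemma on weak arbitrage for finite strikes.
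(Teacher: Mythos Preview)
Your lower-bound argument is fine and matches the paper. The tightness argument, however, is more elaborate than necessary and leaves precisely the gap you flag.

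The paper's approach is simpler: rather than engineering an extension of $C$ with atoms placed at the barriers so that $\Theta_\mu(K_{i^*})=\lb$ and $\Psi_\mu(K_{j^*})=\ub$, it extends $C$ in the most naive way possible --- piecewise linearly between the traded strikes (after adding one large auxiliary strike $K_{N+1}$ with $C(K_{N+1})=0$). The resulting measure $\mu$ is then supported \emph{on $\K$ itself}. Since $\Theta_\mu^{-1}$ and $\Psi_\mu^{-1}$ (with the left-/right-continuous conventions) necessarily take values in the support of $\mu$, the quantities $\Theta_\mu^{-1}(\lb)$ and $\Psi_\mu^{-1}(\ub)$ appearing in the $\K=\R_+$ lower bound \eqref{eq:lower_bound} are automatically traded strikes. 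Theorem~\ref{thm:wflvr} then gives a market model attaining a value of the form $1 - C(K_j)/(\ub-K_j) - P(K_i)/(K_i-\lb)$ for some $K_i,K_j\in\K$, which must therefore equal the maximum in \eqref{eq:lower_bound_fs}.

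This sidesteps entirely the verification you identify as the main obstacle: with the naive linear interpolation there is nothing to check about compatibility with the given prices at intermediate strikes, because the extension \emph{is} those prices at the strikes and linear in between. Your route --- forcing atoms at $\lb$ and $\ub$ and clearing out mass from $(\lb,K_{i^*})$ and $(K_{j^*},\ub)$ --- would additionally have to contend with the possibility that there are traded strikes strictly between $\lb$ and $K_{i^*}$ (or between $K_{j^*}$ and $\ub$), since nothing in \textbf{(A)} forces $K_{i^*}$ to be the first strike above $\lb$. A minor further point: in the case where the maximum equals $0$, an arbitrary $(\mathcal{P},\mathcal{X})$-market model from Proposition~\ref{prop:calls_WA} need not exit $(\lb,\ub)$ almost surely; you still need the tilted-Jacka construction (which, when $\Theta_\mu^{-1}(\lb)\ge\Psi_\mu^{-1}(\ub)$, does exit a.s.) to handle that case.
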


\begin{proof}
The bound is just a rewriting of the lower bound in
\eqref{eq:mf_bounds} in which we omitted $\Pc \lh^I=0$.  It remains
to construct a market model under which the bound is attained.

Recall that $C(K_0)=C(0)=S_0$ and choose $K_{N+1}$ such that
\begin{equation*}
 K_{N+1} \ge \max\left\{K_N + C(K_N)\frac{K_N - K_{N-1}}{C(K_{N-1}) -
     C(K_N)}, \ub+1 \right\},
\end{equation*}
and set $C(K_{N+1}) = 0$. This extension of call prices preserves
the no WA property and by Proposition \ref{prop:calls_WA} we may
extend $C$ to a function on $\R_+$ satisfying
\eqref{eq:Cbasic}--\eqref{eq:Ccnv}. In fact, we may take $C$ to be
linear in the intervals $(K_i,K_{i+1})$ for $i\le N$, and setting
$C(K) = 0$ for $K\ge K_{N+1}$. To $C$ we associate a measure $\mu$
by \eqref{eq:mudef} which has the representation
\begin{equation*}
 \mu = \sum_{i=0}^{N+1} (C_{+}'(K_{i}) - C_{-}'(K_{i})) \delta_{K_i}
\end{equation*}
where we take $C_{-}'(0) = 1$. Note that by \textbf{(A)} the
barriers $\lb,\ub$ are not at the end of the support of $\mu$. For
the definition of $\Theta_\mu$ and $\Psi_\mu$ it follows that their
inverses (which we took left- and right- continuous respectively)
take values in $\K$. The theorem now follows from Theorem
\ref{thm:wflvr} using the equivalence of $(iii)$ and $(iv)$. Note
that \eqref{eq:lower_bound} gives precisely the traded strikes
$K_i,K_j$ for which the maximum in the RHS of
\eqref{eq:lower_bound_fs} is attained.
\end{proof}

We now consider the upper bound. There are several issues that will
make this case more complicated than the previous. Wheras in the lower
bound, we are purchasing only call/put options at strikes between
$\lb$ and $\ub$, in the upper bound, we need to consider how to infer
the price of a digital option at $\lb$ or $\ub$, and consider the
possibility that there are no options traded exactly at the strikes
$\lb$ and $\ub$. Secondly, the upper bound will prove to be much more
sensitive to the discontinuity in the payoff of the
double no-touch. This is because, when there are only finitely many
strikes, the measure $\mu$ -- the market model law of $S_T$ -- is not
specified and in order to maximise $\E \dnt$, one wants to have as
many paths as possible finishing as near to $\ub$ and $\lb$ within the
constraints imposed by the calls; to do this, we want to put atoms of
mass `just to the right of $\lb$', and `just to the left of
$\ub$'. For this reason, in the final case we consider, for some
specifications of the prices, the upper bound cannot always be
attained under a suitable model, but rather, in general can only be
arbitrarily closely approximated. These issues would not arise if we
were to consider modified double no-touch option with payoff
$\mathbf{1}_{\overline{S}_T\leq \ub,\, \underline{S}_T\geq \lb}$.

We begin by considering the simpler case where there are calls and
digital calls traded with strikes $\ub$ and $\lb$:
\begin{theorem} \label{thm:ubfinstrike1} Recall $\mathcal{X}_D$
defined via \eqref{eq:Xsetofassets}, \eqref{eq:XDdef} and
\eqref{eq:stset_assume}. Suppose \textbf{(A)} holds,
$\lb,\ub\in\K$ and $\Pc$ admits no weak arbitrage on
$\mathcal{X}_D$ and no model-free arbitrage on
$\mathcal{X}_D\cup\{\dnt\}$.  Then the price of the double
no-touch option is less than or equal to
\begin{equation}\label{eq:ub2}
  \min\left\{\iD(\lb) - \sD(\ub),
    \iD(\lb) + \inf_{K\in\K\cap (S_0,\ub]} \frac{C(K)-P(\lb)}{K-\lb},
    (1-\sD(\ub)) + \inf_{K\in\K\cap [\lb,S_0)} \frac{P(K)-C(\ub)}{\ub-K}\right\}.
\end{equation}
Further, there exists a sequence of market-models for
$(\Pc,\mathcal{X}_D)$ which approximate the upper bound, and
if $\mathcal{P}$ attains the upper bound, then either there exists a
weak arbitrage on $\mathcal{X}_D\cup\{\dnt\}$, or there exists a
$(\mathcal{P},\mathcal{X}_D\cup\{\dnt\})$-market model.
\end{theorem}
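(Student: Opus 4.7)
This is a finite-strike analogue of Theorem~\ref{thm:wflvr}. First I would derive the upper bound (\ref{eq:ub2}) as a direct consequence of Lemma~\ref{lem:mf_bounds} applied to $\uh^{I}$, $\uh^{II}(K)$ for $K\in\K\cap(S_0,\ub]$ and $\uh^{III}(K)$ for $K\in\K\cap[\lb,S_0)$. A short computation gives $\Pc\uh^{I}=\iD(\lb)-\sD(\ub)$, $\Pc\uh^{II}(K)=\iD(\lb)+(C(K)-P(\lb))/(K-\lb)$ and $\Pc\uh^{III}(K)=(1-\sD(\ub))+(P(K)-C(\ub))/(\ub-K)$, so the three terms inside the minimum in (\ref{eq:ub2}) are precisely these prices (the second and third ones being optimised over traded strikes).

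For the construction of approximating market models, the plan is to extend $C|_\K$ to a function on $\R_+$ satisfying (\ref{eq:Cbasic})--(\ref{eq:Ccnv}) and (\ref{eq:D_basic}), preserving the prescribed call and digital-call values. A natural candidate is the piecewise-linear interpolant between consecutive traded strikes, prolonged linearly beyond $K_N$ so that $C(K_{N+1})=0$ for some $K_{N+1}>\ub$. The assumption of no WA on $\mathcal{X}_D$ together with Proposition~\ref{prop:wa} ensures that such an extension exists, and the induced atomic measure $\mu$, supported on $\K\cup\{K_{N+1}\}$, defines via Proposition~\ref{prop:wflvr} a $(\Pc,\mathcal{X}_D)$-market model. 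By Theorem~\ref{thm:wflvr} applied to the extension, Perkins' embedding of $\mu$ attains the bound (\ref{eq:upper_bound}) for $\mu$.

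The next step is to match Perkins' bound to the finite-strike bound (\ref{eq:ub2}). Because $\mu$ is purely atomic with support in $\K\cup\{K_{N+1}\}$, the barycentre quantities $\gm(\lb)$ and $\gp(\ub)$ defined in (\ref{eq:gammaplus})--(\ref{eq:gammaminus}) necessarily lie in $\K\cup\{K_{N+1}\}$, and the three expressions on the right of (\ref{eq:upper_bound}) are of the form appearing in (\ref{eq:ub2}). By varying the extension one may prescribe $\gm(\lb)$ and $\gp(\ub)$ within $\K\cap(S_0,\ub]$ and $\K\cap[\lb,S_0)$: concretely, place a small auxiliary atom of mass $\varepsilon$ just above $\lb$ (or just below $\ub$) and compensate at neighbouring traded strikes to leave the fixed call and digital prices unchanged. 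Sending $\varepsilon\to 0$ yields a sequence of $(\Pc,\mathcal{X}_D)$-market models whose double-no-touch prices converge to the minimum in (\ref{eq:ub2}).

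For the final dichotomy, suppose $\Pc\,\dnt$ equals the upper bound. Two cases arise: either the minimising pair of strikes in (\ref{eq:ub2}) is realised as $(\gm(\lb),\gp(\ub))$ for some exact extension (with $\varepsilon=0$), in which case Perkins' construction for that extension produces a $(\Pc,\mathcal{X}_D\cup\{\dnt\})$-market model; or attainment requires taking $\varepsilon\to 0$. In the second case the binding superhedge is pathwise strict except on a set of paths with $S_T=\lb$ or $S_T=\ub$, which is null under every continuous martingale model that matches the market on $\mathcal{X}_D$ -- yet the pricing operator charges zero cost for this strict excess. Combining the binding superhedge with a constant then yields a portfolio in $\Lin(\mathcal{X}_D\cup\{\dnt\})$ of negative $\Pc$-price which is almost surely positive under every model, i.e.\ a weak arbitrage. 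I expect the main obstacle to be the case analysis distinguishing which of the three terms in (\ref{eq:ub2}) is binding and whether the minimising strike equals $\lb$ or $\ub$ -- exactly the boundary behaviour responsible for the non-attainment phenomenon discussed before the theorem; the technical work is likely to be deferred to Appendix~\ref{ap:weakarb}.
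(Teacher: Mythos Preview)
Your derivation of the upper bound \eqref{eq:ub2} from Lemma~\ref{lem:mf_bounds} is correct and matches the paper. The approximation strategy---extending $C$ piecewise linearly with auxiliary kinks near the barriers and invoking Perkins' embedding via Theorem~\ref{thm:wflvr}---is also essentially the paper's route (the paper appeals to Lemma~\ref{lem:wkarb} for the extension and then checks carefully, via the representation \eqref{eq:upper_bound_2}, whether the auxiliary strikes $\lK,\uK$ change the minimum; your $\varepsilon$-atom idea is a less precise version of the same move).

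The genuine gap is in your weak-arbitrage construction. Your claim that the binding superhedge minus $\dnt$ is strictly positive ``except on $\{S_T\in\{\lb,\ub\}\}$'' is false. In the relevant boundary case the second term in \eqref{eq:ub2} is minimised at $K=\ub$, and a direct computation of $\uh^{II}(\ub)-\dnt$ shows it vanishes on the whole of $\{\iS_T>\lb,\ \sS_T<\ub\}$ (both sides equal~$1$) and on $\{\iS_T\le\lb,\ S_T\notin(\lb,\ub)\}$ (both sides equal~$0$); it is strictly positive precisely on $\{H_{\lb}\le T,\ S_T\in(\lb,\ub)\}$. Neither equality set is null in general---indeed $\mu$ may well have atoms at $\lb,\ub$, and the double no-touch event itself certainly need not be null---so ``adding a constant'' cannot manufacture a weak arbitrage out of a zero-price portfolio.

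The paper's construction is instead \emph{model-dependent}, as the definition of WA permits. Given a model $\Pr$: if $\Pr(H_{\lb}< T,\ S_T\in(\lb,\ub))>0$, take $X=\uh^{II}(\ub)-\dnt$, which has $\Pc X=0$, is $\ge 0$ pathwise, and is strictly positive with positive $\Pr$-probability. If instead $\Pr(H_{\lb}< T,\ S_T\in(\lb,\ub))=0$, one exploits the strict inequality $f(\ub):=(C(\ub)-P(\lb))/(\ub-\lb)<-\sD(\ub)$ directly: the portfolio long $\indic{S_T\ge\ub}$, long $\tfrac{1}{\ub-\lb}$ calls at $\ub$, short $\tfrac{1}{\ub-\lb}$ puts at $\lb$, selling $\tfrac{1}{\ub-\lb}$ forwards at $H_{\lb}$, has strictly negative $\Pc$-price $f(\ub)+\sD(\ub)$, and its payoff is nonnegative except on the $\Pr$-null set $\{H_{\lb}\le T,\ S_T\in(\lb,\ub)\}$. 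This two-case split, keyed to a property of the model rather than a pathwise statement, is the missing ingredient.
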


\begin{proof}
That \eqref{eq:ub2} is an upper bound is a direct consequence of
Lemma \ref{lem:mf_bounds} and the three terms correspond to the
three terms on the RHS of the upper bound in
\eqref{eq:mf_bounds}. Since there is no WA, by Lemma~\ref{lem:wkarb},
we can extend $C$ to a piecewise linear function on $\R_+$ which
satisfies \eqref{eq:Cbasic}--\eqref{eq:Ccnv} and
\eqref{eq:D_basic}. More precisely, let $i,j$ be such that
$K_i<\lb<K_{i+1}$ and $K_j<\ub<K_{j+1}$. Then we can take $C$
piecewise linear with kinks for $K\in \K\cup\{\lK,\uK\}$, where
$K_i<\lK<\lb$ and $K_j<\uK<\ub$ can be chosen arbitrary close to the
barriers. Note also that, using \eqref{eq:D_basic}, we have then
$C(\uK)=C(\ub)+\sD(\ub)(\ub-\uK)$, with a similar expression for
$C(\lK)$.  Consider the associated market model of Theorem
\ref{thm:wflvr} which achieves the upper bound
\eqref{eq:upper_bound}. We now argue that we can choose $\lK,\uK$ so
that \eqref{eq:ub2} approximates \eqref{eq:upper_bound} arbitrarily
closely. Since $C$ is piecewise linear we have that
\eqref{eq:upper_bound}, which can be expressed as
\eqref{eq:upper_bound_2}, is equal to \eqref{eq:ub2} but with $\K$
replaced by $\K\cup\{\lK,\uK\}$ and we just have to investigate
whether the addition of two strikes changes anything. We investigate
$\uK$, the case of $\lK$ is similar.  Note that we can make
$f(\uK):=(C(\uK)-P(\lb))/(\uK-\lb)$ as close to $f(\ub)$ as we want
by choosing $\uK$ sufficiently close to $\ub$. Hence if the minimum
in \eqref{eq:ub2} is strictly smaller than $\iD(\lb)+f(\ub)$ then
the addition of $\uK$ does not affect the minimum, and we note
further that in such a case, we may construct a market model
(assuming similar behaviour at $\lK$) using values of $\lK, \uK$
sufficiently close to $\lb, \ub$ respectively. Otherwise, suppose
the minimum in \eqref{eq:ub2} is achieved by
$\iD(\lb)-\sD(\ub)=\iD(\lb)+f(\ub)$. Then we have
$$
f(\uK)=\frac{C(\ub)+\sD(\ub)(\ub-\uK)-P(\lb)}{\uK-\lb}=
\frac{C(\ub)-P(\lb)}{\uK-\lb}-f(\ub) \frac{\ub-\uK}{\uK-\lb}=f(\ub),
$$
and hence the minimum in \eqref{eq:upper_bound_2} is also attained
by the first term and is equal to \eqref{eq:ub2}. Again, the
extension of $C$ allows us to construct a suitable market
model. Finally, consider the case where $f(\ub) < -\sD(\ub)$, and
the second term at $\ub$ is indeed the value of \eqref{eq:ub2}. Then
taking a sequence of models as described above, with $\lK, \uK$
converging to $\lb, \ub$ respectively, we get a suitable
approximating sequence. We finally show that in this case, if $\Pc$ prices double no-touch at \eqref{eq:ub2} then there is a weak arbitrage: suppose $\Pr$ is a model with $\Pr(H_{\lb} <
T, S_T \in (\lb,\ub)) > 0$. Then we can purchase the superhedge
$\ol{H}^{II}(\ub)$ and sell the double no-touch for zero initial net
cash flow, but with a positive probability of a positive reward (and
no chance of a loss). For all other models $\Pr$, we purchase a portfolio which is short $\frac{1}{\ub-\lb}$ puts
at $\lb$, long $\frac{1}{\ub-\lb}$ calls at $\ub$, and long the digital call at $\ub$; if the process hits $\lb$, we sell forward
$\frac{1}{\ub-\lb}$ units of the underlying. This has negative setup cost, since $f(\ub) < -\sD(\ub)$, and zero probability of a loss as now $\Pr(H_{\lb} < T, S_T \in
(\lb,\ub)) = 0$.
\end{proof}
\psfrag{b1}{$\lb$}
\psfrag{b2}{$\lb'$} \psfrag{ki1}{$K_{i-1}$} \psfrag{ki2}{$K_{i}$}
\psfrag{ki3}{$K_{i+1}$} \psfrag{ki4}{$K_{i+2}$}
The general case, where we do not assume that calls trade at the barriers, nor that there are suitable digital options, is slightly
more complex. The key point to understanding this case is to consider
models (or extensions of $C(\cdot)$ to the whole of $\R_+$) which
might maximise each of the individual terms (at $\ub$ or $\lb$) in
\eqref{eq:ub2}, and which agree with the call prices. Observe for example that at $\lb$ it is optimal to minimise the call price
$C(\lb)$, and also maximise $\iD(\lb)$ (at least for the first two terms in \eqref{eq:ub2}). If we convert this to a statement
about the call prices $C(\cdot)$ the aim becomes: minimise $C(\lb)$, and maximise $-C_{+}'(\lb)$. It is easy to see that choosing the
smallest value of $C(\lb)$ which maintains the convexity we also maximise $-C_-'(\lb)$. However this does not
quite work for $-C_{+}'(\lb)$, although we will `almost' be able to use it. It turns out (see
Lemma~\ref{lem:wkarb}) that even the non-attainable lower bound, corresponding to taking $-C_{-}'(\lb) = -C_{+}'(\lb)$, is still
consistent with no model-free arbitrage, but it is not consistent with no weak arbitrage. However, we will be able to find a sequence of
models under which the prices do converge to the optimal set of values. 

We now consider the bounds in more detail. Suppose $i$ is such that
$K_i \le \lb \le K_{i+1}$. By convexity, the value of $C(\lb)$ must lie
above the line passing through $\{(K_{i-1}, C(K_{i-1})),(K_{i},
C(K_{i}))\}$, and also the line passing through $\{(K_{i+1},
C(K_{i+1})),(K_{i+2}, C(K_{i+2}))\}$. So to minimise $C(\lb)$ we let it be:
\begin{equation}\label{eq:Clbdefn}
C(\lb) = \max\left\{C(K_i) + \frac{C(K_i) - C(K_{i-1})}{K_i -
   K_{i-1}}(\lb - K_i),
 C(K_{i+1}) + \frac{C(K_{i+1}) - C(K_{i+2})}{K_{i+2} -
   K_{i+1}}(K_{i+1}-\lb) \right\}
\end{equation}
and we set the corresponding `optimal' digital call price:
\begin{equation} \label{eq:dlbdefn}
\iD(\lb) = -\frac{C(\lb) - C(K_i)}{\lb-K_i}.
\end{equation}

The prices for the third term in \eqref{eq:ub2} are derived in a
similar manner: if we suppose $j \ge i+2$ (by assumption {\bf (A)})
is such that $K_j \le \ub \le K_{j+1}$, the resulting prices at $\ub$
are:
\begin{equation}\label{eq:Cubdefn}
C(\ub) = \max\left\{C(K_j) + \frac{C(K_j) - C(K_{j-1})}{K_j -
   K_{j-1}}(\ub - K_j),
 C(K_{j+1}) + \frac{C(K_{j+1}) - C(K_{j+2})}{K_{j+2} -
   K_{j+1}}(K_{j+1}-\ub) \right\}
\end{equation}
and
\begin{equation} \label{eq:dubdefn}
\sD(\ub) = -\frac{C(K_{j+1}) - C(\ub)}{K_{j+1}-\ub}.
\end{equation}
We note that assumption {\bf (A)} is necessary here to ensure that
the extended prices are free of arbitrage. Otherwise, we would not in general be able to add assets to the initial market in a
way that is consistent with \eqref{eq:Cbasic}.

\begin{theorem} \label{thm:ubfinstrike} Recall $\mathcal{X}$,
$\mathcal{X}_D$ defined via \eqref{eq:Xsetofassets},
\eqref{eq:XDdef} and \eqref{eq:stset_assume}.  Suppose \textbf{(A)}
holds, $\lb,\ub\notin \K$, and $\Pc$ admits no weak arbitrage on
$\mathcal{X}$.  Define the values of $C(\cdot)$ and $D(\cdot)$ at
$\lb,\ub$ respectively via
\eqref{eq:Clbdefn}--\eqref{eq:dubdefn}. Then if $\Pc$ admits no
model-free arbitrage on $\mathcal{X}_D\cup\{\dnt\}$, the price of the double
no-touch option is less than or equal to
\begin{equation}\label{eq:ub2B}
\left\{\iD(\lb) - \sD(\ub), \iD(\lb) + \inf_{K \in \mathbb{K} \cap (S_0,\infty)} \frac{C(K)-P(\lb)}{K-\lb}, (1-\sD(\ub)) + \inf_{K \in \mathbb{K} \cap [0,S_0)}
   \frac{P(K)-C(\ub)}{\ub-K}\right\}.
\end{equation}
Further, there exists a sequence of $(\Pc,\mathcal{X})$-market models which approximate the upper bound.
Finally, if $\Pc$ attains the upper bound, and when extended via \eqref{eq:Clbdefn}--\eqref{eq:dubdefn} admits no WA
on $\mathcal{X}_D\cup\{\dnt\}$ for $\K\cup\{\lb,\ub\}$ then there
exists a $(\mathcal{P},\mathcal{X}_D\cup\{\dnt\})$-market model.
\end{theorem}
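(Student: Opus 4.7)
The strategy is to run the proof of Theorem~\ref{thm:ubfinstrike1} on the \emph{extended} market with strikes $\K\cup\{\lb,\ub\}$ and digital prices at the barriers given by \eqref{eq:dlbdefn} and \eqref{eq:dubdefn}. The heuristic behind \eqref{eq:Clbdefn}--\eqref{eq:dubdefn} is that these formulas pick the smallest values of $C(\lb), C(\ub)$ and the largest one-sided slopes $\iD(\lb), \sD(\ub)$ consistent with convexity, which makes the three superhedges from Lemma~\ref{lem:mf_bounds} as cheap as possible and hence yields the tightest upper bound one could hope to obtain from them.

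First I would establish \eqref{eq:ub2B}. Assumption \textbf{(A)} together with no WA on $\mathcal{X}$, via Lemma~\ref{lem:wkarb}, guarantee that the enlarged price system on $\K\cup\{\lb,\ub\}$ still admits no model-free arbitrage; combined with the no-model-free-arbitrage hypothesis on $\mathcal{X}_D\cup\{\dnt\}$, Lemma~\ref{lem:mf_bounds} gives the bound, and direct substitution using \eqref{eq:dlbdefn}, \eqref{eq:dubdefn} identifies $\Pc(\uh^I), \Pc(\uh^{II}(K)), \Pc(\uh^{III}(K))$ as the three expressions inside the minimum in \eqref{eq:ub2B}. To check that the infima are in fact over $K\in\K$ (rather than over $\K\cup\{\lb,\ub\}$) one notes that any candidate minimum contributed by $K=\lb$ or $K=\ub$ equals $\iD(\lb)-\sD(\ub)$ by the extremal choice \eqref{eq:Clbdefn}--\eqref{eq:dubdefn}, and so is already accounted for by the first term.

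For the approximating sequence of $(\Pc,\mathcal{X})$-market models I would mimic the piecewise-linear construction from the proof of Theorem~\ref{thm:ubfinstrike1}: extend $C$ to a piecewise linear function on $\R_+$ with additional kinks at $\lK\in(K_i,\lb)$ and $\uK\in(K_j,\ub)$ chosen so that the one-sided slopes at $\lK,\uK$ converge, respectively, to $-\iD(\lb)$ and $-\sD(\ub)$ as $\lK\uparrow\lb$ and $\uK\uparrow\ub$. Each such extension satisfies the hypotheses of Theorem~\ref{thm:wflvr}; the resulting market model matches the original prices on $\mathcal{X}$, and Theorem~\ref{thm:wflvr} computes its price for $\dnt$ as the analogue of \eqref{eq:upper_bound_2} for the extended strike set. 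A continuity argument in $(\lK,\uK)$ then shows that this price converges to \eqref{eq:ub2B}, furnishing the required sequence.

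Finally, if $\Pc$ attains the upper bound and the extension to $\K\cup\{\lb,\ub\}$ admits no WA on $\mathcal{X}_D\cup\{\dnt\}$, then the extended market satisfies the hypotheses of Theorem~\ref{thm:ubfinstrike1} (since $\lb,\ub$ are now traded strikes with well-defined digital prices), and that theorem directly yields a $(\Pc,\mathcal{X}_D\cup\{\dnt\})$-market model. The principal difficulty, as in the proof of Theorem~\ref{thm:ubfinstrike1}, is the case analysis for the approximation step: one must verify that the auxiliary strikes $\lK,\uK$ either produce values strictly larger than the bound \eqref{eq:ub2B} in the limit, or else collapse onto the first term via the identity analogous to $f(\uK)=f(\ub)$ appearing there; this in turn relies crucially on \eqref{eq:Clbdefn}--\eqref{eq:dubdefn} being the extremal choices.
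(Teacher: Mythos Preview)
Your proposal has two gaps, the second of which is substantive.

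For the bound \eqref{eq:ub2B}, you invoke Lemma~\ref{lem:mf_bounds} on the enlarged strike set $\K\cup\{\lb,\ub\}$, but that requires no model-free arbitrage on $\tilde{\mathcal{X}}_D\cup\{\dnt\}$, where $\tilde{\mathcal{X}}_D$ now contains the \emph{calls} at $\lb,\ub$. The hypothesis only gives no MFA on $\mathcal{X}_D\cup\{\dnt\}$ (digitals at the barriers, but not the calls there), and Lemma~\ref{lem:wkarb} gives no MFA on $\tilde{\mathcal{X}}_D$ alone; these do not formally combine, since $\uh^{II}(K),\uh^{III}(K)$ contain a put at $\lb$ (respectively a call at $\ub$) which lies outside $\mathcal{X}_D$. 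The paper sidesteps this entirely: it constructs, for each of $\uh^{I},\uh^{II}(K),\uh^{III}(K)$, explicit dominating portfolios built solely from calls at the traded strikes $K_{i-1},\ldots,K_{i+2}$ and $K_{j-1},\ldots,K_{j+2}$, and shows by direct computation that the minimum of their prices equals the three terms in \eqref{eq:ub2B}.

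More seriously, your claim that ``any candidate minimum contributed by $K=\lb$ or $K=\ub$ equals $\iD(\lb)-\sD(\ub)$'' is false. With the extremal choices \eqref{eq:Clbdefn}--\eqref{eq:dubdefn} there is no identity forcing $\frac{C(\ub)-P(\lb)}{\ub-\lb}=-\sD(\ub)$, and the second term of \eqref{eq:ub2} at $K=\ub$ can be \emph{strictly smaller} than $\iD(\lb)-\sD(\ub)$. This is precisely the case the paper isolates as the heart of the tightness proof. When it occurs, applying Theorem~\ref{thm:ubfinstrike1} on $\K\cup\{\lb,\ub\}$ with the extremal $C(\ub)$ only produces models whose double no-touch price is bounded by \eqref{eq:ub2}, which is then strictly below \eqref{eq:ub2B}; your approximation scheme therefore cannot reach \eqref{eq:ub2B}. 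The paper's remedy is to \emph{abandon} the extremal $C(\ub)$ in favour of a specific larger value $C^{*}(\ub)$ (equation~\eqref{eq:Cstardefn}), engineered so that the second term at $K=\ub$ now coincides with the infimum over $\K$ while the first and third terms remain at least as large. Verifying that this $C^{*}(\ub)$ is arbitrage-consistent and that \eqref{eq:ub2} for the modified prices equals \eqref{eq:ub2B} (the paper's conditions (i)--(iii)) is the genuine technical content here; it is a new construction, not an analogue of the $f(\uK)=f(\ub)$ collapse from Theorem~\ref{thm:ubfinstrike1}.
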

We defer the proof to Appendix \ref{ap:weakarb}. Note that exact conditions determining whether the no WA property is
met are given in Lemma~\ref{lem:wkarb}. As will be clear from the proof, we could take $K\in \K\cap (S_0,K_{j+1}]$ and $K\in \K\cap [K_i,S_0)$ in the second and third terms in \eqref{eq:ub2B} respectively.  However, unlike in previous theorems, we need to include strikes $K_i$ and $K_{j+1}$ as we don't have the barriers as traded strikes.

This result is our final theorem concerning the structure of the
option prices in this setting. We want to stress the fairly pleasing
structure that all our results exhibit: we are able to exactly specify
prices at which the options may trade without exhibiting model-free
arbitrage. Moreover, we are able to specify the cases where there
exist market models for a given set of prices. In general, the two
sets are exclusive, and with the possible exception of a boundary
case, constitute all prices. On the boundary, if there is no model, we
are able to show the existence of an arbitrage of a weaker form than
the model-free arbitrage.

\section{Applications}\label{sec:app}
We turn now to possible applications of our results and present some
numerical simulations. We keep the discussion here rather brief and
refer the reader to our paper on double touch options \cite{CoxObloj:08} for more
details on implementation and application of robust hedging arguments.

The first natural application is for pricing. Namely, seeing call
prices in the market, we can instantly deduce robust price bounds on
the double no-touch options using Theorems \ref{thm:finstrikelower} and
\ref{thm:ubfinstrike1}. However, typically these bounds are too wide
to be of any practical use. This will be the case for example in
foreign exchange markets, where double no-touch options are liquid and bid-aks
spreads are very small. In fact in major currency pairs, these options
are so liquid that the price is given by the market --- i.e.\ should be
treated as an input to the model, see Carr and Crosby
\cite{CarrCrosby:08}.

The second application is for robust hedging --- and this is where we
believe our techniques can be competitive. Standard delta/vega hedging
techniques for double no-touch options face several difficulties, such
as:
\begin{itemize}
\item \emph{model risk} -- model mis-specification can result in incorrect
hedges,
\item \emph{transaction costs} -- these can run high as vega hedging is
expensive,
\item \emph{discrete monitoring} -- in practice hedges can only be updated
discretely and the more often they are updated the larger the
transaction costs,
\item \emph{gamma exposure} -- when the option is close to the barrier close
to maturity the delta is growing rapidly, in practice the trader
then stops delta-hedging and takes a view on the market.
\end{itemize}
Our robust hedges provide a simple alternative which avoids all of the
above-listed problems. Specifically, say a trader sells a double
no-touch option struck at $(\lb,\ub)$ for a fair premium $p$. She can
then set up one of our super-hedges $\uh^{i}(K)$, for $i=I,II,III$ for
a premium $\Pc\uh^{i}(K)$ which will be typically larger then $p$. The
superhedge then requires just that she monitors if the barriers are
crossed and if so that she buys or sells appropriate amounts of
forwards. Then at maturity $T$ her portfolio (hedging error) is
$$X=\uh^{i}(K)-\dnt\geq 0-\Pc\uh^{i}(K)+p,$$
which has zero expectation and is bounded below by
$p-\Pc\uh^{i}(K)$. Depending on the risk aversion and gravity of
problems related to delta/vega-hedging listed above, this may be an
appealing way of hedging the double no-touch option.\\
We give a simple example. Consider the following Heston model (based
on the parameter estimates for USD/JPY given in \cite{CarrWu:07}):
\begin{equation}
\label{eq:heston}
\left\{
\begin{array}{rclcl}
  dS_t&=&\sqrt{v_t}S_tdW^1_t, & & S_0=S_0,\ v_0=\sigma_0\\
  dv_t&=&\kappa(\theta-v_t)dt+\xi\sqrt{v_t}dW^2_t, &&
  d\langle W^1,W^2\rangle_t=\rho dt,
\end{array}
\right.
\end{equation}
with parameters 
\begin{equation}
\label{eq:heston_par}
S_0=2.006,\ \sigma_0=0.025,\ \kappa=0.559,\ \theta=0.02,\ \xi=0.26\textrm{ and }\rho=0.076,
\end{equation}
and a double no-touch option with 6 month maturity struck at
$\lb=1.95$, $\ub=2.05$. The numerically evaluated fair price of this
option in this model is $p=0.3496$. The cheapest of our superhedges is
$\uh^{I}$, which was just a digital option paying $1$ when
$1.9364<S_T< 2.0636$ (the closest strikes available to the
barriers). The most expensive subhedge was $\lh^{I}$ which consisted
in doing nothing. We compare outcomes of two hedging approaches of a
short position in a double no-touch option: standard delta/vega
hedging but using BS deltas (with at-the-money implied volatility) and
the robust approach outlined above. We assume proportional transaction
costs of $1\%$ when trading in calls or puts and $0.02\%$ when trading
in the underlying. The delta/vega hedge is rebalanced daily and we
stop hedging when the deltas are too large, more precisely we stop
hedging when the transaction costs associated with re-balancing the
hedge are above $0.02$. The distribution of hedging errors from both
strategies over 10000 Monte Carlo runs is given in Figure
\ref{fig:hedgeerr}.
\begin{figure}[htbp]
\label{fig:hedgeerr}
\begin{center}
\includegraphics[height=6.5cm]{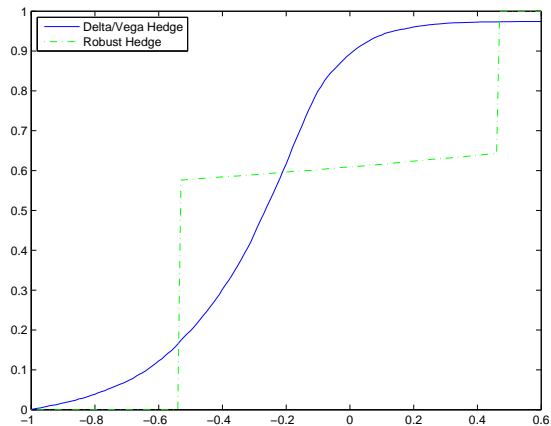}
\end{center}
\caption{Cumulative distributions of hedging errors of a short position in a double no-touch option using delta/vega hedging and robust hedging.}
\end{figure}
The advantage of a $X$ being bounded below is clearly visible. The
average transaction costs were $0.32$ and $0.13$ for delta/vega and
robust hedges respectively which shifted the average hedging errors
for the delta/vega hedging to the left. In consequence, an exponential
utility trader would prefer our robust hedge with utility $-0.2755$
against $-0.4327$ from delta/vega hedging errors.

Naturally, there are a number of ways of improving on the standard
delta/vega hedge, as well as other robust hedging approaches --- see
the discussion in \cite{CoxObloj:08}.  More involved numerical
analysis would be necessary to judge whether our strategies can still
outperform, and in what sense, these improved hedging strategies. In
light of the above results, this seems like an interesting practical
issue to pursue. Finally, we note that hedging a single barrier option
in any way is impractical due to transaction and operational costs and
in practise banks hedge huge portfolios of barrier options rather than
a single option. An adaptation of our techniques would be thus
required before they could be implemented in real markets.

\appendix
\section{Weak Arbitrage}
\label{ap:weakarb}

In this appendix we are interested in the possible extensions of the
pricing operator from an arbitrage-free set of call prices $C(K_i)$,
$0 = K_0 < K_1 < \ldots < K_N$ to include a call $C(\lb)$ and a
digital call option $\iD(\lb)$ where $\lb \in (K_{i},K_{i+1})$, and a call $C(\ub)$ and a digital call $\sD(\ub)$, where $\ub
\in (K_{j},K_{j+1})$ for {$j \ge i+2$.}

To avoid simple arbitrages if we add in a call option with strike at $x \in
(K_{l},K_{l+1})$, we need $C(x)$ to satisfy:
\begin{eqnarray}
C(x) & \le & C(K_{l}) + (x-K_{l})\frac{C(K_{l+1}) -
 C(K_{l})}{K_{l+1}-K_{l}} \label{eq:Cup}\\
C(x) & \ge & C(K_{l}) +
(x-K_{l})\frac{C(K_{l})-C(K_{l-1})}{K_{l}-K_{l-1}}\label{eq:Clow1}\\
C(x) & \ge & C(K_{l+1}) -
(K_{l+1}-x)\frac{C(K_{l+2})-C(K_{l+1})}{K_{l+2}-K_{l+1}} \label{eq:Clow2}
\end{eqnarray}
so we require these bounds to hold with $(x,l) = (\lb,i)$ and $(x,l) =
(\ub,j)$. Some further simple arbitrages imply $\iD(x)$ and $\sD(x)$ satisfy
\begin{equation}\label{eq:Dbounds}
\frac{C(x) - C(K_{l+1})}{K_{l+1}-x} \le \iD(x) \le \sD(x) \le \frac{C(K_{l})-
 C(x)}{x-K_{l}}.
\end{equation}
Depending on $x$, one of the lower bounds \eqref{eq:Clow1} or
\eqref{eq:Clow2} may be redundant. Specifically, there exists $b_* \in
(K_{i},K_{i+1})$ such that the right-hand sides agree for $x = b_*$,
\eqref{eq:Clow1} is larger for $x < b_*$, and \eqref{eq:Clow2} is
larger for $x > b_*$, and there is a similar point $b^*\in(K_{j},K_{j+1})$
at which the corresponding versions of \eqref{eq:Clow1} and
\eqref{eq:Clow2} agree.

We now prove a result which connects the traded prices of these
options, the existence of both model-free arbitrages and weak
arbitrages, and the existence of a model which agrees with a given
pricing operator:

\begin{lemma} \label{lem:wkarb} Recall $\mathcal{X}$ given via
\eqref{eq:Xsetofassets} and \eqref{eq:stset_assume}.  {Suppose
  $\mathbb{K}$ is finite $\lb, \ub \not\in \mathbb{K}$, {\bf (A)}
  holds, and $\mathcal{P}$ admits no WA on $\mathcal{X}$. Let  $\mathcal{X}_D = \mathcal{X} \cup \{(S_T-\lb)^+,\indic{S_T > \lb},
  (S_T-\ub)^+, \indic{S_T \ge \ub}\}$.}  Then if $C(\lb), \iD(\lb),
C(\ub), \sD(\ub)$ satisfy \eqref{eq:Cup}--\eqref{eq:Dbounds}, there
exists an extension of $\mathcal{P}$ to $\mathcal{X}_D$ with
\[
\mathcal{P} \indic{S_T > \lb} = \iD(\lb), \quad \mathcal{P} (S_T -
\lb)^+ = C(\lb), \quad \mathcal{P} \indic{S_T \ge \ub} = \sD(\ub), \quad
\mathcal{P} (S_T - \ub)^+ = C(\ub)
\]
such that $\mathcal{P}$ admits no model-free arbitrage. Conversely,
if any of \eqref{eq:Cup}--\eqref{eq:Dbounds} fail, then there exists
a model-free arbitrage.

Moreover, if there is no model-free arbitrage, there is a
weak-arbitrage if and only if either $\lb \ge b_*$,
\begin{equation} \label{eq:wa1}
 C(\lb) = C(K_{i+1}) - (K_{i+1} -\lb)\frac{C(K_{i+2}) -C(K_{i+1})}{K_{i+2}
   -K_{i+1}} \mbox{ and } \frac{C(\lb) - C(K_{i+1})}{K_{i+1}-\lb} <
 \iD(\lb) \le \frac{C(K_{i})- C(\lb)}{\lb-K_{i}},
\end{equation}
or $\ub \le b^*$,
\begin{equation} \label{eq:wa2}
 C(\ub) =  C(K_{j}) +
 (\ub-K_{j})\frac{C(K_{j})-C(K_{j-1})}{K_{j}-K_{j-1}} \mbox{
   and } \frac{C(\ub) - C(K_{j+1})}{K_{j+1}-\ub} \le \sD(\ub) < \frac{C(K_{j})-
   C(\ub)}{\ub-K_{j}}.
\end{equation}
Finally, if there is no weak arbitrage, then there exists a
$(\mathcal{P}, \mathcal{X_D})$-market model. Furthermore, we can take
the model such that $C(\cdot)$ is piecewise linear and has kinks only
at
$K_1<\ldots<K_i<\lb<\lK<K_{i+1}<\ldots<K_j<\uK<\ub<K_{j+1}<\ldots<K_N$,
where $\lK,\uK$ are additional strikes which can be chosen arbitrary
close to the barriers $\lb$ and $\ub$ respectively.
\end{lemma}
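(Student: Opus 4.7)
The lemma makes three claims: (a) the characterisation of model-free arbitrage via the static bounds \eqref{eq:Cup}--\eqref{eq:Dbounds}; (b) the characterisation of weak arbitrage via \eqref{eq:wa1}--\eqref{eq:wa2} under no model-free arbitrage; and (c) the construction of a piecewise linear market model when no weak arbitrage is present. I will prove them in this order, using (c) to close off the converse direction of (a).

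Part (a) is mostly bookkeeping. If one of \eqref{eq:Cup}--\eqref{eq:Dbounds} fails at $\lb$ (the argument at $\ub$ being identical), an explicit static portfolio produces a model-free arbitrage: failure of \eqref{eq:Cup} gives a butterfly on $(K_i, \lb, K_{i+1})$; failure of \eqref{eq:Clow1} or \eqref{eq:Clow2} gives a short call at $\lb$ hedged by a convex combination of calls at $K_{i-1}, K_i$ or $K_{i+1}, K_{i+2}$ respectively; and each side of \eqref{eq:Dbounds} fails gives a comparison of the digital with a normalised call spread over $[K_i, \lb]$ or $[\lb, K_{i+1}]$. The converse direction, that \eqref{eq:Cup}--\eqref{eq:Dbounds} imply absence of model-free arbitrage, follows from the market model construction in part (c) together with a standard perturbation: any model-free arbitrage would survive small perturbations of the prices, contradicting the existence of an exact market model for the perturbed data.

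For the ``if'' direction of (b) I treat \eqref{eq:wa1}; the case \eqref{eq:wa2} is symmetric. The equality in \eqref{eq:wa1} places $(\lb, C(\lb))$, $(K_{i+1}, C(K_{i+1}))$, $(K_{i+2}, C(K_{i+2}))$ on a common line, so the normalised butterfly $X_B \in \Lin(\mathcal X)$ with peak at $K_{i+1}$ on $[\lb, K_{i+2}]$ satisfies $\Pc X_B = 0$, is pathwise non-negative, and is strictly positive precisely on $\{S_T \in (\lb, K_{i+2})\}$. Given a model $\Pr$, if $\Pr(S_T \in (\lb, K_{i+1})) > 0$ then $X_B$ is a weak arbitrage for $\Pr$. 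Otherwise $\Pr(S_T \in (\lb, K_{i+1})) = 0$, and for any sufficiently small $\varepsilon > 0$ the portfolio
\begin{equation*}
X \;=\; \frac{(S_T-\lb)^+ - (S_T-K_{i+1})^+}{K_{i+1}-\lb} \;-\; \indic{S_T > \lb} \;+\; \varepsilon
\end{equation*}
has $\Pc X = (C(\lb)-C(K_{i+1}))/(K_{i+1}-\lb) - \iD(\lb) + \varepsilon < 0$ by the strict inequality in \eqref{eq:wa1}, and equals $\varepsilon > 0$ on the complement of $(\lb, K_{i+1})$, so it is $\Pr$-a.s.\ strictly positive. Either way we obtain a weak arbitrage.

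For the ``only if'' direction of (b) together with (c), suppose neither \eqref{eq:wa1} nor \eqref{eq:wa2} holds. I extend $C$ to a piecewise linear convex function $\tilde C$ on $\R_+$ with kinks at $\mathbb K \cup \{\lK, \uK\}$ (plus one strike past $K_N$ after which $\tilde C \equiv 0$), where $\lK \in (\lb, K_{i+1})$ and $\uK \in (K_j, \ub)$. The slopes of $\tilde C$ on $(\lb, \lK)$ and $(\uK, \ub)$ are pinned at $-\iD(\lb)$ and $-\sD(\ub)$ respectively, so that \eqref{eq:D_basic} is realised by $\tilde C$; convexity then forces the slopes on $(\lK, K_{i+1})$ and $(K_j, \uK)$ to lie in explicit intervals whose endpoints involve the one-sided derivatives of $C$ at $K_{i+1}$ and $K_j$. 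A case analysis on whether $\lb \lessgtr b_*$ and on whether \eqref{eq:Dbounds} at $\lb$ is attained with equality shows that these intervals are non-empty for $\lK$ sufficiently close to $\lb$ precisely when \eqref{eq:wa1} fails; the analogous statement at $\uK$ corresponds to the failure of \eqref{eq:wa2}. Proposition~\ref{prop:wflvr} applied to the discrete probability measure $\mu$ defined from $\tilde C$ by \eqref{eq:mudef} then produces the desired $(\Pc, \mathcal X_D)$-market model. The main obstacle is this final case analysis: the limit slopes on $(\lK, K_{i+1})$ as $\lK \to \lb$ coincide with the critical quantities appearing in the boundary cases of \eqref{eq:Clow2} and \eqref{eq:Dbounds}, so the feasibility of the kink depends delicately on which of these boundary cases are attained simultaneously (precisely the content of \eqref{eq:wa1}).
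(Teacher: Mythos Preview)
Your proposal is correct and follows essentially the same structure as the paper's proof: explicit static portfolios for the model-free arbitrages when \eqref{eq:Cup}--\eqref{eq:Dbounds} fail, the same butterfly/call-spread case split on $\Pr(S_T\in(\lb,K_{i+1}))$ for the weak arbitrage under \eqref{eq:wa1}, and the same piecewise linear extension with auxiliary kinks $\lK,\uK$ for the market-model construction.

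The one genuine point of departure is how you establish \emph{no model-free arbitrage} on the boundary where \eqref{eq:wa1} or \eqref{eq:wa2} holds. You use a perturbation argument: move the prices $C(\lb),\iD(\lb),C(\ub),\sD(\ub)$ slightly into the interior where part~(c) furnishes a market model, and observe that a fixed $X\in\Lin(\mathcal X_D)$ with $X\ge 0$ and $\Pc X<0$ would still have $\tilde\Pc X<0$ after a small enough perturbation, contradicting the existence of the model. The paper instead argues directly: any $H\in\Lin(\mathcal X\cup\{(S_T-\lb)^+,(S_T-\ub)^+\})$ is continuous in $S_T$, so a portfolio realising \eqref{eq:barrierarb1} or \eqref{eq:barrierarb2} against $\indic{S_T>\lb}$ would equally be an arbitrage against $\indic{S_T\ge\lb}$; but one can extend $C(\cdot)$ so that $-C_-'(\lb)=\iD(\lb)$ and obtain a model pricing $\indic{S_T\ge\lb}$ at $\iD(\lb)$, a contradiction. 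Your route is cleaner and more transferable, though you should note that the feasibility of the perturbation (staying inside \eqref{eq:Cup}--\eqref{eq:Dbounds} while escaping \eqref{eq:wa1}--\eqref{eq:wa2}) requires a small check in the degenerate case where the upper and lower bounds on $C(\lb)$ coincide; there the bounds in \eqref{eq:Dbounds} also collapse, so \eqref{eq:wa1} cannot hold and no perturbation is needed. The paper's continuity argument, on the other hand, isolates exactly why the open versus closed digital distinction is what prevents a model in the boundary case.
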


\begin{proof} 
The existence of a model-free arbitrage if any of
\eqref{eq:Cup}--\eqref{eq:Dbounds} do not hold is straightforward,
and can generally be read off from the inequality --- \eg{} if
\eqref{eq:Clow2} fails at $\lb$, the arbitrage is to buy the call
with strike $C(\lb)$, sell a call with strike $K_{i+1}$ and buy
$\frac{K_{i+1}-\lb}{K_{i+2}-K_{i+1}}$ units of the call with strike
$K_{i+2}$, and sell the same number of calls with strike $K_{i+1}$.

So suppose \eqref{eq:Cup}--\eqref{eq:Dbounds} hold and extend
$\mathcal{P}$ to $\mathcal{X}_D$ with prices as given in the
statement of the lemma. Assume that neither of \eqref{eq:wa1} or
\eqref{eq:wa2} hold. Then it is easy to check that we can find a piecewise linear
extension of the function $C(\cdot)$ which passes through each of the
specified call prices and satisfies \eqref{eq:Cbasic}--\eqref{eq:Ccnv} and \eqref{eq:D_basic}. Further, $C(\cdot)$ may be taken as in the statement of the Lemma, with $\lK,\uK$ arbitrary close to the barriers $\lb,\ub$ respectively. Proposition \ref{prop:wflvr} then grants
existence of a $(\mathcal{P}, \mathcal{X_D})$-market model. In particular there is no weak arbitrage and hence no model-free arbitrage.

It remains to argue that if either of \eqref{eq:wa1} or \eqref{eq:wa2} hold then there is a weak-arbitrage, but no model-free arbitrage.

We first rule out a model-free arbitrage. Since, by Proposition~\ref{prop:calls_WA}, $\mathcal{P}$ on
 $\mathcal{X}\cup \{(S_T-\lb)^+, (S_T-\ub)^+\}$ admits no WA we must only show that the
addition of the digital options does not introduce an
arbitrage. Consider initially the case where we add simply a digital
option at $\lb$. Recalling \eqref{eq:Pnoarb}, it is therefore
sufficient to show that there does not exist an $H \in
\Lin(\mathcal{X}\cup\{(S_T-\lb)^+,(S_T-\ub)^+\})$, such that either
\begin{align}
 \indic{S_T > \lb} + H \ge 0, &\quad \mbox{ and } \mathcal{P}H <
 -\iD(\lb)\label{eq:barrierarb1}\\
 \intertext{or}
 -\indic{S_T > \lb} + H \ge 0, &\quad \mbox{ and } \mathcal{P}H <
 \iD(\lb).\label{eq:barrierarb2}
\end{align}
However, any possible choice of $H$ must be the combination of a sum
of call options and forward options. Since only finitely many terms
may be included, we necessarily have $H$ as a continuous function in
$S_T$: more precisely, for $(S_t) \in \pathsp$, and any $\eps, \delta
0$, there exist paths $(S_t') \in \pathsp$ such that $S_T' \in (S_T
-\delta,S_T)$ and $|H((S_t)) -H((S_t'))| < \eps$. Clearly, we may
also replace the first conclusion with $S_T' \in (S_T,S_T +\delta)$.

So suppose \eqref{eq:barrierarb2} holds, and $H((S_t)) < 1$ for some
$(S_t) \in \pathsp$ with $S_T = \lb$, then by the continuity property
there exist $(S_t')\in\pathsp$ such that $S_T' > \lb$ and $H((S_t'))
< 1$, contradicting \eqref{eq:barrierarb2}. Hence, we can conclude
that if an arbitrage exists for $\indic{S_T > \lb}$, the same
portfolio $H$ is an arbitrage for $\indic{S_T \ge \lb}$ if $\mathcal{P}\indic{S_T \ge \lb} = \iD(\lb)$. However, it is possible
to construct an admissible curve $C(\cdot)$ such that $C(\cdot)$ matches the
given prices at the $K_i$'s and $\lb$, and has
\[
-C_{-}'(\lb) =-\iD(\lb)\in \left( \frac{C(\lb) - C(K_{i+1})}{K_{i+1}-\lb},
 \frac{C(K_{i})- C(\lb)}{\lb-K_{i}}\right],
\]
and hence by arguments of Proposition \ref{prop:wflvr}, there exists a model
under which the calls and the digital option $\indic{S_T \ge \lb}$ are fairly priced, and so there cannot be an arbitrage. More
generally, if we wish to consider adding both digital options, a similar argument will work, the only aspect we need to be careful
about is that we can find a suitable extension to the call price function, however this is guaranteed by assumption {\bf (A)}.

Finally, we construct a weak arbitrage if either of
\eqref{eq:wa1} or \eqref{eq:wa2} hold. Suppose \eqref{eq:wa1} holds. Let $\Pr$ be a model, then
\begin{enumerate}
\item if $\Pr(S_T \in (\lb,K_{i+1})) =0$ we sell short the digital option, buy $\frac{1}{K_{i+1}-\lb}$ units of the call with strike $\lb$, and sell the same number of units of the call with
 strike $K_{i+1}$. By \eqref{eq:wa1} we receive cash
 initially, but the final payoff is strictly negative only for
 $S_T \in (\lb,K_{i+1})$,
\item else $\Pr(S_T \in (\lb, K_{i+1})) >0$ and we purchase $\frac{1}{K_{i+1}-\lb}$ units of the call with strike $\lb$, sell
 $\frac{1}{K_{i+1} - \lb}+\frac{1}{K_{i+2} - K_{i+1}}$ units of the call
 with strike $K_{i+1}$ and buy $\frac{1}{K_{i+2} - K_{i+1}}$ units of the call with
 strike $K_{i+2}$. By \eqref{eq:wa1} this costs nothing initially, has non-negative payoff which is further strictly positive payoff in $(\lb,K_{i+1})$.
\end{enumerate}
A similar argument gives weak arbitrage when \eqref{eq:wa2} holds.
\end{proof}
\begin{cor} \label{cor:wkarb} {Suppose $\mathcal{P}$ on
  $\mathcal{X}_D$ does not admit a model-free arbitrage. Then for
  all $\eps > 0$, there exists a pricing operator $\mathcal{P}_\eps$
  on $\mathcal{X}_D$ such that $|\mathcal{P}X - \mathcal{P}_\eps X|
  < \eps$ for all $X \in \mathcal{X}_D$, and such that a
  $(\mathcal{P}_\eps, \mathcal{X}_D)$-market model exists. }
\end{cor}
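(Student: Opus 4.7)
The plan is to reduce the statement to Lemma~\ref{lem:wkarb}. If $\mathcal{P}$ already admits no weak arbitrage on $\mathcal{X}_D$ then by that lemma a $(\mathcal{P},\mathcal{X}_D)$-market model exists and one simply sets $\mathcal{P}_\eps = \mathcal{P}$. The interesting case is when \eqref{eq:wa1} or \eqref{eq:wa2} holds, and the aim is then to perturb the prices by less than $\eps$ so as to destroy the boundary equalities that trigger weak arbitrage. My idea is to convex-combine $\mathcal{P}$ with a benchmark pricing operator whose call prices lie strictly in the interior of the admissibility region \eqref{eq:Cup}--\eqref{eq:Dbounds}.

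For the benchmark I would take $\mathcal{P}^*$ to be the pricing operator arising from a Black--Scholes model with initial value $S_0$ and some volatility $\sigma>0$, with $\iD^*(\lb) = -C^{*\prime}_{+}(\lb)$ and $\sD^*(\ub) = -C^{*\prime}_{-}(\ub)$. By Proposition~\ref{prop:wflvr} this $\mathcal{P}^*$ has a $(\mathcal{P}^*,\mathcal{X}_D)$-market model, and strict convexity of $C^*$ ensures that every inequality in \eqref{eq:Cup}--\eqref{eq:Dbounds} is strict under $\mathcal{P}^*$. For $\lambda\in(0,1)$ I would set $\mathcal{P}_\lambda := (1-\lambda)\mathcal{P} + \lambda\mathcal{P}^*$. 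The constraints in \eqref{eq:Cup}--\eqref{eq:Dbounds} are linear in the prices and hold for both $\mathcal{P}$ and $\mathcal{P}^*$, so they hold for $\mathcal{P}_\lambda$ and hence $\mathcal{P}_\lambda$ admits no model-free arbitrage. Crucially, the equation part of \eqref{eq:wa1} says that $C(\lb)$ lies exactly on the boundary \eqref{eq:Clow2}; this boundary is strict under $\mathcal{P}^*$, so the same equation applied to $C_\lambda(\lb) = (1-\lambda)C(\lb)+\lambda C^*(\lb)$ becomes a strict inequality for every $\lambda>0$. Thus \eqref{eq:wa1} fails for $\mathcal{P}_\lambda$; the identical argument at $\ub$ shows \eqref{eq:wa2} also fails, so $\mathcal{P}_\lambda$ admits no weak arbitrage and by Lemma~\ref{lem:wkarb} possesses a $(\mathcal{P}_\lambda,\mathcal{X}_D)$-market model.

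To conclude, only finitely many assets in $\mathcal{X}_D$ carry non-trivial prices -- calls at strikes in $\K\cup\{\lb,\ub\}$ and the two digital calls -- since constants and forwards are priced identically by $\mathcal{P}$ and $\mathcal{P}_\lambda$. Letting $M$ denote the maximum of $|\mathcal{P}^*X - \mathcal{P}X|$ over this finite set, the bound $|\mathcal{P}_\lambda X - \mathcal{P} X| = \lambda|\mathcal{P}^*X - \mathcal{P}X| \le \lambda M$ holds uniformly in $X\in\mathcal{X}_D$, so choosing $\lambda<\eps/M$ and setting $\mathcal{P}_\eps := \mathcal{P}_\lambda$ completes the construction. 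The main conceptual content -- and the only step requiring any real care in writing out -- is the observation that the weak-arbitrage characterisation in Lemma~\ref{lem:wkarb} hinges on a single equality in the convex structure of admissible call prices, which an arbitrarily small positive mixture with a strictly convex benchmark immediately breaks.
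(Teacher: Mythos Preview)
Your argument is correct and complete in all essentials; it just takes a different route from the paper. The paper's proof is a direct, surgical perturbation: if \eqref{eq:wa1} holds, it simply replaces $C(\lb)$ by $C(\lb)+\delta$ (and, if this pushes $\iD(\lb)$ outside the range \eqref{eq:Dbounds}, adjusts $\iD(\lb)$ by $\delta/(K_{i+1}-\lb)$), thereby moving $C(\lb)$ strictly off the boundary \eqref{eq:Clow2} while leaving every other price untouched; the same is done at $\ub$ if \eqref{eq:wa2} holds. Your approach instead mixes $\mathcal{P}$ with a strictly convex benchmark and lets the strict convexity break the boundary equality automatically. The advantages of your version are that both barriers are handled simultaneously with no case analysis, and the argument would also repair any degenerate equalities among the original strikes (had those been present). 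The paper's version has the virtue of being entirely self-contained --- no external model is invoked --- and it changes only the offending price, which makes it transparent that the perturbed operator still satisfies the hypotheses of Lemma~\ref{lem:wkarb}. Two minor points worth making explicit in your write-up: first, to apply Lemma~\ref{lem:wkarb} to $\mathcal{P}_\lambda$ you need $\mathcal{P}_\lambda$ to admit no WA on $\mathcal{X}$, which follows because a convex combination of two call-price functions each extendible as in \eqref{eq:Cbasic}--\eqref{eq:Ccnv} is itself so extendible; second, the reference to Proposition~\ref{prop:wflvr} is slightly roundabout, since the Black--Scholes model is itself already a $(\mathcal{P}^*,\mathcal{X}_D)$-market model by construction.
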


\begin{proof}
Given the prices of the call and digital options, we can perturb
these prices by some small $\delta >0$ to get prices which satisfy
the no-weak-arbitrage conditions of Lemma~\ref{lem:wkarb}: more
precisely, if say \eqref{eq:wa1} holds, by taking $\tilde{C}(\lb) =
C(\lb) + \delta$, and perhaps (if needed to preserve
\eqref{eq:Dbounds}) also $\tilde{\iD}(\lb)= \iD(\lb) +
\frac{\delta}{K_{i+1}-\lb}$, and if necessary, performing a similar
operation at $\ub$. The corresponding pricing operator then
satisfies the stated conditions.
\end{proof}

\begin{proof}[Proof of Theorem~\ref{thm:ubfinstrike}]
Note that we can't apply directly Lemma \ref{lem:mf_bounds} to deduce \eqref{eq:ub2B} as we do not have prices of digital calls. Instead, we will devise superhedges of $\uh^I,\uh^{II}(K)$ and $\uh^{III}(K)$ which only use traded options. Recall that $i,j$ are such that $K_i<\lb<K_{i+1}$ and $K_j<\ub<K_{j+1}$. Consider the following payoffs, presented in Figure \ref{fig:binsh12},
\begin{equation}\label{eq:newsuperhedges}
\begin{split}
X_1 = & \frac{(S_T-K_{i-1})^+ - (S_T-K_i)^+}{K_i-K_{i-1}},\\
X_2  = & \frac{1}{\lb-K_i}\left((S_T-K_i)^+ - \frac{K_{i+2}-\lb}{K_{i+2}-K_{i+1}}(S_T-K_{i+1})^++\frac{K_{i+1}-\lb}{K_{i+2}-K_{i+1}}(S_T-K_{i+2})^+\right),
\end{split}
\end{equation}
which both superhedge the digital call $\indic{S_T>\lb}$. Recall that $C(\cdot), D(\cdot)$ at $\lb,\ub$ are now \emph{defined} via \eqref{eq:Clbdefn}--\eqref{eq:dubdefn}. Direct computation shows that if in \eqref{eq:Clbdefn} we have
\[
C(\lb) = C(K_i) + \frac{C(K_i)-C(K_{i-1})}{K_i-K_{i-1}} (\lb-K_i),
\]
and \eqref{eq:dlbdefn} holds then
\[
\Pc(X_1)=\frac{C(K_{i-1}) - C(K_i)}{K_i-K_{i-1}}=\iD(\lb).
\]
Likewise, if $C(\lb)$ is equal to the other term in \eqref{eq:Cubdefn}, using \eqref{eq:dlbdefn} we deduce that $\Pc(X_2)=\iD(\lb)$. Using similar ideas we construct superhedges $\uh^{II}_1(K)$, $i=1,2$ of $\uh^{II}(K)$ which only use traded options. They are graphically presented in Figure \ref{fig:dntsh12}. Further, just as above, it follows easily that
\begin{equation*}
\min\left\{\Pc\uh_1^{II}(K),\Pc\uh_{2}^{II}(K)\right\}=\iD(\lb)+\frac{C(K)-P(\lb)}{K-\lb},\quad K\in\K\cap (\lb,\infty).
\end{equation*}
In a similar manner we construct superhedges $\uh^{III}_i(K)$, $i=1,2$, of $\uh^{III}(K)$ which only use traded options and such that 
\begin{equation*}
\min\left\{\Pc\uh_1^{III}(K),\Pc\uh_{2}^{III}(K)\right\}=(1-\sD(\ub))+\frac{P(K)-C(\lb)}{K-\lb},\quad K\in\K\cap (\lb,\infty).
\end{equation*}
Finally we consider $\uh^{I}$, i.e.\ we need to construct a superhedge for $\indic{\lb < S_T < \ub}$. This can be done using a combination of the techniques used above for superhedging the digital calls. We end up with $\uh^I_i$, $i=1,2,3,4$ and again
\[
\min\left\{\Pc\uh^I_i:i=1,2,3,4\right\}=\iD(\lb)-\sD(\ub).
\]
Under no model-free arbitrage on $\mathcal{X}_D\cup\{\dnt\}$ the price of the double no-touch has to be less than or equal to the prices of all the superhedges given above, which is less than or equal to the value in \eqref{eq:ub2B} (the former takes its minimum over a smaller range of strikes).
\psfrag{K-1}{$K_{i-1}$}
\psfrag{K}{$K_{i}$}
\psfrag{K1}{$K_{i+1}$}
\psfrag{K2}{$K_{i+2}$}
\psfrag{K3}{$K$}
\psfrag{lb}{$\lb$}
\begin{figure}[t]
 \begin{minipage}[t]{7cm}
 \hspace*{-1cm}  \includegraphics[width=7cm]{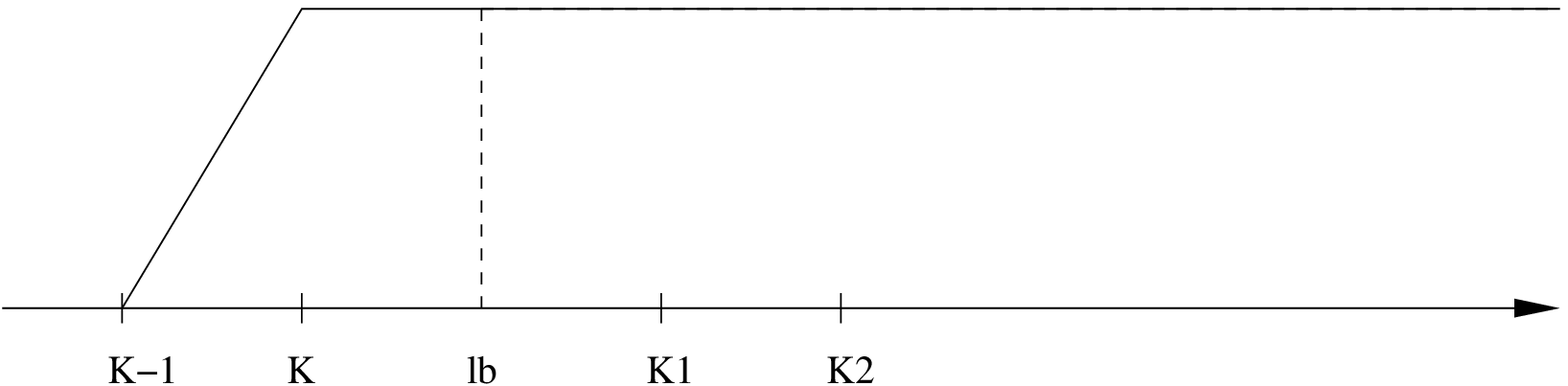}
   \begin{center}
    \hspace*{-1cm} (a) Type 1 digital call superhedge $X_1$.
   \end{center}
 \end{minipage}
 \hfill
 \begin{minipage}[t]{7cm}
\hspace*{-1cm}    \includegraphics[width=7cm]{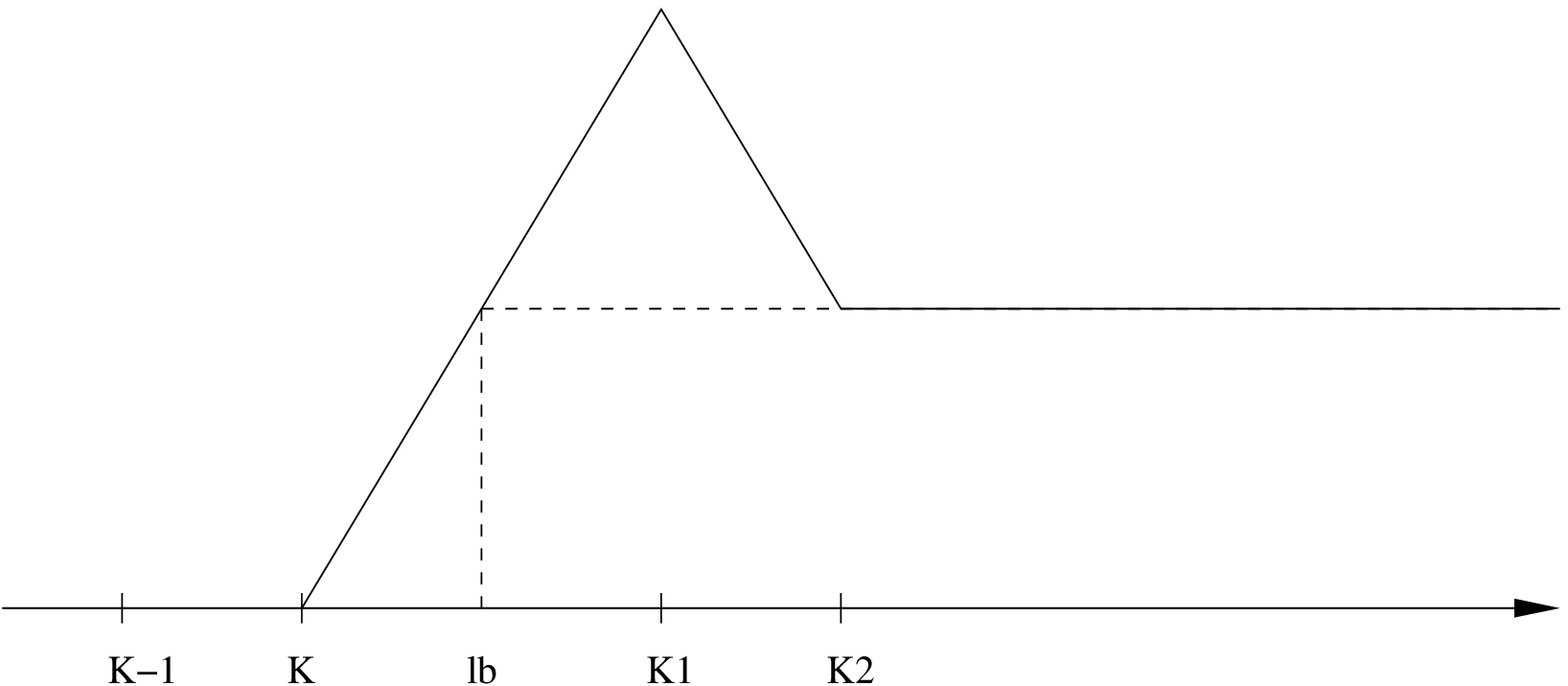}
   \begin{center}
    \hspace*{-1cm} (b) Type 2 digital call superhedge $X_2$.
   \end{center}
 \end{minipage}
 \caption{\label{fig:binsh12} The two possible superhedges $X_1,X_2$ of a digital call when
   calls and digital calls are not traded at the strike.}
\end{figure}



\begin{figure}[t]
 \begin{minipage}[t]{7cm}
  \hspace*{-1cm} \includegraphics[width=7cm]{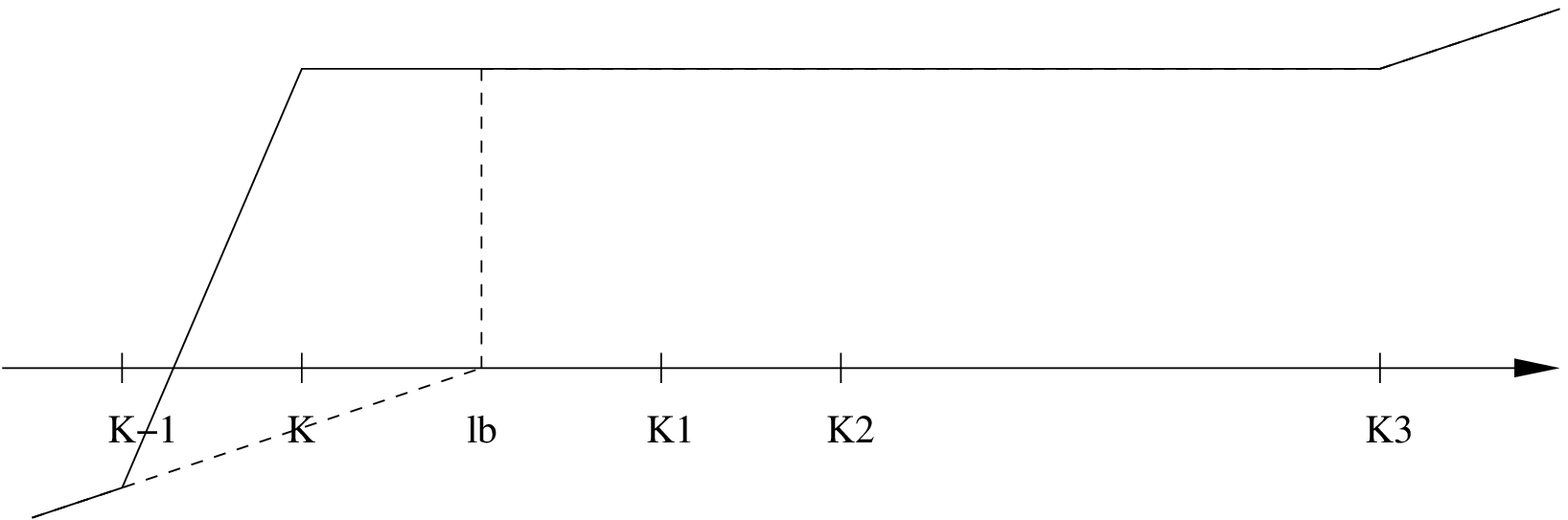}
   \begin{center}
  \hspace*{-1cm}   (a) Type 1 double no-touch superhedge.
   \end{center}
 \end{minipage}
 \hfill
 \begin{minipage}[t]{7cm}
  \hspace*{-1cm} \includegraphics[width=7cm]{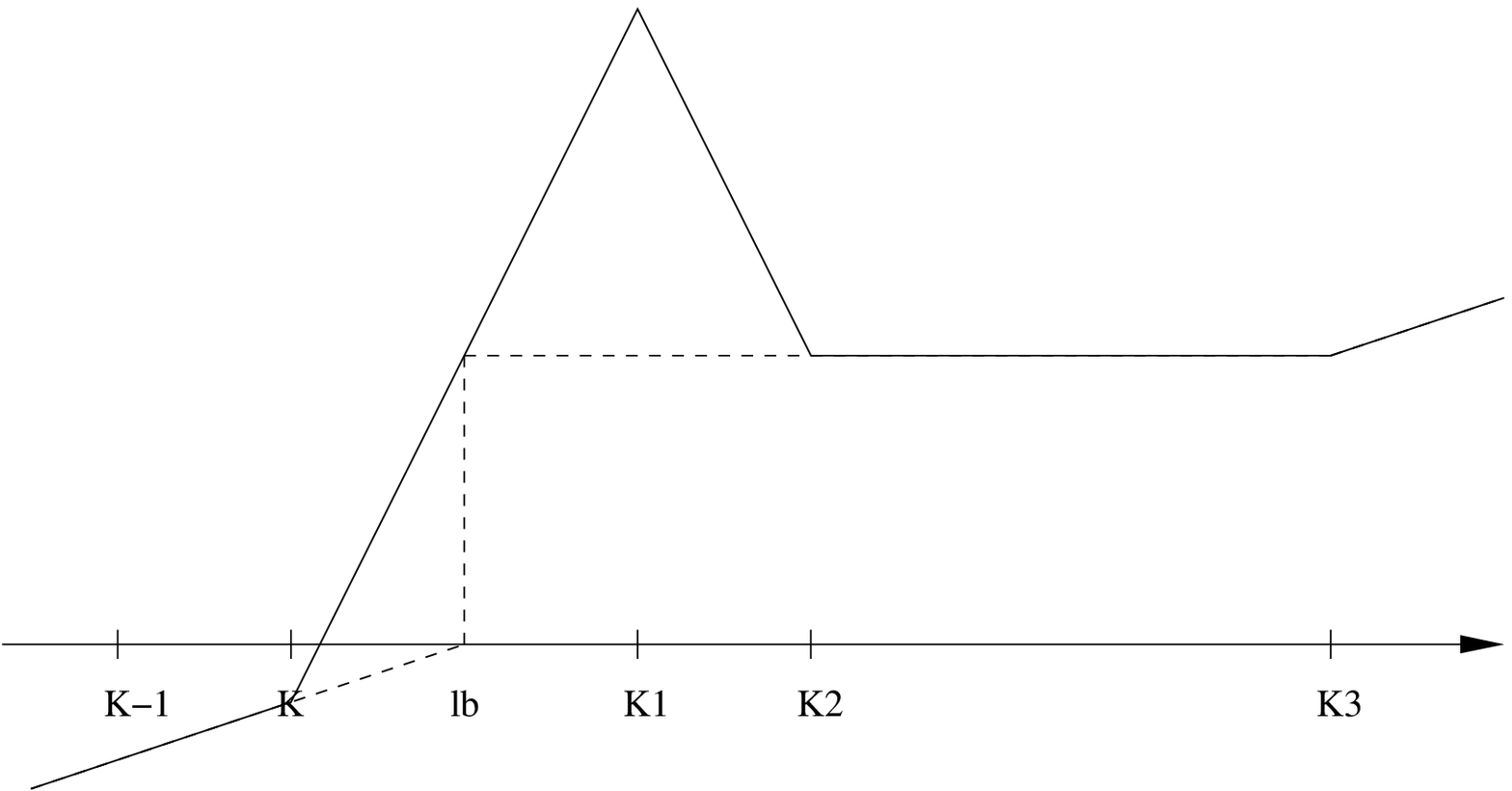}
   \begin{center}
   \hspace*{-1cm}  (b) Type 2 double no-touch superhedge.
   \end{center}
 \end{minipage}
 \caption{\label{fig:dntsh12} Two possible superhedges $\uh^{II}_1(K), \uh^{II}_2(K)$ of the
   double no-touch when calls and digitals at the barrier are not
   traded.}
\end{figure}
It remains to see that \eqref{eq:ub2B} is tight. Note that it follows from Theorem \ref{thm:ubfinstrike1} that \eqref{eq:ub2} remains unchanged if we extend the range of strikes considered therein to $K\in\K\cap(S_0,\infty)$ and $K\in \K\cap[0,S_0)$ in the second and third terms respectively, which we assume from now on.\\
Let us write $\tilde{\mathcal{X}}=\mathcal{X}(\K\cup\{\lb,\ub\})$ for the set defined via \eqref{eq:Xsetofassets} and \eqref{eq:stset_assume} but with $\K\cup\{\lb,\ub\}$ instead of $\K$. Note that by Lemma \ref{lem:wkarb},  $\Pc$ extended to $\tilde{\mathcal{X}}_D$ via \eqref{eq:Clbdefn}--\eqref{eq:dubdefn} admits no model-free arbitrage on $\tilde{\mathcal{X}}_D$ and we also know exactly when it admits a WA. Suppose that the minimum in \eqref{eq:ub2} does not occur in the second term for $K=\ub$ nor at the third term for $K=\lb$. Then \eqref{eq:ub2B} is equal to \eqref{eq:ub2}. If $\Pc$ admits no WA on $\tilde{\mathcal{X}}_D$ then the statements of the theorem follow directly from Theorem \ref{thm:ubfinstrike1}. If $\Pc$ admits a WA then, by Corollary \ref{cor:wkarb}, we can $\epsilon$-perturb $\Pc$ to $\tilde{\Pc}$ and have no WA on $\tilde{\mathcal{X}}_D$ and further, taking $\epsilon$ small enough, the minimum in \eqref{eq:ub2} for prices $\Pc$ and $\tilde{\Pc}$ occur at the same term and can be arbitrary close. The existence of a suitable sequence of $(\Pc,\mathcal{X})$ market models which approximate \eqref{eq:ub2B} follows as it suffices to take the $(\tilde{\Pc}, \tilde{\mathcal{X}}_D)$-market models which achieve the upper bound \eqref{eq:ub2}.

It remains to consider the case when the (strict) minimum in \eqref{eq:ub2}, for prices $\Pc$, occurs either at the second term for $K=\ub$ or at the third term for $K=\lb$. We consider the former, the latter is analogous. We now have a difference between \eqref{eq:ub2B} and \eqref{eq:ub2}, namely  
\[
\iD(\lb) + \frac{C(\ub)-P(\lb)}{\ub-\lb}
\]
is strictly smaller than the upper bound in \eqref{eq:ub2B}. 
Consider a new value of the call at $\ub$, $C^*(\ub)$, and the corresponding digital call price
\begin{equation}\label{eq:Dstardefn}
\sD^*(\ub) = \frac{C^*(\ub) - C(K_{j+1})}{K_{j+1}-\ub}.
\end{equation}

Then as we increase $C^*(\ub)$ from the value of $C(\ub)$ given by
\eqref{eq:Cubdefn} (which is the smallest value consistent with no
arbitrage), we note that $\sD^*(\ub)$ increases, and in
\eqref{eq:ub2} (applied to the new set of prices), we see that the
first and third terms decrease (continuously), while the second term
increases (continuously). We will show that we can choose $C^*(\ub)$ in such a way that \eqref{eq:ub2} and \eqref{eq:ub2B} coincide. Namely, we take  
\begin{equation} \label{eq:Cstardefn} C^*(\ub) =
 \min\left\{C(K_{j+1}) - \frac{C(K_{j+1}) - P(\lb)}{K_{j+1}-\lb}
   (K_{j+1}-\ub), C(K_j) + (\ub-K_j) \frac{C(K_j) - P(\lb)}{K_j -
     \lb} \right\}
\end{equation}
and let $\Pc^*$ denote the pricing operator extended from $\mathcal{X}$ to $\tilde{\mathcal{X}}_D$ via \eqref{eq:Cstardefn}, \eqref{eq:Dstardefn} and \eqref{eq:Clbdefn}--\eqref{eq:dlbdefn}). We will show that
\begin{enumerate}
\item $\Pc^*$ admits no model-free arbitrage on $\tilde{\mathcal{X}}_D$,
\item the infimum in the second term in \eqref{eq:ub2}, for prices $\Pc^*$, is still attained at $\ub$, and
 this agrees now with the minimum for \eqref{eq:ub2B}:
 \begin{equation} \label{eqn:infeqCstar}
 \inf_{K \in \K \cap (S_0,\infty)} \frac{C(K)-P(\lb)}{K-\lb} =
 \frac{C^*(\ub) - P(\lb)}{\ub-\lb};
 \end{equation}
\item the second term in \eqref{eq:ub2} is still the smallest:
 \begin{eqnarray}
   \frac{C^*(\ub) - P(\lb)}{\ub-\lb} & \le &  - \sD^*(\ub) \leq -\sD(\ub)\label{eq:Cstar1}\\
   \iD(\lb) & \le & 1 + \inf_{K \in \K \cap [\lb,\ub) \cup \{\lb\}}
   \frac{P(K)-C^*(\ub)}{\ub-K}. \label{eq:Cstar2}
 \end{eqnarray}
\end{enumerate}
If all these conditions hold, and in (i) there is no weak arbitrage, then \eqref{eq:ub2} for $\Pc^*$ and \eqref{eq:ub2B} are equal and applying Theorem \ref{thm:ubfinstrike1}  we get a $(\Pc,\mathcal{X})$-market model which achieves the upper bound in \eqref{eq:ub2B}. If there is a weak arbitrage in (i) we first use Corollary \ref{cor:wkarb} and obtain a sequence of models which approximate \eqref{eq:ub2B}. The proof is then complete.
So we turn to proving the above statements. For (i), we simply need to show that $C^*(\ub)$ satisfies \eqref{eq:Cup}--\eqref{eq:Clow2}, i.e.
\[
C(\ub) \leq C^*(\ub) \le \frac{K_{j+1}-\ub}{K_{j+1}-K_j} C(K_{j}) +
\frac{\ub - K_j}{K_{j+1}-K_j} C(K_{j+1}).
\]
The first inequality follows from the bounds:
\begin{equation} \label{eq:CPrange}
 \frac{C(\ub) - P(\lb)}{\ub-\lb} \in \left[
   \frac{C(\ub)-C(K_j)}{\ub-K_j}, \frac{C(K_{j+1}) -
     C(\ub)}{K_{j+1}-\ub} \right]
\end{equation}
which in turn follow from the optimality of $\ub$ in \eqref{eq:ub2}, and the
convexity of the prices $C(\cdot)$. The upper bound can be checked by taking a suitable weighted average of the terms in \eqref{eq:Cstardefn}, where the weights are chosen so that the $P(\lb)$ term drops out. 
To show (ii), we note that, by definition of $C^*(\ub)$, we
have
\[
\frac{C^*(\ub) - P(\lb)}{\ub-\lb} = \min \left\{
 \frac{C(K_j)-P(\lb)}{K_j-\lb}, \frac{C(K_{j+1}) -
   P(\lb)}{K_{j+1}-\lb} \right\}
\]
and \eqref{eqn:infeqCstar} follows from convexity of $C$ and the fact that minimum in the second term of \eqref{eq:ub2} for prices $\Pc$, was attained at $\ub$. 
For (iii), \eqref{eq:Cstar1} follows from the definition of $\sD^*(\ub)$ and \eqref{eqn:infeqCstar}, whose value is negative. To deduce
\eqref{eq:Cstar2}, we note the following: suppose \eqref{eq:Cstar2}
fails, so that (rearranging)
\begin{equation} \label{eq:gradcond} 1 - \iD(\lb) < \sup_{K \in \K \cap
   [\lb,\ub) \cup \{\lb\}} \frac{C^*(\ub)-P(K)}{\ub-K}.
\end{equation}
In particular, taking $K=\lb$ on the RHS we deduce
\[
\frac{C^*(\ub) - P(\lb)}{\ub-\lb} > 1-\iD(\lb).
\]
which combined with \eqref{eq:Cstar1} yields $1 - \iD(\lb) < -\sD^*(\ub)$ which is clearly a contradiction as the first term is non-negative and the second non-positive.
\end{proof}

\section{On the joint law of the maximum and minimum of a UI martingale}
\label{ap:mart}
Let $(M_t:t\leq \infty)$ be a uniformly integrable continuous martingale starting at a
deterministic value $M_0\in\R$ and write $\mu$ for its terminal law,
$\mu\sim M_\infty$, where we assume $M$ is non trivial, i.e.\ $\mu\neq
\delta_{M_0}$. We denote by $-\infty\leq a_\mu<b_\mu\leq \infty$ the
bounds of the support of $\mu$, i.e.\ $[a_\mu,b_\mu]$ is the smallest
interval with $\mu([a_\mu,b_\mu])=1$. In this final appendix, we include some remarks on the joint law of the maximum
and minimum of uniformly integrable martingales, namely we study the function 
$p(\lb,\ub):=\Pr\big(\iM_\infty>\lb\textrm{ and }\sM_\infty <\ub \big)$ for $\lb\leq M_0\leq \ub$. Naturally, this is closely related what we have done so far: $(M_t)$ can be interpreted (at least for $a_\mu\geq 0$) as a market model $M_{\frac{u}{T-u}}=S_{u}$, $u\leq T$, with maturity $T$ call prices given via \eqref{eq:mudef}, and $p(\lb,\ub)$ is then the price of a double no-touch option.
\begin{prop}\label{prop:M_bounds}
We have the following properties:
\begin{enumerate}
\item $p(M_0,\ub)=p(\lb,M_0)=0$,
\item $p(\lb,\ub)=1$ on $[-\infty,a_\mu)\times(b_\mu,\infty]$,
\item $p$ is non-increasing in $\lb\in (a_\mu,M_0)$ and
 non-decreasing in $\ub\in (M_0,b_\mu)$,
\item for $a_\mu\leq \lb<M_0<\ub\leq\infty$ we have
 \begin{equation}\label{eq:uimart_bound}
   \Pr\big(\underline{B}_{\tau_J}>\lb\textrm{ and
   }\overline{B}_{\tau_J} <\ub \big)\leq p(\lb,\ub)\leq
   \Pr\big(\underline{B}_{\tau_P}>\lb\textrm{ and
   }\overline{B}_{\tau_P} <\ub \big),
 \end{equation}
 where $(B_t)$ is a standard Brownian motion with $B_0=M_0$,
 $\tau_P$ is the Perkins stopping time \eqref{eq:perkins} embedding
 $\mu$ and $\tau_J$ is the Jacka stopping time \eqref{eq:jacka},
 for barriers $(\lb,\ub)$, embedding $\mu$. 
\end{enumerate}
\end{prop}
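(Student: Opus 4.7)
My plan is to dispatch parts (i)--(iii) by elementary means and to obtain (iv) by reducing to the Skorokhod embedding framework of Lemmas \ref{lem:perkins} and \ref{lem:jacka} via the Dambis--Dubins--Schwarz time-change.

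Part (i) is immediate: almost surely $\underline M_\infty \le M_0 \le \overline M_\infty$, so both $\{\underline M_\infty > M_0\}$ and $\{\overline M_\infty < M_0\}$ are null. Part (iii) follows from set inclusion, e.g.\ $\{\underline M_\infty > \lb_2\} \subset \{\underline M_\infty > \lb_1\}$ whenever $\lb_1 \le \lb_2$, and analogously in $\ub$. For (ii), fix $\lb < a_\mu$ and let $H_\lb := \inf\{t : M_t = \lb\}$; by path continuity $M_{H_\lb} = \lb$ on $\{H_\lb < \infty\}$, so optional sampling for the UI martingale $(M_t)$ yields $\lb\,\Pr(H_\lb < \infty) = \E[M_{H_\lb}\indic{H_\lb < \infty}] = \E[M_\infty \indic{H_\lb < \infty}]$. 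Since $M_\infty \ge a_\mu > \lb$ almost surely, this forces $\Pr(H_\lb < \infty) = 0$, i.e.\ $\underline M_\infty > \lb$ a.s. The same argument at an upper barrier above $b_\mu$ completes (ii).

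For (iv), I would invoke Dambis--Dubins--Schwarz: on a possibly enlarged filtered probability space there is a Brownian motion $(B_s)$ with $B_0 = M_0$ such that $M_t = B_{\langle M\rangle_t}$. Set $\tau := \langle M\rangle_\infty$. Uniform integrability of $(M_t)$ together with $M_\infty \in L^1$ forces $\tau < \infty$ a.s.\ (otherwise on a positive-measure event $B$ would have to converge at infinity, which it does not). Hence $(B_{s\wedge\tau})$ is UI with $B_\tau \sim \mu$, so $\tau$ is a bona fide solution of $(SEP)$ for $\mu$. Continuity of $B$ together with monotonicity of the clock $\langle M\rangle_\cdot$ gives $\overline M_\infty = \overline B_\tau$ and $\underline M_\infty = \underline B_\tau$ almost surely, so
\[
p(\lb,\ub) = \Pr\bigl(\underline B_\tau > \lb,\ \overline B_\tau < \ub\bigr).
\]
The two inequalities in \eqref{eq:uimart_bound} are now direct applications of Lemmas \ref{lem:perkins} and \ref{lem:jacka} to this $\tau$.

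The only delicate step is the DDS reduction and the verification that $\tau$ is a UI embedding of $\mu$; once this is in place, (iv) follows immediately from the two preceding lemmas, and no further analytic work is needed.
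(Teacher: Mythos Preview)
Your proposal is correct and follows essentially the same route as the paper: parts (i)--(iii) are declared ``clear'' there, and (iv) is reduced to Lemmas~\ref{lem:perkins} and~\ref{lem:jacka}. You simply make explicit what the paper leaves implicit---the paper's entire argument for (iv) is the observation that $(B_{t\wedge\tau_J})$, $(B_{t\wedge\tau_P})$ and $(M_t)$ are all UI martingales with the same start and terminal law, whereas you spell out the DDS time-change that turns $(M_t)$ into a genuine $(SEP)$ solution so the lemmas apply verbatim; similarly, your optional-sampling argument for (ii) is a clean justification of something the paper treats as obvious.
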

The first three assertions of the proposition are clear. Assertion
$(iv)$ is a reformulation of Lemmas \ref{lem:perkins} and
\ref{lem:jacka}. It suffices to note that $(B_{t\wedge\tau_J})$,
$(B_{t\wedge\tau_P})$, $(M_t)$ are all UI martingales starting at
$M_0$ and with the same terminal law $\mu$ for $t=\infty$.

We think of $p(\cdot,\cdot)$ as a surface defined over the
quarter-plane $[-\infty,M_0]\times [M_0,\infty]$. Proposition
\ref{prop:M_bounds} describes boundary values of the surface,
monotonicity properties and gives an upper and a lower bound on the
surface. However we note that there is a substantial difference
between the bounds linked to the fact that $\tau_P$ does not depend on
$(\lb,\ub)$ while $\tau_J$ does. In consequence, the upper bound is
attainable: there is a martingale $(M_t)$, namely
$M_t=(B_{t\wedge\tau_P})$, for which $p$ is equal to the upper bound
for all $(\lb,\ub)$. In contrast a martingale $(M_t)$ for which $p$
would be equal to the lower bound does not exist. For the martingale
$M_t=(B_{t\wedge\tau_J})$, where $\tau_J$ is defined for some pair
$(\lb,\ub)$, $p$ will attain the lower bound in some neighbourhood of
$(\lb,\ub)$ which will be strictly contained in $(a_\mu,M_0)\times
(M_0,b_\mu)$. 

We end this appendix with a result which provides some further insight
into the structure of the bounds discussed above. In particular, we
can show some finer properties of the function $p(\lb,\ub)$ and its
upper and lower bounds.

\begin{theorem} \label{thm:boundstructure}
The function $p(\lb,\ub)$ is c\`agl\`ad in $\ub$ and c\`adl\`ag in
$\lb$. Moreover, if $p$ is discontinuous at $(\lb,\ub)$, then $\mu$
must have an atom at one of $\lb$ or $\ub$. Further:
\begin{enumerate}
\item if there is a discontinuity at $(\lb,\ub)$ of the form:
 \[
 \lim_{w_n \downarrow \ub} p(\lb,w_n) > p(\lb,\ub)
 \]
 then the function $g$ defined by
 \[
 g(u) = \lim_{w_n \downarrow \ub} p(u,w_n) - p(u,\ub), \qquad u \le
 \lb
 \]
 is non-increasing.
\item
 if there is a discontinuity at $(\lb,\ub)$ of the form:
 \[
 \lim_{u_n \uparrow \lb} p(u_n,\ub) > p(\lb,\ub)
 \]
 then the function $h$ defined by
 \[
 h(w) = \lim_{u_n \uparrow \lb} p(u_n,w) - p(\lb,w), \qquad w \ge
 \ub
 \]
 is non-decreasing.
\end{enumerate}
And, at any discontinuity, we will be in at least one of the
above cases.

Finally, we note that the lower bound (corresponding to the tilted-Jacka
construction) is continuous in $(a_\mu,M_0)\times (M_0,b_\mu)$, and
continuous at the boundary ($\ub = b_\mu$ and $\lb = a_\mu$) unless
there is an atom of $\mu$ at either $b_\mu$ or $a_\mu$, while the
upper bound (which corresponds to the Perkins construction) has
a discontinuity corresponding to every atom of $\mu$
\end{theorem}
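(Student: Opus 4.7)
The plan is to derive the one-sided regularity from monotone convergence, show that all discontinuities reduce to boundary-atom probabilities via an optional stopping argument for the UI martingale $M$, and then read off the structure of the bounds from the explicit formulas \eqref{eq:lower_bound} and \eqref{eq:upper_bound}. First I fix $\lb$ and let $w_n \uparrow \ub$: the events $\{\iM_\infty>\lb,\sM_\infty<w_n\}$ increase to $\{\iM_\infty>\lb,\sM_\infty<\ub\}$, so $p(\lb,w_n)\to p(\lb,\ub)$ (left-continuity). For $w_n\downarrow\ub$ the same events decrease to $\{\iM_\infty>\lb,\sM_\infty\le\ub\}$, yielding càglàd in $\ub$ with right-jump at $\ub$ equal to $\Pr(\iM_\infty>\lb,\sM_\infty=\ub)$. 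The symmetric argument in $\lb$ gives càdlàg with left-jump $\Pr(\iM_\infty=\lb,\sM_\infty<\ub)$.

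The central claim is then $\{\sM_\infty=\ub\}\subseteq\{M_\infty=\ub\}$ up to null sets. Writing $H_\ub=\inf\{t:M_t=\ub\}$, on $\{H_\ub<\infty\}$ the process $(M_{t+H_\ub})_{t\ge 0}$ is a UI martingale starting from $\ub$ by optional stopping; on the subevent $\{\sM_\infty=\ub\}$ it is bounded above by $\ub$, so $\ub-M_{t+H_\ub}$ is a non-negative martingale of mean zero, hence identically zero, giving $M_\infty=\ub$. On $\{H_\ub=\infty\}\cap\{\sM_\infty=\ub\}$, path-continuity forces the supremum to be approached along some $t_k\uparrow\infty$, and UI convergence $M_t\to M_\infty$ yields $M_\infty=\ub$. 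This bounds the right-jump at $\ub$ by $\mu(\{\ub\})$; the symmetric bound $\Pr(\iM_\infty=\lb)\le\mu(\{\lb\})$ handles the left-jump in $\lb$. Monotonicity of $g(u)=\Pr(\iM_\infty>u,\sM_\infty=\ub)$ in $u\le\lb$ and of $h(w)=\Pr(\iM_\infty=\lb,\sM_\infty<w)$ in $w\ge\ub$ is then immediate from the monotone set-dependence on the free variable.

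For the \emph{at least one of the cases} statement, given a discontinuity at $(\lb,\ub)$ pick a sequence $(\lb_n,\ub_n)\to(\lb,\ub)$ with $p(\lb_n,\ub_n)\not\to p(\lb,\ub)$ and partition by the side of approach in each coordinate. Three of the four subsequential limits are controlled by left-continuity in $\ub$ or right-continuity in $\lb$ combined with the identifications above, so each limit either equals $p(\lb,\ub)$ or forces one of the two jumps to be positive. The remaining diagonal case ($\lb_n\uparrow\lb$, $\ub_n\downarrow\ub$) converges to $\Pr(\iM_\infty\ge\lb,\sM_\infty\le\ub)$, which differs from $p(\lb,\ub)$ by $g(\lb)+h(\ub)+\Pr(\iM_\infty=\lb,\sM_\infty=\ub)$, and the corner term vanishes unless both $\lb$ and $\ub$ are atoms (in which case both jumps are positive).

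For the bounds, \eqref{eq:lower_bound} gives the lower bound as $\mu\bigl((\theta_\mu^{-1}(\lb),\Psi_\mu^{-1}(\ub))\bigr)\lor 0$; taking $\theta_\mu^{-1}$ left-continuous and $\Psi_\mu^{-1}$ right-continuous as in Remark~\ref{rem:Kchoice}, the jumps of these inverses straddle exactly intervals of zero $\mu$-mass, so the measure of the resulting open interval is continuous in $(\lb,\ub)$ throughout the interior $(a_\mu,M_0)\times(M_0,b_\mu)$, and at the boundary continuity persists precisely when there is no atom at $a_\mu$ or $b_\mu$. The Perkins upper bound from \eqref{eq:upper_bound} involves $\iD(\cdot)=-C'_+(\cdot)$, $\sD(\cdot)=-C'_-(\cdot)$ and the barycentre maps $\gp,\gm$ defined in \eqref{eq:gammaplus}--\eqref{eq:gammaminus}; each jumps at every atom of $\mu$, and by tracing through the three terms in the minimum one checks that the jump is inherited by whichever term is active. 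The principal obstacles are the $H_\ub=\infty$ branch of the martingale argument (requiring combined use of UI convergence and path continuity), and the final verification that atoms of $\mu$ manifest as genuine discontinuities of the Perkins bound rather than being absorbed by a competing smooth term in \eqref{eq:upper_bound}.
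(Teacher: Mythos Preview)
Your argument follows the paper's approach closely for the one-sided regularity, the identification of jumps as $\Pr(\iM_\infty>u,\sM_\infty=\ub)$ and $\Pr(\iM_\infty=\lb,\sM_\infty<w)$, and the inclusion $\{\sM_\infty=\ub\}\subseteq\{M_\infty=\ub\}$ (which the paper isolates as a separate proposition; your inline treatment, including the $H_{\ub}=\infty$ branch, is fine). There is one slip in the ``at least one case'' step: you say the corner term $\Pr(\iM_\infty=\lb,\sM_\infty=\ub)$ vanishes ``unless both $\lb$ and $\ub$ are atoms (in which case both jumps are positive)'', but this last implication is not justified---positivity of the corner event does not by itself force $g(\lb)>0$ or $h(\ub)>0$, since those events involve strict inequalities on the other coordinate. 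The fix is immediate from what you have already shown: your inclusions give $M_\infty=\lb$ and $M_\infty=\ub$ simultaneously on the corner event, which is impossible since $\lb<\ub$, so the corner term is \emph{always} zero. This is exactly how the paper disposes of it.

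For the continuity of the bounds you take a different route from the paper: you read continuity of the tilted-Jacka lower bound off the formula $\mu\bigl((\theta_\mu^{-1}(\lb),\Psi_\mu^{-1}(\ub))\bigr)\vee 0$, and argue Perkins discontinuities by tracking atoms through the three terms of \eqref{eq:upper_bound}. The paper instead argues probabilistically: for the lower bound it observes directly that $\tau_J$ never stops at its running extremum except possibly at $a_\mu,b_\mu$, so $\Pr(\sM_\infty=\ub)=0$ in the interior; for the upper bound it uses the explicit structure of $\tau_P$ (the level sets of $\gamma_\mu^+$) to locate the interval in $\ub$ over which the jump of size $\mu(\{\lb\})$ accumulates. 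Your formula-based approach is valid in principle but leaves the Perkins case vague---precisely the point you flag as a ``principal obstacle''---whereas the paper's stopping-time argument handles it cleanly without needing to decide which term of the minimum is active.
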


Before we prove the above result, we note the following useful result,
which is a simple consequence of the martingale property:

\begin{prop} \label{prop:supatom}
Suppose that $(M_t)_{t \ge 0}$ is a UI martingale with $M_\infty
\sim \mu$. Then $\Pr(\sM_{\infty} = \ub) >0$ implies $\mu(\{\ub\})
\ge \Pr(\sM_{\infty} = \ub)$ and
\[
\{ \sM_{\infty} = \ub \} = \{ M_t = \ub, \ \forall t \ge H_{\ub}\}
\subseteq \{ M_\infty = \ub\} \quad \as{}.
\]
\end{prop}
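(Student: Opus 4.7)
The inequality $\mu(\{\ub\}) \ge \Pr(\sM_\infty = \ub)$ follows at once from the inclusion $\{\sM_\infty = \ub\} \subseteq \{M_\infty = \ub\}$ (since $\mu(\{\ub\}) = \Pr(M_\infty = \ub)$), and the stated inclusion $\{M_t = \ub,\ \forall t \ge H_\ub\} \subseteq \{M_\infty = \ub\}$ is trivial: on $\{H_\ub < \infty\}$ take $t = \infty$, while on $\{H_\ub = \infty\}$ the condition is to be read as $M_\infty = \ub$. So the plan reduces to establishing the set equality $\{\sM_\infty = \ub\} = \{M_t = \ub,\ \forall t \ge H_\ub\}$ modulo a null set. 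Writing $\tau = H_\ub$, the direction $\supseteq$ is immediate from path continuity: by definition of $\tau$ we have $M_s < \ub$ for $s < \tau$, so if $M_t = \ub$ for all $t \ge \tau$ then $\sM_\infty = \ub$.

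For the opposite direction, I would first dispose of $\{\tau = \infty,\ \sM_\infty = \ub\}$, on which continuity together with $M_s < \ub$ for every finite $s$ forces $M_t \to \ub$, so $M_\infty = \ub$ and the right-hand condition holds under the $\tau = \infty$ reading. The main case is $\{\tau < \infty\}$: continuity gives $M_\tau = \ub$, and by the strong Markov / optional sampling property for the UI continuous martingale $(M_t)$, the shifted process $Z_t := \ub - M_{\tau + t}$ is a continuous UI martingale starting at $0$ in the filtration $(\mathcal{F}_{\tau + t})$. The event $\{\sM_\infty = \ub,\ \tau < \infty\}$ translates to $\{\tau < \infty\} \cap \{\inf_{t \ge 0} Z_t \ge 0\}$, so everything reduces to showing that $Z \equiv 0$ a.s.\ on $\{\inf_t Z_t \ge 0\}$.

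The crux is therefore the following general fact, which I would prove separately: for a continuous local martingale $Z$ with $Z_0 = 0$, the event $\{\inf_{t \ge 0} Z_t \ge 0\}$ coincides a.s.\ with $\{Z \equiv 0\}$. My plan is to apply Dambis--Dubins--Schwarz and write $Z_t = \beta_{\langle Z\rangle_t}$ for a Brownian motion $\beta$ (on a possibly enlarged space) with $\beta_0 = 0$. Since $\langle Z\rangle$ is continuous and non-decreasing, on $\{\inf_t Z_t \ge 0\}$ one has $\beta_s \ge 0$ for every $s \in [0, \langle Z\rangle_\infty]$. The hard part will be extracting the conclusion, because $\langle Z\rangle_\infty$ need not be independent of $\beta$ near the origin; I will handle this via the countable decomposition $\{\langle Z\rangle_\infty > 0\} = \bigcup_n \{\langle Z\rangle_\infty \ge 1/n\}$, noting that for each $n$ the event $\{\inf_{s \le 1/n} \beta_s \ge 0\}$ is $\mathbb{P}$-null by Brownian oscillation at zero and lies in $\mathcal{F}^{\beta}_{1/n}$, so intersecting with $\{\langle Z\rangle_\infty \ge 1/n\}$ preserves nullness. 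Summing over $n$ gives $\Pr(\inf_t Z_t \ge 0,\ \langle Z\rangle_\infty > 0) = 0$, forcing $\{\inf_t Z_t \ge 0\} \subseteq \{\langle Z\rangle_\infty = 0\} = \{Z \equiv 0\}$ a.s.; applied to $Z_t = \ub - M_{\tau + t}$ this closes the proof.
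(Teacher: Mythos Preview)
Your argument is correct. The paper does not actually give a proof of this proposition---it simply states that the result is ``a simple consequence of the martingale property.'' Your route via Dambis--Dubins--Schwarz is valid, and your handling of the potential dependence between $\langle Z\rangle_\infty$ and $\beta$ is clean: the pathwise inclusion $\{\inf_t Z_t \ge 0,\ \langle Z\rangle_\infty \ge 1/n\}\subseteq\{\inf_{s\le 1/n}\beta_s\ge 0\}$ uses only continuity of $\langle Z\rangle$, and the right-hand side is null regardless of any dependence. (A small quibble: ``strong Markov'' is not the right phrase for a general martingale; optional sampling, which you also invoke, is all that is needed to see that $(\ub-M_{\tau+t})$ is a martingale.)

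That said, there is a more elementary argument, likely what the authors have in mind, that avoids time-change altogether. Optional stopping for the UI martingale at $H_c$ gives the identity $\E[M_\infty;\,\sM_\infty\ge c]=c\,\Pr(\sM_\infty\ge c)$ for every $c\ge M_0$. Taking $c=\ub$ and then letting $c\downarrow\ub$ and subtracting yields $\E[\ub-M_\infty;\,\sM_\infty=\ub]=0$; since $M_\infty\le\ub$ on this event, one concludes $M_\infty=\ub$ a.s.\ there. Applying the same reasoning to the stopped UI martingale $(M_{t\wedge(H_{\ub}+s)})$ for each fixed $s>0$ gives $M_{H_{\ub}+s}=\ub$ a.s.\ on $\{\sM_\infty=\ub\}$, and continuity extends this to all $s$. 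Your DDS approach has the virtue of isolating the general principle (a continuous local martingale started at zero cannot stay nonnegative unless it vanishes identically); the optional-stopping argument is shorter and stays closer to first principles.
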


\begin{proof}[Proof of Theorem~\ref{thm:boundstructure}]
We begin by noting that by definition of $p(\lb,\ub)$, we
necessarily have the claimed continuity/limiting
properties. Further,
\[
\liminf_{(u_n,w_n) \to (u,w)} p(u_n,w_n) \ge \Pr(\iM_\infty > \lb \mbox{
 and } \sM_\infty < \ub)
\]
and
\[
\limsup_{(u_n,w_n) \to (u,w)} p(u_n,w_n) \le \Pr(\iM_\infty \ge \lb \mbox{
 and } \sM_\infty \le \ub).
\]
It follows that the function $p$ is continuous at $(\lb,\ub)$ if
$\Pr(\iM_\infty = \lb) = \Pr(\sM_\infty = \ub) = 0$. By
Proposition~\ref{prop:supatom}, this is true when $\mu(\{\ub,\lb\})
= 0$.

Note that we can now see that at a discontinuity of $p$, we must be
in at least one of the cases (i) or (ii). This is because
discontinuity at $(\lb,\ub)$ is equivalent to
\[
\Pr(\iM_\infty \ge \lb \mbox{ and } \sM_\infty \le \ub) >
\Pr(\iM_\infty > \lb \mbox{ and } \sM_\infty < \ub),
\]
from which we can deduce that at least one of the events
\[
\{\iM_\infty > \lb \mbox{ and } \sM_\infty = \ub\}, \quad
\{\iM_\infty = \lb \mbox{ and } \sM_\infty < \ub\}, \quad
\{\iM_\infty = \lb \mbox{ and } \sM_\infty = \ub\}
\]
is assigned positive mass. However, by
Proposition~\ref{prop:supatom} the final event implies both
$M_\infty = \lb$ and $M_\infty = \ub$ which is
impossible. Consequently, at least one of the first two events must
be assigned positive mass, and these are precisely the cases (i) and
(ii).

Consider now case (i). We can rewrite the statement as: if $g(\lb) >
0$, then $g(u)$ is decreasing for $u<\lb$. Note however that
\begin{eqnarray*}
g(u) & = & \Pr(\iM_\infty > u \mbox{ and } \sM_\infty \le \ub) -
\Pr(\iM_\infty > u \mbox{ and } \sM_\infty < \ub) \\
& = & \Pr(\iM_\infty > u \mbox{ and } \sM_\infty = \ub)
\end{eqnarray*}
which is clearly non-increasing in $u$. {In fact, provided that
 $g(\lb)<\Pr(\sM_\infty=\ub)$, it follows from \eg{}
 \cite[Theorem~4.1]{Rogers:93} that $g$ is strictly decreasing for
 $\lb>u>\sup\{u\geq -\infty: g(u)=\Pr(\sM_\infty=\ub)\}$.  A similar
 proof holds in case (ii).}

We now consider the lower bounds corresponding to the tilted-Jacka
construction. We wish to show that
\[
\Pr(\iM_\infty \ge \lb \mbox{ and } \sM_\infty \le \ub) =
\Pr(\iM_\infty > \lb \mbox{ and } \sM_\infty < \ub),
\]
for any $(\ub,\lb)$ except those excluded in the statement of the
theorem. We note that it is sufficient to show that $\Pr(\iM_\infty
= \lb) = \Pr(\sM_\infty = \ub) = 0$, and by
Proposition~\ref{prop:supatom} it is only possible to have an atom
in the law of the maximum or the minimum if the process stops at the
maximum with positive probability; we note however that the stopping
time $\tau_J$, due to the definition of $\Psi_\mu$ and $\Theta_\mu$
precludes such behaviour except at the points $a_\mu, b_\mu$.

Considering now the Perkins construction, we note from
\eqref{eq:perkins} and the fact that the function $\gamma_\mu^+$ is
decreasing, that we will stop at $\lb$ only if $\gamma_\mu^+(\sM_t)
= \lb$ and $M_t = \iM_t = \lb$. It follows from \eqref{eq:gammaplus}
that there is a range of values $(\ub_*,\ub^*)$ for which
$\gamma_\mu^+(b) = \lb$, and consequently, we must have $h(b) =
\Pr(\iM_\infty = \lb, \sM_\infty < b)$ increasing in $b$ as $b$ goes
from $\ub_*$ to $\ub^*$, with $h(\ub_*) = \Pr(\iM_\infty = \lb,
\sM_\infty < \ub_*)=0$ and $h(\ub^*) = \Pr(\iM_\infty = \lb,
\sM_\infty < \ub^*)=\mu(\{\lb\})$.\footnote{In fact, as above, it follows from
 \eg{} \cite[Theorem~2.2]{Rogers:93} that the maximum must have a
 strictly positive density with respect to Lebesgue measure, and
 therefore that the function $h$ is strictly increasing between the
 points $\ub_*$ and $\ub^*$.} Similar results for the function $g$
also follow.
\end{proof}

Rogers \cite{Rogers:93} and Vallois \cite{Vallois:93} described all
possible joint laws of $(M_\infty,\sM_\infty)$. This naturally also
gives possible joint laws of $(M_\infty,\iM_\infty)$. It would be
interesting and natural to try to describe possible joint laws of the
triple $(M_\infty,\sM_\infty,\iM_\infty)$.
\smallskip\\
\textbf{Acknowledgement.} We are grateful to John Crosby whose
comments enhanced our understanding of FX markets and practicalities
of double no-touch options and who also provided us with some
numerical data from his own work.

\bibliographystyle{alpha}
\bibliography{general}

\end{document}